\pdfoptionpdfminorversion = 6
\documentclass[12pt, draftclsnofoot, onecolumn]{IEEEtran}
\usepackage{amsmath,amsfonts}
\usepackage{algorithm}
\usepackage{algorithmic}
\usepackage{eqparbox}

\newenvironment{proof}{{\indent \indent \it Proof:\quad}}{\hfill $\blacksquare$\par}
\usepackage{array}
\usepackage{textcomp}
\usepackage{stfloats}
\usepackage{url}
\usepackage{cite}
\usepackage{setspace}
\usepackage{amssymb}
\usepackage{float}
\usepackage{subfigure}
\usepackage{verbatim}
\usepackage{graphicx}
\usepackage[colorlinks=true,
            linkcolor=black,
            anchorcolor=black,
            citecolor=black,
            urlcolor=blue
            ]{hyperref}

\newtheorem{definition}{\textbf{Definition}} 

\newtheorem{lemma}{\textbf{Lemma}}

\newtheorem{theorem}{\textbf{Theorem}}

\hyphenation{op-tical net-works semi-conduc-tor IEEE-Xplore}
\def\BibTeX{{\rm B\kern-.05em{\sc i\kern-.025em b}\kern-.08em
    T\kern-.1667em\lower.7ex\hbox{E}\kern-.125emX}}
\usepackage{balance}
\begin{document}
\title{Statistical QoS Provisioning Analysis and Performance Optimization in xURLLC-enabled Massive MU-MIMO Networks: A Stochastic Network Calculus Perspective}
\author{Yuang Chen, \emph{Student Member, IEEE}, Hancheng Lu, \emph{Senior Member, IEEE}, Langtian Qin, Chenwu Zhang, and Chang Wen Chen, \emph{Fellow, IEEE}
\thanks{\setlength{\baselineskip}{2\baselineskip}This work was supported by Hong Kong Research Grants Council (GRF-15213322) and National Science Foundation of China (No. U21A20452, No. U19B2044). Yuang Chen, Hancheng Lu, Langtian Qin, and Chenwu Zhang are with the CAS Key Laboratory of Wireless-Optical Communications, School of Information Science and Technology, University of Science and Technology of China, Hefei 230027, China (email: yuangchen21@mail.ustc.edu.cn; hclu@ustc.edu.cn; qlt315@mail.ustc.edu.cn; cwzhang@mail.ustc.edu.cn). Chang Wen Chen is with the Department of Computing, The Hong Kong Polytechnic University, Hong Kong (e-mail: changwen.chen@polyu.edu.hk).}}
\maketitle
\vspace{-3em}
\begin{abstract}
\par In this paper, fundamentals and performance tradeoffs of the neXt-generation ultra-reliable and low-latency communication (xURLLC) are investigated from the perspective of stochastic network calculus (SNC). An xURLLC-enabled massive MU-MIMO system model has been developed to accommodate xURLLC features. By leveraging and promoting SNC, we provide a quantitative statistical quality of service (QoS) provisioning analysis and derive the closed-form expression of upper-bounded statistical delay violation probability (UB-SDVP). Based on the proposed theoretical framework, we formulate the UB-SDVP minimization problem that is first degenerated into a one-dimensional integer-search problem by deriving the minimum error probability (EP) detector, and then efficiently solved by the integer-form Golden-Section search algorithm. Moreover, two novel concepts, EP-based effective capacity (EP-EC) and EP-based energy efficiency (EP-EE) have been defined to characterize the tail distributions and performance tradeoffs for xURLLC. Subsequently, we formulate the EP-EC and EP-EE maximization problems and prove that the EP-EC maximization problem is equivalent to the UB-SDVP minimization problem, while the EP-EE maximization problem is solved with a low-complexity outer-descent inner-search collaborative algorithm. Extensive simulations demonstrate that the proposed framework in reducing computational complexity compared to reference schemes, and in providing various tradeoffs and optimization performance of xURLLC concerning UB-SDVP, EP, EP-EC, and EP-EE.
\end{abstract}

\begin{IEEEkeywords}
Ultra-reliable and low-latency communication, massive MU-MIMO, stochastic network calculus, energy efficiency, quality of service.
\end{IEEEkeywords}

\vspace{-0.4cm}
\section{Introduction}
\IEEEPARstart{A}{s} an emerging and dominant mission-critical and time-sensitive service class of fifth-generation mobile wireless networks and beyond (5G/B5G), ultra-reliable and low-latency communication (URLLC) has sparked enthusiasm in academia and industry \cite{bennis2018ultrareliable,she2021tutorial}. While studies on 5G URLLC are in full swing, the emergence of some novel applications with more stringent quality-of-service (QoS) requirements have prompted the need for neXt-generation URLLC (xURLLC) to become the central vision of sixth-generation (6G) communication systems\cite{she2021tutorial,park2022extreme}. xURLLC is envisioned to enable multifarious innovative services, such as unmanned vehicles, industrial automation, telesurgery, remote training, tactile internet, etc\cite{she2021tutorial,park2022extreme}. All these xURLLC services demand more stringent QoS guarantees, including millisecond-level latency and 99.99999$\%$ packet reliability\cite{park2022extreme}. However, research on xURLLC is still limited and immature.

\par The core technology roadmap to fulfill the low-latency requirements of xURLLC is to organize large amounts of short-packet data communications in highly time-varying wireless networks\cite{park2022extreme, bennis2018ultrareliable, she2021tutorial}. Under these circumstances, standard Shannon's capacity is inappropriate since it is only applicable to the infinite or long blocklength regime. To overcome this theoretical issue, finite blocklength coding (FBC) has been proposed, which reveals the approximate maximum achievable data rate over Additive White Gaussian Noise (AWGN) channels, considering the non-vanishing decoding error probability\cite{polyanskiy2010channel, polyanskiy2011feedback0}. Additionally, Wei \emph{et al.} extended the results in \cite{polyanskiy2010channel,polyanskiy2011feedback0} and investigated the maximum achievable data rates for FBC-based wireless quasi-static fading channels in \cite{yang2014quasi}. Nevertheless, the deterministic QoS provisioning mechanism is far from being able to support the ultra-reliability of the explosively growing xURLLC services in view of the highly time-varying characteristic of wireless fading channels. As a result, the statistical QoS provisioning mechanisms represented by effective capacity and large deviation theory have been actively studied\cite{zhang2020statistical0,zhang2021aoi0,li2021ultra,guo2019resource, amjad2019effective, hou2018burstiness}. A statistical QoS provisioning scheme for massive URLLC (mURLLC) over the cell-free (CF) multiple-input massive multiple-output (MIMO) mobile wireless networks was investigated in \cite{zhang2020statistical0}.
To further alleviate the QoS provisioning issues of mURLLC, a statistical QoS provisioning scheme based on the age-of-information (AoI) concept was developed for mURLLC-enabled CF-MIMO over the unmanned aerial vehicle (UAV) mobile wireless networks\cite{zhang2021aoi0}. Furthermore, the reliability-latency tradeoff and performance bounds of wireless URLLC systems with burst traffics have also been analyzed by exploiting the statistical QoS provisioning mechanism \cite{li2021ultra}.

\par In addition, massive multi-user MIMO (MU-MIMO) may play a potential role in combating the extreme features and enhancing the reliability of xURLLC\cite{park2022extreme,she2021tutorial,zhang2022imreconet}, since it is able to support multiple mobile user equipments (UEs) simultaneously without consuming additional frequency and time resources due to the leveraging of spatial diversity and the deployment of a large number of antennas at the base station (BS)\cite{ngo2013energy,popovski2018wireless,ren2020joint,zeng2019enabling,ostman2021urllc}. This technology has also been utilized in some studies concerning 5G URLLC. For instance, to support URLLC services in the finite blocklength regime, a resource allocation problem over the massive MU-MIMO systems was reported in \cite{ren2020joint}. Additionally, a decoding error probability (EP) minimization problem over the massive MU-MIMO systems with perfect CSI has been proposed in \cite{zeng2019enabling} in order to further enhance the reliability of URLLC. Taking into account imperfect CSI, Johan \"{O}stman \emph{et al.} in \cite{ostman2021urllc} have exploited the spatially correlated channels and pilot contamination to characterize and evaluate the EP for both uplink and downlink of massive MU-MIMO in the finite blocklength regime.

\par Although the aforementioned studies have provided many useful insights for xURLLC, the fundamentals and performance tradeoffs of xURLLC remain elusive. Firstly, the relationships among EP, end-to-end delay, and achievable data rate of xURLLC in the finite blocklength regime are fundamentally different \cite{durisi2016toward,popovski2018wireless,park2022extreme,she2021tutorial}. In particular, the ultra-reliability and low-latency properties of xURLLC are difficult to guarantee simultaneously since the EP for short-packet data communication is always non-vanishing with FBC. However, the QoS provisioning mechanisms for xURLLC short-packet data communications have not been thoroughly investigated. Secondly, the majority of existing URLLC research has focused on average metrics such as average delay, average power consumption, and average throughput \cite{shi2022risk,liu2017latency,salehi2021cooperative}. However, the design overarching core of xURLLC demands more mandate on the tail distributions of reliability and delay \cite{bennis2018ultrareliable,durisi2016toward,she2021tutorial,park2022extreme}, since the tail behavior of xURLLC is essentially related to the tail of stochastic traffic requirement, the tail of delay distribution, interference of intra-cell, mobile UEs power limitation, random location of mobile UEs and shadow fading. Finally, most existing studies have not considered the additional requirements of xURLLC, including high throughput, energy efficiency, extreme delay, and deep shadow fading\cite{popovski2018wireless,park2022extreme,she2021tutorial}, and the performance tradeoffs among these intractable features of xURLLC have not been fully investigated.

\vspace{-0.05cm}
\par Fortunately, stochastic network calculus (SNC) has been proven as an effective mathematical tool in characterizing tail distributions \cite{bennis2018ultrareliable,durisi2016toward,she2021tutorial,jiang2008stochastic,fidler2010survey,fidler2014guide,lubben2014stochastic}, providing us with a crucial methodology for the potential investigation of fundamentals and performance tradeoffs of xURLLC. In this paper, we apply the promoted SNC theory to investigate the performance of the xURLLC-enabled massive MU-MIMO networks, and carry out penetrative statistical QoS provisioning analysis for xURLLC. To characterize the tail distributions and performance tradeoffs of xURLLC, two novel concepts, i.e., EP-based effective capacity (EP-EC) and EP-based energy efficiency (EP-EE), are defined. Based on the proposed theoretical framework, performance optimization of xURLLC is performed. The particularized contributions of this paper are summarized as follows:
\begin{itemize}
  \item An xURLLC-enabled massive MU-MIMO system model with imperfect CSI is developed in the finite blocklength regime, which significantly enhances reliability and thoroughly caters to the features of xURLLC. Subsequently, the SNC theory is utilized and promoted to provide penetrative statistical QoS provisioning analysis for xURLLC. Notably, a novel and succinct operator named \emph{min-deconvolution} is proposed to characterize upper-bounded statistical delay violation probability (UB-SDVP). Moreover, the arrival process and service process of xURLLC-enabled massive MU-MIMO networks are derived, respectively. Lastly, the closed-form expression of UB-SDVP is deduced.
  \item Based on the proposed theoretical framework, we formulate the UB-SDVP minimization problem; it is first degenerated into a one-dimensional integer-search problem by deriving the expression of the minimum EP detector, which is then efficiently solved by an integer-form golden-section search algorithm (IFGSS).
  \item To investigate the performance tradeoffs, the EP-EC and EP-EE maximization problems are subsequently formulated. The EP-EC maximization problem is demonstrated to be equivalent to the UB-SDVP minimization problem for a given transmit power, while the EP-EE maximization problem is solved by proposing a low-complexity outer-descent inner-search collaborative algorithm (ODISC).
  \item Extensive simulations validate and demonstrate that the proposed algorithms can considerably reduce computation complexity compared with reference schemes, revealing various tradeoffs and performance optimization of xURLLC. To further explore these topics, tradeoffs of EP, UB-SDVP, EP-EC, and EP-EE are qualitatively analyzed, and the performance optimization of EP-EC and EP-EE is also qualitatively investigated.
\end{itemize}

\vspace{-0.1cm}
\par The remainder of this paper is organized as follows. In Sec. II, we present an xURLLC-enabled massive MU-MIMO system model in the finite blocklength regime. In Sec. III, we discuss the moment generating function-based SNC (MGF-SNC) theory and the statistical QoS provisioning analysis scheme. In Sec. IV, we display the formulation and the solution to the UB-SDVP minimization problem. In Sec. V, we illustrate the maximization problems of EP-EC and EP-EE. Extensive performance evaluations and analysis are presented in Sec. VI. Conclusions are given in Sec. VII.

\vspace{-1em}
\begin{figure}[htbp]
\centering
\includegraphics[scale=0.60]{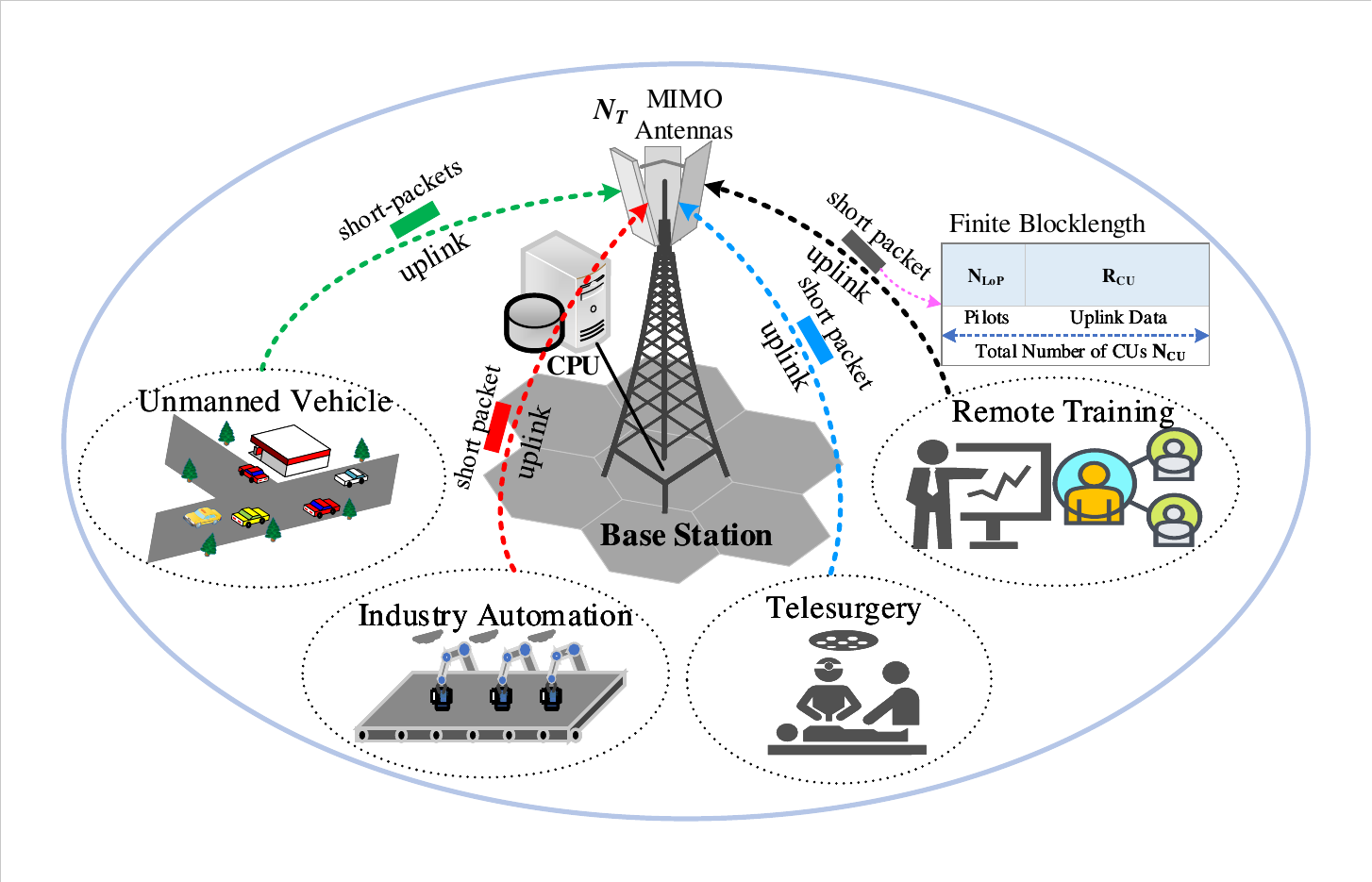}
\caption{The system model for the proposed uplink xURLLC-enabled massive MU-MIMO networks in the finite blocklength regime.}
\label{fig:label}
\end{figure}

\vspace{-2.5em}
\section{The SYSTEM MODEL OF xURLLC-Enabled Massive MU-MIMO Networks}
\par As illustrated in Fig. 1, we propose an uplink xURLLC-enabled massive MU-MIMO wireless networks communication system in the finite blocklength regime, where a BS equipped with $N_{T}$ antennas serves $M$ mobile UEs equipped with a single antenna simultaneously ($N_{T} \gg M$). Let set $\mathcal{M} \triangleq \left\{1,2,\cdots,M\right\}$ denote the index of mobile UEs. Assume each mobile UE $m \in \mathcal{M}$ is randomly located around the BS and entitled to maintain xURLLC short-packet dada communications with the BS through uplink to deliver emergency messages (such as vehicle status, traffic situation, operation query, feedback message, measured data, vital signs, etc.) to the central controller in time. Assume that each mobile UE $m\in \mathcal{M}$ transmits xURLLC short-packet data through the total number of channel uses (CUs) $N_{CU}$, which span across a bandwidth of $B$ MHz and a duration of $t_{DE}$ milliseconds. Thus, the total number of CUs can be denoted as $N_{CU} = B \cdot t_{DE}$, which also represents the code blocklength of one xURLLC short-packet data communication \cite{tse2005fundamentals}. The total number of CUs consists of two parts: 1) $R_{CU}$ CUs for the xURLLC short-packet data transmission of these $M$ mobile UEs; 2) $N_{LoP}$ CUs for estimating the instantaneous fading channel coefficients, resulting in $N_{CU}=N_{LoP} + R_{CU}$. For the sake of simplicity, we mainly focus on the uplink communications; however, the proposed system can also be similarly derived and applied to xURLLC downlink scenarios.
\vspace{-0.6cm}
\subsection{xURLLC Short-Packet Data Communications Model}
\par Let $\boldsymbol{\mathrm{S}}^{(d)} = \big[\boldsymbol{\mathrm{s}}_{1}^{(d)},\cdots,\boldsymbol{\mathrm{s}}_{M}^{(d)}\big]^{T} \in \mathbb{C}^{M \times R_{CU}}$ denote the xURLLC short-packet data of these $M$ mobile UEs, where $\boldsymbol{\mathrm{s}}_{m}^{(d)} = \big[s_{m,1}^{(d)},\cdots,s_{m,R_{CU}}^{(d)}\big]^{T} \in \mathbb{C}^{R_{CU} \times 1}$ is the zero mean and unit variance Gaussian xURLLC short-packet data of the mobile UE $m\in \mathcal{M}$, namely, $\mathbb{E}\big(\big|s_{m,l}^{(d)}\big|^{2}\big)=1$, where $1 \leq m \leq M$ and $1 \leq l \leq R_{CU}$, $\mathbb{E}\big(s_{m,l}^{(d)}s_{i,j}^{(d)}\big) = 0$, $m\neq i$ or $l \neq j$. The uplink channel coefficients matrix from these $M$ mobile UEs to the BS is represented as $\boldsymbol{\mathrm{H}} = \left[\boldsymbol{\mathrm{h}}_{1},\cdots,\boldsymbol{\mathrm{h}}_{M}\right] \in \mathbb{C}^{N_{T} \times M}$. Then, the received signals of the xURLLC short-packet data at the BS can be given as
\vspace{-0.1cm}
\begin{equation}\label{e1}
  \boldsymbol{\mathrm{Y}}^{(d)} = \sqrt{\rho}\sum_{m \in \mathcal{M}} \boldsymbol{\mathrm{h}}_{m}\mathrm{\boldsymbol{\mathrm{s}}}^{(d)}_{m} + \boldsymbol{\mathrm{N}}^{(d)},
\end{equation}
where $\rho$ denotes the transmit power, and $\boldsymbol{\mathrm{N}}^{(d)} \in \mathbb{C}^{N_{T} \times R_{CU}}$ is the AWGN matrix during the xURLLC short-packet data transmission, which has independent identically distributed elements, namely, $\mathcal{CN}\left(0,1\right)$.

\vspace{-0.2cm}
\subsection{Uplink Pilots Training and Channel Estimation in xURLLC-Enabled Massive MU-MIMO Networks}

\par Massive MU-MIMO has been widely regarded as a facilitator of 5G/B5G mobile wireless networks \cite{popovski2018wireless,zhang2022imreconet}, leveraging spatial diversity and enabling channel hardening. As such, massive MU-MIMO has been promised to be exploited to enhance the reliability of xURLLC services and make the xURLLC systems less vulnerable to the fast-fading effects \cite{ren2020joint,ostman2021urllc}. Nevertheless, in order to enjoy the benefits of massive MU-MIMO, CSI must be available at the BS, meaning that channel estimation is essential and indispensable.

\par In this paper, we select the least square channel estimation (LS) for estimating the CSI due to its low complexity and extensive application \cite{bai2019deep,cheng2016optimal}. Worthy of note is that the main thrust of this study remains on the statistical QoS provisioning analysis and performance optimization in xURLLC-enabled massive MU-MIMO wireless networks, even though the channel estimation is taken into consideration. During the uplink pilot training phase, we assume that all mobile UEs synchronously transmit mutually orthogonal pilot sequences to the BS, where the pilot length satisfies $N_{LoP} \geq M$ \cite{ngo2013energy,cheng2016optimal}. Let us denote $\boldsymbol{\mathrm{S}}^{(p)} = \big[\boldsymbol{\mathrm{s}}_{1}^{(p)},\cdots,\boldsymbol{\mathrm{s}}_{M}^{(p)}\big]^{T} \in \mathbb{C}^{M \times N_{LoP}}$ as the pilots of these $M$ mobile UEs, where $\boldsymbol{\mathrm{s}}_{m}^{(p)} = \big[s_{m,1}^{(p)},\cdots,s_{m,N_{LoP}}^{(p)}\big]^{T} \in \mathbb{C}^{N_{LoP} \times 1}$ is the pilots of mobile UE $m\in \mathcal{M}$, which satisfies $\big(\boldsymbol{\mathrm{s}}_{m}^{(p)}\big)^{H} \cdot \boldsymbol{\mathrm{s}}_{m}^{(p)} = 1$ and $\big(\boldsymbol{\mathrm{s}}_{m}^{(p)}\big)^{H} \boldsymbol{\mathrm{s}}_{n}^{(p)} = 0, m \neq n$.

\par According to the preceding discussion, the channel conditions of these $M$ mobile UEs can be estimated at the BS based on the received pilot signals, which can be expressed as follows:
\begin{equation}\label{e2}
  \boldsymbol{\mathrm{Y}}^{(p)} = \sqrt{\rho}\sum_{m \in \mathcal{M}} \boldsymbol{\mathrm{h}}_{m}\mathrm{\boldsymbol{\mathrm{s}}}^{(p)}_{m} + \boldsymbol{\mathrm{N}}^{(p)},
\end{equation}
where $\boldsymbol{\mathrm{N}}^{(p)} \in \mathbb{C}^{N_{T} \times N_{CU}}$ denotes the additive Gaussian noise matrix during the uplink pilots training phase, whose elements are also independent identically distributed, namely, $\mathcal{CN}\left(0,1\right)$. And the channel coefficients matrix $\boldsymbol{\mathrm{H}}$ can be decomposed as
\vspace{-1em}
\begin{equation}\label{e3}
  \boldsymbol{\mathrm{H}} = \boldsymbol{\mathrm{H}}^{\prime} \boldsymbol{\mathrm{\Lambda}}^{1/2}\boldsymbol{\mathrm{\beta}}^{1/2},
\end{equation}
\vspace{-0.5em}
where $\boldsymbol{\mathrm{H}}^{\prime} = \left[\boldsymbol{\mathrm{h}}^{\prime}_{1},\cdots,\boldsymbol{\mathrm{h}}^{\prime}_{M}\right]$, $\boldsymbol{\mathrm{\beta}} = \rm{diag}\left(\beta_{1},\cdots,\beta_{M}\right)$, $\boldsymbol{\mathrm{\Lambda}} = \rm{diag}\left(\lambda_{1},\cdots,\lambda_{M}\right)$. Let vector $\boldsymbol{\mathrm{h}}_{m}^{\prime} = \left[h^{\prime}_{m,1},\cdots,h^{\prime}_{m,N_{T}}\right]^{T}$ represent the small-scale fading channel coefficients from mobile UE $m$ to the BS, which is modeled as a Rayleigh fading distribution with zero mean and unit variance, i.e., $\boldsymbol{\mathrm{h}}_{m}^{\prime}\thicksim \mathcal{CN}(0,\boldsymbol{I}_{N_{T}})$. Additionally, let $\beta_{m}$ denote the shadow fading of mobile UE $m$, which is modeled as a lognormal distribution. Note that shadow fading can not be neglected since it is one of the primary causes of deep fading of the xURLLC services\cite{park2022extreme,bennis2018ultrareliable,she2021tutorial,zeng2019enabling}. Assuming the standard deviation of shadow fading is $\sigma_{\beta}$, we have $10\log_{10}(\beta_{m})\thicksim \mathcal{CN}\big(0,\sigma_{\beta}^{2}\big)$. Furthermore, $\lambda_{m}$ denotes the path loss from mobile UE $m$ to the BS, which is given as $\lambda_{m} = \mu_{cp}\big(d_{m}/d_{min}\big)^{-\alpha_{0}}$ $\left(d_{min} \leq d_{m} \leq d_{max}\right)$, where $d_{m}$ denotes the distance from the mobile UE $m$ to the BS, $\mu_{cp}$ denotes the constant path loss when mobile UEs are at the minimum distance $d_{min}$, $\alpha_{0}$ denotes the path loss factor, $d_{min}$ and $d_{max}$ denote the minimum and maximum distance from the mobile UE $m$ to BS, respectively.

\par According to (\ref{e3}), the uplink channel coefficients vector $\boldsymbol{\mathrm{h}}_{m}$ from mobile UE $m$ to the BS can be rewritten as $\boldsymbol{\mathrm{h}}_{m} = \sqrt{\beta_{m}\lambda_{m}} \boldsymbol{\mathrm{h}}_{m}^{\prime}$. The parameter $\sqrt{\lambda_{m} \beta_{m}}$ models the large-scale path loss and shadow fading, which can be assumed to be independent and constant over many coherence time intervals, and are known a priori\cite{ngo2013energy,tse2005fundamentals}, given that the distances from mobile UEs to BS are usually much larger than the distances between antennas, and both the large-scale path loss and shadow fading are slowly time-varying. In this case, we only focus on estimating the channel coefficients of small-scale fading. By using LS, the estimation of small-scale fading channel coefficients matrix $\boldsymbol{\mathrm{H}}^{\prime}$ can be obtained by multiplying the term $\frac{1}{\sqrt{\rho}N_{LoP}}\big(\boldsymbol{\mathrm{S}}^{(p)}\big)^{H}$ to $\boldsymbol{\mathrm{Y}}^{(p)}$ at the BS, as follows:

\vspace{-1em}
\begin{equation}\label{e4}
  \hat{\boldsymbol{\mathrm{H}}}^{\prime} \!=\! \big[\hat{\boldsymbol{\mathrm{h}}}_{1}^{\prime},\cdots,\hat{\boldsymbol{\mathrm{h}}}_{M}^{\prime}\big] \! = \! \frac{1}{N_{LoP}\sqrt{\rho}} \boldsymbol{\mathrm{Y}}^{(p)}\big(\boldsymbol{\mathrm{S}}^{(p)}\big)^{H}\!\left(\boldsymbol{\beta}\boldsymbol{\Lambda}\right)^{-1/2}.
\end{equation}
\vspace{-1em}
\par According to (\ref{e3}) and (\ref{e4}), the estimation deviation of the small-scale fading channel coefficients vector $\boldsymbol{h}_{m}^{\prime}$ can be derived as follows:
\vspace{-0.2cm}
\begin{equation}\label{e5}
  \hat{\boldsymbol{\mathrm{h}}}_{m}^{\prime}-\boldsymbol{\mathrm{h}}_{m}^{\prime} = \frac{1}{N_{LoP}\sqrt{\rho\lambda_{k}\beta_{k}}} \boldsymbol{\mathrm{N}}^{(p)} \bigg[\big(\boldsymbol{\mathrm{s}}_{m}^{(p)}\big)^{T}\bigg]^{H}.
\end{equation}
\vspace{-1.5em}
\par Correspondingly, the estimation of channel coefficient matrix $\boldsymbol{\mathrm{H}}$ is denoted as
\begin{equation}\label{e6}
  \hat{\boldsymbol{\mathrm{H}}}=\left[\hat{\boldsymbol{\mathrm{h}}}_{1},\cdots,\hat{\boldsymbol{\mathrm{h}}}_{M}\right] = \hat{\boldsymbol{\mathrm{H}}}^{\prime}\boldsymbol{\Lambda}^{1/2}\boldsymbol{\beta}^{1/2}.
\end{equation}
\vspace{-2em}
\par As stated in \cite{fodor2014performance,fodor2015minimizing0}, the expression of the small-scale fading channel coefficients realization $\boldsymbol{\mathrm{h}}_{m}^{\prime}$ conditioned on the estimated matrix $\hat{\boldsymbol{\mathrm{H}}}^{\prime}$ is given as follows:
\vspace{-1em}
\begin{equation}\label{e7}
  \big(\boldsymbol{\mathrm{h}}_{m}^{\prime}|\hat{\boldsymbol{\mathrm{H}}}^{\prime}\big) = \delta_{m}\hat{\boldsymbol{\mathrm{h}}}_{m}^{\prime} +  \zeta_{m} \boldsymbol{\mathrm{e}}_{m},
\end{equation}
where $\delta_{m} = \frac{\rho N_{LoP} \lambda_{m} \beta_{m}}{\rho N_{LoP} \lambda_{m} \beta_{m} + 1}$, $\zeta_{m} = \frac{1}{\sqrt{\lambda_{m}\beta_{m}}}$, and $\boldsymbol{\mathrm{e}}_{m} \thicksim \mathcal{CN}\big(0,\frac{\delta_{m}}{\rho N_{LoP}} \boldsymbol{\mathrm{I}}_{N_{T}}\big)$ denotes the estimation errors of the small-scale channel coefficients by using LS detector.
\par Substituting (\ref{e3}) into (\ref{e7}), we can obtain that
\vspace{-0.1cm}
\begin{equation}\label{e8}
   \left(\boldsymbol{\mathrm{H}}|\hat{\boldsymbol{\mathrm{H}}}^{\prime}\right) = \boldsymbol{\mathrm{D}} + \boldsymbol{\mathrm{E}},
\end{equation}
where $\boldsymbol{\mathrm{D}} = \left[\boldsymbol{\mathrm{d}}_{1},\cdots,\boldsymbol{\mathrm{d}}_{M}\right]$, and $\boldsymbol{\mathrm{d}}_{m} = \delta_{m}\sqrt{\lambda_{m}\beta_{m}} \hat{\boldsymbol{\mathrm{h}}}_{m}^{\prime}$, $\boldsymbol{\mathrm{E}} = \left[\zeta_{1}\boldsymbol{\mathrm{e}}_{1},\cdots,\zeta_{M}\boldsymbol{\mathrm{e}}_{M}\right]$. Then, the channel coefficient vector $\boldsymbol{\mathrm{h}}_{m}$ conditioned on the estimated matrix $\hat{\boldsymbol{\mathrm{H}}}^{\prime}$ is given as
\vspace{-0.1cm}
\begin{equation}\label{e9}
  \big(\boldsymbol{\mathrm{h}}_{m}|\hat{\boldsymbol{\mathrm{H}}}^{\prime}\big) = \delta_{m}\sqrt{\lambda_{m}\beta_{m}} \hat{\boldsymbol{\mathrm{h}}}_{m}^{\prime} +  \zeta_{m} \boldsymbol{\mathrm{e}}_{m}.
\end{equation}

\vspace{-0.5cm}
\subsection{Linear Detection for the Received xURLLC Short-Packet Data}
\vspace{-0.2cm}
\par Combining (\ref{e1}) and (\ref{e7})$-$(\ref{e9}), the xURLLC short-packet data $\boldsymbol{\mathrm{Y}}^{(d)} = \left[\boldsymbol{\mathrm{y}}_{1}^{(d)},\cdots,\boldsymbol{\mathrm{y}}_{R_{CU}}^{(d)}\right] \in \mathbb{C}^{L\times R_{CU}}$ of the $M$ mobile UEs received at the BS can be expressed as
\vspace{-0.2cm}
\begin{equation}\label{e10}
 \begin{aligned}
    \boldsymbol{\mathrm{Y}}^{(d)} \!& = \! \sqrt{\rho}\big(\boldsymbol{\mathrm{H}}|\hat{\boldsymbol{\mathrm{H}}}^{'}\!\big)\boldsymbol{\mathrm{S}}^{(d)}\!+\!\boldsymbol{\mathrm{N}}^{(d)} \!=\! \sqrt{\rho}\left(\boldsymbol{\mathrm{D}} + \boldsymbol{\mathrm{E}}\right)\boldsymbol{\mathrm{S}}^{(d)}\!+\boldsymbol{\mathrm{N}}^{(d)} \\
     & = \sqrt{\rho}\boldsymbol{\mathrm{D}}\boldsymbol{\mathrm{S}}^{(d)} + \sqrt{\rho}\boldsymbol{\mathrm{E}}\boldsymbol{\mathrm{S}}^{(d)}+\boldsymbol{\mathrm{N}}^{(d)}.
 \end{aligned}
\end{equation}
\par More specifically, the detection xURLLC short-packet data $\boldsymbol{\mathrm{y}}_{m, r_{CU}}$ of the mobile UE $m$ ($r_{CU} \in \left\{1,2,\cdots,R_{CU}\right\}$) can be given as follows:
\vspace{-0.2cm}
\begin{equation}\label{e11}
  \begin{aligned}
       \boldsymbol{\mathrm{y}}_{m, r_{CU}} &= \boxed{\underbrace{\sqrt{\rho}s_{m,r_{CU}}^{(d) }\boldsymbol{\mathrm{d}}_{m}}_{\text{Mobile UE-$m$}}} + \underbrace{\sum\limits_{i = 1, i \neq m}^{M} \sqrt{\rho}s_{i,r_{CU}}^{(d)}\boldsymbol{\mathrm{d}}_{i}}_{\text{Interference from other mobile UEs}} + \underbrace{\sum\limits_{i = 1}^{M} \sqrt{\rho}\zeta_{i}s_{i,r_{CU}}^{(d)}\boldsymbol{\mathrm{e}}_{i}}_{\text{Estimation errors}} + \underbrace{\boldsymbol{\mathrm{n}}_{r_{CU}}^{(d)}}_{\text{Noise}}.
  \end{aligned}
\end{equation}
\vspace{-2em}
\par Note that the estimated channel coefficients matrix $\hat{\boldsymbol{\mathrm{H}}}$ is used as the true channel coefficients at the BS.
The first item in (\ref{e11}) represents the received xURLLC short-packet data of mobile UE $m$, while the remaining three items are deemed as interference or noise. By utilizing linear detection\cite{studer2011asic,ren2020joint}, the received URLLC short-packet data $\boldsymbol{\mathrm{y}}_{m, r_{CU}}$ for mobile UE $m$ can be processed as
\vspace{-1.5em}
\begin{equation}\label{e12}
  \boldsymbol{\mathrm{y}}_{m, r_{CU}}^{D} = \boldsymbol{\mathrm{L}}_{m}\boldsymbol{\mathrm{y}}_{m, r_{CU}},
\end{equation}
\vspace{-0.5em}
where $\boldsymbol{\mathrm{L}}_{m}$ denotes the linear detector matrix of mobile UE $m$.

\par Given the pilot length $N_{LoP}$ and linear detector matrix $\boldsymbol{\mathrm{L}}_{m}$ \cite{ngo2013energy,ren2020joint}, the SINR for each mobile UE $m$ can be given as (\ref{e13}), where $\boldsymbol{\mathrm{G}} \triangleq \boldsymbol{\mathrm{D}}\boldsymbol{\mathrm{D}}^{H} + \frac{1}{\omega}\boldsymbol{\mathrm{I}}_{N_{T}}=\hat{\boldsymbol{\mathrm{H}}}\boldsymbol{\mathrm{\delta}}\boldsymbol{\mathrm{\delta}}\hat{\boldsymbol{\mathrm{H}}}^{H}+\frac{1}{\omega}\boldsymbol{\mathrm{I}}_{N_{T}}$, and $\omega = \left(\sum_{i=1}^{M}\frac{\lambda_{i}\beta_{i}}{\rho N_{LoP} \lambda_{i}\beta_{i} + 1} + \frac{1}{\rho} \right)^{-1}$.
In this paper, the probability distribution function (PDF) of the SINR $\hat{\gamma}_{m}\left(N_{LoP},\boldsymbol{\mathrm{L}}_{m}\right)$ for each mobile UE $m$ can be approximated by a Gamma distribution as follows\cite{li2005distribution, ngo2013energy}
\begin{figure*}[t]
\centering
\hrulefill
\begin{equation}\label{e13}
\setstretch{0.95}
  \begin{aligned}
      \hat{\gamma}_{m}\left(N_{LoP},\boldsymbol{\mathrm{L}}_{m}\right) &= \frac{\big \|\sqrt{\rho}s_{m,r_{CU}}^{(d)}\boldsymbol{\mathrm{L}}_{m}\boldsymbol{\mathrm{d}}_{m}\big \|^{2}}{\mathbb{E}\left\{ \bigg \|\boldsymbol{\mathrm{L}}_{m}\left(\sqrt{\rho}\left(\sum\limits_{i = 1, i \neq m}^{M} s_{i,r_{CU}}^{(d)}\boldsymbol{\mathrm{d}}_{i} +\sum\limits_{i = 1}^{M} \zeta_{i}s_{i,r_{CU}}^{(d)}\boldsymbol{\mathrm{e}}_{i}\right) +\boldsymbol{\mathrm{n}}_{r_{CU}}^{(d)}\right)\bigg \|^{2}\right\}}\\
      & = \frac{\rho \big \|\boldsymbol{\mathrm{L}}_{m}\boldsymbol{\mathrm{d}}_{m}\big \|^{2}}{\rho \boldsymbol{\mathrm{L}}_{m} \left(\sum\limits_{i=1,i \neq m}^{M} \boldsymbol{\mathrm{d}}_{i}\boldsymbol{\mathrm{d}}_{i}^{H} + \frac{\rho}{\omega}\boldsymbol{\mathrm{I}}_{N_{T}}\right)\boldsymbol{\mathrm{L}}_{m}^{H}} = \frac{\boldsymbol{\mathrm{L}}_{m}\boldsymbol{\mathrm{d}}_{m}\boldsymbol{\mathrm{d}}_{m}^{H}\boldsymbol{\mathrm{L}}_{m}^{H}}{\boldsymbol{\mathrm{L}}_{m} \left(\boldsymbol{\mathrm{G}} - \boldsymbol{\mathrm{d}}_{m}\boldsymbol{\mathrm{d}}_{m}^{H}\right)\boldsymbol{\mathrm{L}}_{m}^{H}}.
  \end{aligned}
\end{equation}
\hrulefill
\end{figure*}
\vspace{-0.1cm}
\begin{equation}\label{e14}
  p_{\hat{\gamma}_{m}\left(N_{LoP},\boldsymbol{\mathrm{L}}_{m}\right)}\left(x\right) = \frac{x^{\mu_{m}-1}e^{-x/\nu_{m}}}{\Gamma\left(\mu_{m}\right)\nu_{m}^{\mu_{m}}},
\end{equation}
\vspace{-0.5em}
where $\Gamma(x) = \int_{0}^{\infty}t^{x-1}e^{-t}\mathrm{d}t$, $\hat{\lambda}_{m} = \lambda_{m}\beta_{m}\delta_{m}$, and
\begin{subequations}\label{e15}
\setstretch{0.95}
   \begin{align}
      & \mu_{m} = \frac{\left(N_{T}-M+1+(M-1)\psi_{m}\right)^{2}}{N_{T}-M+1+(M-1)\kappa_{m}},\\
      & \nu_{m} = \frac{N_{T}-M+1+(M-1)\kappa_{m}}{N_{T}-M+1+(M-1)\psi_{m}}\omega\hat {\lambda}_{m},\\
      & \psi_{m} = \frac{1}{M\!-\!1}\!\!\!\sum\limits_{i=1,i \neq m}^{M} \!\frac{1}{N_{T}\omega \hat{\lambda}_{i}\!\big(1\!-\!\frac{M-1}{N_{T}}\!+\!\frac{M-1}{N_{T}}\psi_{m}\big)\!+\!1},\\
      & \kappa_{m} \bigg(1+\sum\limits_{i=1,i \neq m}^{M} \frac{\omega\hat{\lambda}_{i}}{\big(N_{T}\omega \hat{\lambda}_{i}\big(1-\frac{M-1}{N_{T}}+\frac{M-1}{N_{T}}\psi_{m}\big)+1\big)^{2}}\bigg) \nonumber \\
      & = \sum\limits_{i=1,i \neq m}^{M} \frac{\omega\hat{\lambda}_{i}\psi_{m} + \frac{1}{M-1}}{\big(N_{T}\omega \hat{\lambda}_{i}\big(1-\frac{M-1}{N_{T}}+\frac{M-1}{N_{T}}\psi_{m}\big)+1\big)^{2}}.
   \end{align}
\end{subequations}
\vspace{-0.6cm}
\subsection{Channel Model for the Maximum Achievable Data Rate in the Finite Blocklength Regime}
\par Referring to FBC theory\cite{polyanskiy2010channel,polyanskiy2011feedback0,yang2014quasi}, given the instantaneous SINR $\hat{\gamma}_{m}\left(N_{LoP},\boldsymbol{\mathrm{L}}_{m}\right)$ and decoding error probability $\hat{\epsilon}_{m}\left(N_{LoP},\boldsymbol{\mathrm{L}}_{m}\right)$, the maximum achievable data rate (in bpcu) of mobile UE $m$ can be closely approximated by
\vspace{-0.2cm}
\begin{equation}\label{e16}
  r_{m}\left(\hat{\gamma}_{m},\hat{\epsilon}_{m}\right) = C\left(\hat{\gamma}_{m}\right) -\sqrt{\frac{\mathcal{V}\left(\hat{\gamma}_{m}\right)}{R_{CU}}}Q^{-1}\left(\hat{\epsilon}_{m}\right),
\end{equation}
where $Q^{-1}\left(\cdot\right)$ is the inverse of Gaussian function $Q\left(x\right)=\int_{x}^{\infty}\frac{1}{\sqrt{2\pi}}e^{-x^{2}/2} \mathrm{d}x$, $C\left(\hat{\gamma}_{m}\right)$ denotes the classic Shannon Capacity, and $\mathcal{V}\left(\hat{\gamma}_{m}\right)$ denotes the channel dispersion. The expressions of $C\left(\hat{\gamma}_{m}\right)$ and $\mathcal{V}\left(\hat{\gamma}_{m}\right)$ are respectively as follows:
\vspace{-0.3cm}
\begin{subequations}\label{e17}
  \begin{align}
     & C\left(\hat{\gamma}_{m}\right) = \log_{2}\big(1+ \hat{\gamma}_{m}\left(N_{LoP},\boldsymbol{\mathrm{L}}_{m}\right)\big),\\
     & \mathcal{V}\left(\hat{\gamma}_{m}\right) = \big(1 -\big(1+\hat{\gamma}_{m}\left(N_{LoP},\boldsymbol{\mathrm{L}}_{m}\right)\big)^{-2}\big)\left(\log_{2}e\right)^{2},
  \end{align}
\end{subequations}
\vspace{-2em}
\par Combining (\ref{e16}), (\ref{e17}a), and (\ref{e17}b), the maximum achievable data rate $r_{m}\left(\hat{\gamma}_{m},\hat{\epsilon}_{m}\right)$ can be reformulated as
\vspace{-0.5em}
\begin{equation}\label{e18}
  r_{m}\left(\hat{\gamma}_{m},\hat{\epsilon}_{m}\right) = \log_{2}\tilde{f}\left(\hat{\gamma}_{m},\hat{\epsilon}_{m}\right),
\end{equation}
\vspace{-0.3cm}
where
\vspace{-0.3cm}
\begin{equation}\label{e19}
    \!\!\!\tilde{f}\left(\hat{\gamma}_{m},\hat{\epsilon}_{m}\right) \!=\!
   \begin{cases}
    \!\! \frac{1+\hat{\gamma}_{m}}{exp\left\{\sqrt{\frac{\hat{\gamma}_{m}^{2}+2\hat{\gamma}_{m} }{R_{CU}\left(1+\hat{\gamma}_{m}\right)^{2}}}Q^{-1}\left(\hat{\epsilon}_{m}\right) \right\}}, \!\!\!\! & \text{if}\hat{\gamma}_{m} \!>\! \gamma_{0}, \\
    1, \!\!\!\! & \text{if }\hat{\gamma}_{m} \!\leqslant\! \gamma_{0},
   \end{cases}
\end{equation}
and $\gamma_{0}$ is the maximum threshold which corresponds to the value of SINR when $r_{m}\left(\hat{\gamma}_{m},\hat{\epsilon}_{m}\right) = 0$.
\vspace{-2em}
\section{Theoretical Framework: Statistical QoS provisioning Analysis Scheme Based on the Promoted MGF-SNC}
\vspace{-0.1cm}
\par In this section, we propose a theoretical framework for the developed xURLLC-enabled massive MU-MIMO networks by utilizing the promoted MGF-SNC theory. Firstly, the MGF-SNC theory is promoted by introducing a novel and succinct operator named min-deconvolution $\widehat{\oslash}$ to describe UB-SDVP. Secondly, the expressions of the arrival and service processes are derived, respectively. Lastly, the closed-form expression of UB-SDVP is deduced.
\vspace{-1.5em}
\subsection{The Promotion of MGF-SNC}
\vspace{-0.5em}
\par For the convenience of system-level analysis, a statistical QoS-driven stochastic discrete-time queueing system is considered. Assume that short-packet xURLLC data is served in accordance with a first-come-first-serve (FCFS) policy. The cumulative arrival, service, and departure processes during the time interval $\left[s,t\right)$ can be denoted as $A_{m}\left(s,t\right) \triangleq \sum_{i=s}^{t-1}a_{m}\left(i\right)$, $S_{m}\left(s,t\right) \triangleq \sum_{i=s}^{t-1}s_{m}\left(i\right)$, and $D_{m}\left(s,t\right) \triangleq \sum_{i=s}^{t-1}d_{m}\left(i\right)$, respectively, where $a_m\left(i\right)$ denotes arrival data bits generated by mobile UE $m$ at the $i$-th time slot, $d_{m}(i)$ represents the departure data bits that the mobile UE $m$ successfully deliveries to the BS at the $i$-th time slot, and $s_{m}\left(i\right)$ indicates the service rate at which the mobile UE $m$ is successfully served and receives the acknowledged data bits at the $i$-th time slot. In SNC, the convolution operator $\otimes$ and deconvolution operator $\oslash$ are two critical operators used for characterizing the statistical performance of queueing systems, which are usually described by $\left(\mathrm{min},+\right)$-algebras\cite{jiang2008stochastic, fidler2010survey, al2014network, fidler2014guide, lubben2014stochastic, yang2018low}. To make the subsequent derivation and presentation of UB-SDVP more concise, the operator min-deconvolution $\widehat{\oslash}$ is defined in \textrm{Definition 1}.

\begin{definition}
  (min-deconvolution $\widehat{\oslash}$): Assume that the cumulative arrival process $A_{m}\left(s,t\right)$ and service process $S_{m}\left(s,t\right)$ of mobile UE $m$ are independent each other, then the expression of the operator $\widehat{\oslash}$ can be given as follows:
   \begin{equation}\label{e20}
     \begin{aligned}
       & \!\!\!\boldsymbol{\mathrm{M}}\!_{A_{m} \widehat{\oslash} S_{m}}\!\! \left(\theta_{m},\!s,\!t\right)\! =\!\!\! \sum\limits_{u=0}^{\min\left\{\!s,t\!\right\}}\!\! \mathbb{M}_{A_{m}}\!\!\left(\theta_{m},\!u,\!t\right) \cdot \overline{\mathbb{M}}_{S_{m}}\!\!\left(\theta_{m},\!u,\!s\right),
     \end{aligned}
   \end{equation}
   where $\theta_{m} \geq 0$ denotes the QoS exponent of mobile UE $m$, $\mathbb{M}_{A_{m}}$ and $\overline{\mathbb{M}}_{S_{m}}$ refer to the MGF of $A_{m}$ and inverse-MGF of $S_{m}$, respectively\footnote{Given a stochastic process $U\left(s,t\right)$, $0\leq s \leq t$, the MGF of $U\left(s,t\right)$ is $\mathbb{M}_{U}\left(\theta,s,t\right) \triangleq \mathbb{E}\left[e^{\theta U\left(s,t\right)}\right]$, while the inverse-MGF of $U\left(s,t\right)$ is $\overline{\mathbb{M}}_{U}\left(\theta,s,t\right) \triangleq \mathbb{E}\left[e^{-\theta U\left(s,t\right)}\right]$. In SNC, the larger $\theta_{m}$ (e.g., $\theta_{m} \rightarrow \infty$) indicates more stringent statistical QoS requirements, while the smaller $\theta_{m}$ (e.g., $\theta_{m} \rightarrow 0$) implies looser statistical QoS requirements\cite{jiang2008stochastic, fidler2010survey, al2014network, fidler2014guide, lubben2014stochastic,yang2018low,hou2018burstiness,mei2022delay,zhao2020latency,chen2021end,mei2022statistical}.}.
\end{definition}

\par The concept of statistical QoS guarantees has been extensively studied for time-sensitive networks \cite{yang2018low,hou2018burstiness,mei2022delay,zhao2020latency,chen2021end,mei2022statistical}. However, the closed-form expression of the statistical delay violation probability (SDVP) is typically unavailable to derive. Exploiting the definition of min-deconvolution $\widehat{\oslash}$, we formulate the expression of SDVP.

\begin{theorem}
Given the arrival process $A_{m}\left(s,t\right)$ and service process $S_{m}\left(s,t\right)$ of mobile UE $m$, the statistical delay-bound violation probability of mobile UE $m$ can be characterized by the operator min-deconvolution $\widehat{\oslash}$ as follows:
\begin{equation}\label{e21}
 \begin{aligned}
  \mathbb{P}\left(W_{m}\left(t\right) \geq w \right) & \leq \inf_{\theta_{m} \geq 0} \boldsymbol{\mathrm{M}}_{A_{m} \widehat{\oslash} S_{m}} \left(\theta_{m},t+w,t\right).
 \end{aligned}
\end{equation}
\end{theorem}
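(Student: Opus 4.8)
The plan is to run the classical moment-generating-function argument of stochastic network calculus for the per-UE FCFS queue and then recognize the resulting bound as an instance of the operator $\widehat{\oslash}$ from Definition 1. First I would invoke the standard characterization of the virtual delay: under FCFS, $W_m(t)=\inf\{w\ge 0:\ A_m(0,t)\le D_m(0,t+w)\}$, so the violation event $\{W_m(t)\ge w\}$ forces $A_m(0,t)\ge D_m(0,t+w)$ (up to the usual discrete-time boundary convention). Since $S_m$ is a dynamic service process for this queue, the departures satisfy $D_m(0,t+w)\ge (A_m\otimes S_m)(0,t+w)=\inf_{0\le u\le t+w}\{A_m(0,u)+S_m(u,t+w)\}$. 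Combining the two, and noting that the minimizing index $u^\ast$ must satisfy $u^\ast\le t$ (otherwise $A_m(0,u^\ast)\ge A_m(0,t)$ while $S_m(u^\ast,t+w)\ge 0$, a contradiction), I obtain the set inclusion
\[
\{W_m(t)\ge w\}\ \subseteq\ \Big\{\ \sup_{0\le u\le t}\big(A_m(u,t)-S_m(u,t+w)\big)\ \ge\ 0\ \Big\}.
\]

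The remaining steps are routine. Fix $\theta_m\ge 0$ and apply the Chernoff/Markov bound to the right-hand event, so that $\mathbb{P}\big(W_m(t)\ge w\big)\le \mathbb{E}\big[\exp\{\theta_m\sup_{0\le u\le t}(A_m(u,t)-S_m(u,t+w))\}\big]=\mathbb{E}\big[\sup_{0\le u\le t}\exp\{\theta_m(A_m(u,t)-S_m(u,t+w))\}\big]$. I would then upper-bound the supremum over the finitely many indices $u\in\{0,1,\dots,t\}$ by the corresponding sum, exchange sum and expectation, and use the independence of $A_m$ and $S_m$ assumed in Definition 1 to factor each summand as $\mathbb{M}_{A_m}(\theta_m,u,t)\,\overline{\mathbb{M}}_{S_m}(\theta_m,u,t+w)$. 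Because $w\ge 0$ gives $\min\{t+w,t\}=t$, the resulting sum is exactly $\boldsymbol{\mathrm{M}}_{A_m\widehat{\oslash}S_m}(\theta_m,t+w,t)$ of (\ref{e20}); since the inequality holds for every $\theta_m\ge 0$, taking the infimum over $\theta_m\ge 0$ yields (\ref{e21}).

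I expect the crux to be the rigorous justification of the first display, i.e., translating the delay-violation event into a statement about the arrival and cumulative service processes alone. This hinges on (i) pinning down in what sense $S_m$ is a service process for the discrete-time queue of Sec.~III, so that $D_m\ge A_m\otimes S_m$ holds sample-path-wise; (ii) the boundary handling of $\ge$ versus $>$ and the integer indexing of $w$, so that no probability mass is lost and the bound lands at $(t+w,t)$ rather than $(t+w-1,t)$; and (iii) confirming that the minimizing time index lies in $\{0,\dots,t\}$, which is precisely what lets the sum in $\widehat{\oslash}$ terminate at $\min\{t+w,t\}$. Once the inclusion is in hand, the union bound over $u$ (the only real source of looseness), the Chernoff step, and the independence-based factorization are all standard.
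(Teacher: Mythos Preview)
Your proposal is correct and follows essentially the same route as the paper's proof: the delay-violation event is bounded via the $(\min,+)$-deconvolution, Chernoff's inequality is applied, and the resulting MGF is upper-bounded by the sum over $u$ that, after factoring with the assumed independence of $A_m$ and $S_m$, is recognized as $\boldsymbol{\mathrm{M}}_{A_m\widehat{\oslash}S_m}(\theta_m,t+w,t)$. Your explicit argument that the minimizing index satisfies $u^\ast\le t$ is in fact more careful than the paper, which simply invokes the literature form $\mathbb{P}\big((A_m\oslash S_m)(t+w,t)\ge 0\big)$ and passes to the min-deconvolution without spelling out why the sum may be truncated at $\min\{t+w,t\}=t$.
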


\begin{proof}
The common form of statistical delay violation probability in \cite{jiang2008stochastic, fidler2010survey,fidler2014guide,lubben2014stochastic} can be denoted as follows:
\begin{equation}\label{e22}
   \mathbb{P}\left(W_{m}\left(t\right) \geq w \right) \leq \mathbb{P}\big( \left(A_{m}\oslash S_{m}\right)\left(t+w,t\right) \geq 0 \big)
\end{equation}
\par According to Chernoff's bound, given a stochastic process $X$, the following inequality holds for any $x > 0$, as follows:
\begin{equation}\label{e23}
  \mathbb{P}\left(X \geq x\right) \leq e^{-\theta x}\mathbb{M}_{X}\left(\theta\right),
\end{equation}
\par Substituting (\ref{e22}) into (\ref{e23}), we can immediately infer that
\begin{equation}\label{e24}
   \begin{aligned}
     \mathbb{P}\left(W_{m}\left(t\right) \geq w \right) & \leq \mathbb{P}\big( \left(A_{m}\oslash S_{m}\right)\left(t+w,t\right) \geq 0 \big)\\
     & \leq \inf_{\theta_{m} > 0} \mathbb{M}_{A_{m} \oslash S_{m}} \left(\theta_{m},t+w,t\right).
   \end{aligned}
\end{equation}

\par According to the definition of MGF, we have
\vspace{-0.2cm}
\begin{equation}\label{e25}
  \begin{aligned}
      & \mathbb{M}_{A_{m} \oslash S_{m}} \!\! \left(\theta_{m},t\!+\!w,t\right)\! \leq \! \sum\limits_{u=0}^{t+w}\!\mathbb{M}_{A_{m}}\!\!\left(\theta_{m},\!u,\!t\right)\!\cdot\! \overline{\mathbb{M}}_{S_{m}}\!\!\left(\theta_{m},u,t\!+\!w\right).
  \end{aligned}
\end{equation}
\par Referring to \emph{\textbf{Definition 1}} and (\ref{e25}), we can obtain that
\vspace{-0.1cm}
\begin{equation}\label{e26}
  \inf_{\theta_{m} > 0} \mathbb{M}_{A_{m} \oslash S_{m}} \left(\theta_{m},t + w,t\right)  \leq  \inf_{\theta_{m} > 0} \boldsymbol{\mathrm{M}}_{A_{m} \widehat{\oslash}S_{m}} \left(\theta_{m},t + w,t\right).
\end{equation}
\par As a result, according to (\ref{e22})-(\ref{e26}), the proof of \textbf{\emph{Theorem 1}} is concluded.
\end{proof}

\vspace{-0.4cm}
\subsection{The Upper-Bound of Statistical Delay Violation Probability}
\par The promotion of MGF-SNC provides a theoretical bedding for deriving the closed-form expression of UB-SDVP. Assume that the increments of the increments of the arrival process $A_{m}\left(s,t\right)$ and service process $S_{m}\left(s,t\right)$ at all different time slots are independent and identically distributed, i.e., $a_{m}(i)$ and $s_{m}(i)$, their respective MGFs and inverse-MGFs can be expressed as follows:
\vspace{-1em}
\begin{subequations}\label{e27}
  \begin{align}
     & \mathbb{M}_{A_{m}} \! \left(\theta_{m},\!s,\!t\right) \! = \! \big(\mathbb{E}\left[e^{\theta_{m}a_{m}}\right]\!\big)^{t-s} \!=\! \big(\mathbb{M}_{a_{m}}\!\!\left(\theta_{m}\right)\!\!\big)^{t-s}, \\
     & \overline{\mathbb{M}}_{S_{m}}\!\left(\theta_{m},\!s,\!t\right) \! = \! \big(\mathbb{E}\left[e^{-\theta_{m}s_{m}}\right]\!\big)^{t-s} \!=\! \big(\overline{\mathbb{M}}_{s_{m}}\!\!\left(\theta_{m}\right)\!\!\big)^{t-s}.
  \end{align}
\end{subequations}
\vspace{-2em}
\par By combining \textrm{Definition 1} and (\ref{e27}), the UB-SDVP can be derived as follows:
\begin{theorem}
    Given a target delay $d_{th}$, the statistical delay violation probability for each mobile UE $m$ is upper bounded by
    \vspace{-0.1cm}
    \begin{equation}\label{e28}
      \mathbb{P}\left(W_{m}\!\left(t\right)\!\geq \!d_{th} \right) \!\leq\! \inf_{\theta_{m} > 0}\! \left\{\! \frac{\big(\overline{\mathbb{M}}_{s_{m}}\left(\theta_{m}\right)\big)^{d_{th}}}{1 \!-\! \mathbb{M}_{a_{m}}\!\left(\theta_{m}\right)\cdot \overline{\mathbb{M}}_{s_{m}}\!\left(\theta_{m}\right)} \!\!\right\}
    \end{equation}
    where $W_{m}\left(t\right)$ denotes the actual delay of mobile UE $m$, $\mathbb{M}_{a_{m}}\left(\theta_{m}\right)\cdot \overline{\mathbb{M}}_{s_{m}}\left(\theta_{m}\right) < 1$ denotes the stability condition.
\end{theorem}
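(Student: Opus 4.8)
The plan is to obtain the closed form directly from \textbf{Theorem 1} by specializing the delay argument to $w = d_{th}$, expanding the min-deconvolution through \textbf{Definition 1}, inserting the i.i.d.\ increment structure recorded in (\ref{e27}), and collapsing the resulting finite geometric series. Since \textbf{Theorem 1} already gives
\[
\mathbb{P}\left(W_{m}(t)\geq d_{th}\right) \leq \inf_{\theta_{m}>0}\boldsymbol{\mathrm{M}}_{A_{m}\widehat{\oslash}S_{m}}(\theta_{m}, t+d_{th}, t),
\]
it suffices to evaluate $\boldsymbol{\mathrm{M}}_{A_{m}\widehat{\oslash}S_{m}}(\theta_{m}, t+d_{th}, t)$ and show it is dominated by the claimed ratio uniformly in $t$.

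First I would apply \textbf{Definition 1} with $s = t+d_{th}$; because $\min\{t+d_{th}, t\} = t$, the summation index ranges over $0 \le u \le t$, yielding $\sum_{u=0}^{t}\mathbb{M}_{A_{m}}(\theta_{m}, u, t)\,\overline{\mathbb{M}}_{S_{m}}(\theta_{m}, u, t+d_{th})$. Second, I would substitute the i.i.d.\ forms of (\ref{e27}), namely $\mathbb{M}_{A_{m}}(\theta_{m}, u, t) = \big(\mathbb{M}_{a_{m}}(\theta_{m})\big)^{t-u}$ and $\overline{\mathbb{M}}_{S_{m}}(\theta_{m}, u, t+d_{th}) = \big(\overline{\mathbb{M}}_{s_{m}}(\theta_{m})\big)^{t+d_{th}-u}$, which is legitimate since $t-u\ge 0$ and $t+d_{th}-u\ge 0$ for every admissible $u$. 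Third, factoring out $\big(\overline{\mathbb{M}}_{s_{m}}(\theta_{m})\big)^{d_{th}}$ and re-indexing via $k = t-u$ rewrites the sum as $\big(\overline{\mathbb{M}}_{s_{m}}(\theta_{m})\big)^{d_{th}}\sum_{k=0}^{t}\big(\mathbb{M}_{a_{m}}(\theta_{m})\,\overline{\mathbb{M}}_{s_{m}}(\theta_{m})\big)^{k}$. Fourth, under the stability condition $\mathbb{M}_{a_{m}}(\theta_{m})\,\overline{\mathbb{M}}_{s_{m}}(\theta_{m}) < 1$ all summands are nonnegative, so the partial sum is bounded by the full geometric series $\sum_{k=0}^{\infty}(\cdot)^{k} = \big(1 - \mathbb{M}_{a_{m}}(\theta_{m})\,\overline{\mathbb{M}}_{s_{m}}(\theta_{m})\big)^{-1}$, independently of $t$; taking the infimum over $\theta_{m}>0$ then delivers (\ref{e28}).

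The main obstacle is more a matter of careful bookkeeping than genuine difficulty: one must keep the two time arguments of the min-deconvolution aligned so that the exponents $t-u$ and $t+d_{th}-u$ never become negative, which is precisely why the upper summation limit is $t$ rather than $t+d_{th}$ and why the $\min\{\cdot,\cdot\}$ in \textbf{Definition 1} is essential here. The only analytic input needed is that the per-slot quantities $\mathbb{M}_{a_{m}}(\theta_{m})$ and $\overline{\mathbb{M}}_{s_{m}}(\theta_{m})$ are finite and that their product stays strictly below $1$ for the chosen $\theta_{m}$; granted that, the domination of the finite sum by the infinite geometric series is immediate and holds at every time slot $t$, so no steady-state limiting argument is required and the bound is valid for all $t$.
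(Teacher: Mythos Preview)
Your proposal is correct and follows essentially the same route as the paper: invoke \textbf{Theorem 1}, expand the min-deconvolution via \textbf{Definition 1}, insert the i.i.d.\ increment identities (\ref{e27}), and dominate the resulting finite geometric sum by its infinite counterpart under the stability condition. The only cosmetic difference is that the paper carries general time arguments $(s,t)$ with $\tau=\max\{0,s-t\}$ and re-indexes via $v=s-u$, whereas you specialize immediately to $s=t+d_{th}$ and re-index via $k=t-u$; your version is slightly cleaner for exactly the reason you mention, namely that the $\min\{\cdot,\cdot\}$ in \textbf{Definition 1} keeps all exponents nonnegative from the start.
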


\begin{proof}
  According to \emph{\textbf{Definition 1}}, the min-deconvolution $\widehat{\oslash}$ between $A_{m}(s,t)$ and $S_{m}(s,t)$ is represented by (\ref{e29}),
\begin{figure*}[b]
  \setstretch{0.90}
\centering
\hrulefill
  \begin{equation}\label{e29}
     \begin{aligned}
      & \boldsymbol{\mathrm{M}}_{A_{m}\widehat{\oslash} S_{m}} \left(\theta_{m},s,t\right) \\
      &\overset{(a)}{\leq} \sum_{u=0}^{\min\left(s,t\right)} \left(\mathbb{M}_{a_{m}}\left(\theta_{m}\right)\right)^{t-u} \cdot \left(\overline{\mathbb{M}}_{s_{m}}\left(\theta_{m}\right)\right)^{s-u}
      \overset{(b)}{=} \left(\mathbb{M}_{a_{m}}\left(\theta_{m}\right)\right)^{t-s}\cdot \sum\limits_{v = \tau}^{s}\left(\mathbb{M}_{a_{m}}\left(\theta_{m}\right) \overline{\mathbb{M}}_{s_{m}}\left(\theta_{m}\right)\right)^{v}\\
      & \overset{(c)}{\leq} \left(\mathbb{M}_{a_{m}}\left(\theta_{m}\right)\right)^{t-s}\cdot \sum\limits_{v = \tau}^{\infty}\left(\mathbb{M}_{a_{m}}\left(\theta_{m}\right) \overline{\mathbb{M}}_{s_{m}}\left(\theta_{m}\right)\right)^{v} \overset{(d)}{=} \frac{\left(\mathbb{M}_{a_{m}}\left(\theta_{m}\right)\right)^{t-s}\cdot \left(\mathbb{M}_{a_{m}}\left(\theta_{m}\right) \overline{\mathbb{M}}_{s_{m}}\left(\theta_{m}\right)\right)^{\tau} }{1 - \mathbb{M}_{a_{m}}\left(\theta_{m}\right) \overline{\mathbb{M}}_{s_{m}}\left(\theta_{m}\right)}
     \end{aligned}
  \end{equation}
\hrulefill
\end{figure*}
  where $\tau = \max\left\{0,s-t\right\}$. By substituting (\ref{e27}a) and (\ref{e27}b) into (\ref{e20}), inequality (a) can be obtained. Through a variable replacement, which sets $v = s-u$, equality $(b)$ is derived from inequality $(a)$. Scaling the upper bound of the summation sign in $(b)$ from $s$ to $\infty$, inequality $(c)$ is obtained. Applying the property of the geometric series convergence and assuming that the stability condition $\mathbb{M}_{a_{m}}\left(\theta_{m}\right)\cdot \overline{\mathbb{M}}_{s_{m}}\left(\theta_{m}\right) < 1$ holds, equation $(d)$ can be obtained.
  \par By combining \textbf{\emph{Theorem 1}} and (\ref{e29}), we can finally derive that
  \begin{equation}\label{e30}
     \begin{aligned}
        \mathbb{P}\!\left(W_{m}(t) \!\geq\! d_{th}\right) & \leq\! \inf_{\theta_{m} > 0}\!\!\left\{\!\!\frac{\left(\overline{\mathbb{M}}\!_{s_{m}}\left(\theta_{m}\right)\right)^{d_{th}}}{1-\mathbb{M}\!_{a_{m}}\left(\theta_{m}\right)\overline{\mathbb{M}}\!_{s_{m}}\left(\theta_{m}\right)}\!\!\right\},
     \end{aligned}
  \end{equation}
  \par So the proof of \textbf{\emph{Theorem 2}} is concluded.
\end{proof}

\par Theorem 2 states that the closed-form expressions for the MGF of the arrival increments $a_{m}$, i.e., $\mathbb{M}_{a_{m}}\left(\theta_{m}\right)$, and the inverse-MGF of the service increments $s_{m}$, i.e., $\overline{\mathbb{M}}_{s_{m}}\left(\theta_{m}\right)$, are essential for further determining UB-SDVP. This motivates the follow-up work discussed in this paper.

\subsubsection{The MGF of the Arrival Process}
\par We consider the arrival process of each mobile UE $m$ follows a Poission distribution. Then, the MGFs of $A_{m}(\theta_{m},s,t)$ and $a_{m}(i)$ can be respectively represented as follows:
\vspace{-0.5em}
\begin{subequations}\label{e31}
   \begin{align}
     & \mathbb{M}_{A_{m}}\!\!\left(\theta_{m},\!s,\!t\right) \!=\! \mathbb{E}\bigg[\!\bigg(\prod_{i=s}^{t-1}\!\!e^{a_{m}(i)}\!\!\bigg)^{\!\!\theta_{m}}\!\bigg]\! = \! \left(\mathbb{M}_{a_{m}}\!\!\left(\theta_{m}\right)\right)^{s-t},\\
     & \mathbb{M}_{a_{m}}\left(\theta_{m}\right) = \sum\limits_{i=1}^{\infty}e^{i\theta_{m}}\frac{(\lambda_{m}^{\dag})^{i}e^{-\lambda_{m}^{\dag}}}{i!} = e^{\lambda_{m}^{\dag}\left(e^{\theta_{m}}-1\right)},
   \end{align}
\end{subequations}
\vspace{-0.5em}
where $\lambda_{m}^{\dag}$ denotes the average arrival rate of $A_{m}\left(s,t\right)$.

\subsubsection{The Inverse-MGF of the Service Process}
\par The inverse-MGF of the service process for each mobile UE $m$ is related to the channel conditions $\hat{\gamma}_{m}\left(N_{LoP},\boldsymbol{\mathrm{L}}_{m}\right)$, and the relevant results are given in \textrm{Theorem 3}.

\begin{theorem}
     Given the QoS exponent $\theta_{m}$ and decoding error probability $\hat{\epsilon}_{m}$, the inverse-MGF of the service process for each mobile UE $m$, denoted by $\overline{\mathbb{M}}_{s_{m}}\left(\theta_{m}\right)$, can be given as follows:
     \vspace{-1em}
    \begin{equation}\label{e32}
       \begin{aligned}
         \!\!\!\overline{\mathbb{M}}_{s_{m}}\!\!\left(\theta_{m}\right)\cong \mathcal{H}_{m}\!\left(N_{LoP},\boldsymbol{\mathrm{L}}_{m},\hat{\epsilon}_{m}\right) + \left(1\!-\!\hat{\epsilon}_{m}\right)\overline{\mathbb{M}}_{\hat{\gamma}_{m}}\!\!\left(\Theta_{m}\right), 
       \end{aligned}
    \end{equation}
    \vspace{-0.5em}
    where
    \vspace{-0.5em}
    \begin{equation}\label{e33}
        \mathcal{H}_{m}\left(N_{LoP},\boldsymbol{\mathrm{L}}_{m},\hat{\epsilon}_{m}\right) \triangleq \hat{\epsilon}_{m} + \left(1-\hat{\epsilon}_{m}\right)\int_{0}^{\gamma_{0}}p_{\hat{\gamma}_{m}}(x)\mathrm{d}x,
    \end{equation}
    \vspace{-0.5em}
    and $\Theta_{m} = \frac{\theta_{m}R_{CU}}{\ln2}$ denotes the normalization of QoS exponent $\theta_{m}$.
\end{theorem}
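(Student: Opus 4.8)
The plan is to construct the per-slot service increment $s_{m}(i)$ explicitly from the finite-blocklength model of Section II-D, compute its inverse-MGF, and then separate the expectation over the (approximately Gamma-distributed) SINR $\hat{\gamma}_{m}$ at the zero-rate threshold $\gamma_{0}$.

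First I would fix the service dynamics. In each time slot mobile UE $m$ sends one xURLLC short packet which, conditioned on the instantaneous SINR $\hat{\gamma}_{m}(N_{LoP},\boldsymbol{\mathrm{L}}_{m})$, is transmitted at the maximum achievable rate $r_{m}(\hat{\gamma}_{m},\hat{\epsilon}_{m})$ of (\ref{e16})/(\ref{e18}), so that it carries $R_{CU}\,r_{m}(\hat{\gamma}_{m},\hat{\epsilon}_{m})$ bits. Under FCFS with ACK/NACK feedback, the packet is correctly decoded and those bits are credited to the departure process with probability $1-\hat{\epsilon}_{m}$, while with probability $\hat{\epsilon}_{m}$ decoding fails and no bits are served in that slot. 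Hence $s_{m}(i)$ is the mixed random variable equal to $R_{CU}\,r_{m}(\hat{\gamma}_{m},\hat{\epsilon}_{m})$ with probability $1-\hat{\epsilon}_{m}$ and equal to $0$ with probability $\hat{\epsilon}_{m}$, with the decoding-success indicator independent of $\hat{\gamma}_{m}$; by the i.i.d.-across-slots assumption behind (\ref{e27}b) it suffices to evaluate $\overline{\mathbb{M}}_{s_{m}}(\theta_{m})=\mathbb{E}[e^{-\theta_{m}s_{m}}]$ for a single slot.

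Next I would carry out the expectation. Conditioning on the decoding outcome gives $\overline{\mathbb{M}}_{s_{m}}(\theta_{m})=\hat{\epsilon}_{m}+(1-\hat{\epsilon}_{m})\,\mathbb{E}_{\hat{\gamma}_{m}}\!\big[e^{-\theta_{m}R_{CU}\,r_{m}(\hat{\gamma}_{m},\hat{\epsilon}_{m})}\big]$. Substituting $r_{m}=\log_{2}\tilde{f}(\hat{\gamma}_{m},\hat{\epsilon}_{m})$ from (\ref{e18}) and using $e^{-\theta_{m}R_{CU}\log_{2}\tilde{f}}=\tilde{f}^{-\theta_{m}R_{CU}/\ln 2}=\tilde{f}^{-\Theta_{m}}$ with $\Theta_{m}=\theta_{m}R_{CU}/\ln 2$ turns the inner expectation into $\mathbb{E}_{\hat{\gamma}_{m}}[\tilde{f}(\hat{\gamma}_{m},\hat{\epsilon}_{m})^{-\Theta_{m}}]$. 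I would then split this integral at $\gamma_{0}$ according to the two branches of (\ref{e19}): on $\{\hat{\gamma}_{m}\le\gamma_{0}\}$ we have $\tilde{f}=1$, contributing $\int_{0}^{\gamma_{0}}p_{\hat{\gamma}_{m}}(x)\,\mathrm{d}x$, and on $\{\hat{\gamma}_{m}>\gamma_{0}\}$ the contribution is $\int_{\gamma_{0}}^{\infty}\tilde{f}(x,\hat{\epsilon}_{m})^{-\Theta_{m}}p_{\hat{\gamma}_{m}}(x)\,\mathrm{d}x$, which is exactly $\overline{\mathbb{M}}_{\hat{\gamma}_{m}}(\Theta_{m})$. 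Collecting the first two terms into $\mathcal{H}_{m}(N_{LoP},\boldsymbol{\mathrm{L}}_{m},\hat{\epsilon}_{m})$ of (\ref{e33}) yields (\ref{e32}); the symbol $\cong$ records that the true density of $\hat{\gamma}_{m}$ is replaced by the Gamma approximation (\ref{e14}) (and that $r_{m}$ is itself the FBC approximation (\ref{e16})).

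The main obstacle is not the algebra but pinning down the service model rigorously: one must argue that a decoding failure contributes exactly zero to the per-slot service (no partial credit; retransmissions are absorbed by the queue), that the block-error indicator may be taken independent of $\hat{\gamma}_{m}$ even though the transmit rate is selected from the target $\hat{\epsilon}_{m}$, and that the zero-rate truncation in (\ref{e19}) is the right mechanism for handling outage events $\hat{\gamma}_{m}\le\gamma_{0}$. Once these modeling choices are fixed, the identity $e^{-\theta_{m}R_{CU}\log_{2}(\cdot)}=(\cdot)^{-\Theta_{m}}$ together with the split at $\gamma_{0}$ delivers (\ref{e32}) directly.
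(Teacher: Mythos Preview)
Your derivation tracks the paper exactly up to the split at $\gamma_{0}$, but the step where you assert that
\[
\int_{\gamma_{0}}^{\infty}\tilde{f}(x,\hat{\epsilon}_{m})^{-\Theta_{m}}\,p_{\hat{\gamma}_{m}}(x)\,\mathrm{d}x
\;=\;\overline{\mathbb{M}}_{\hat{\gamma}_{m}}(\Theta_{m})
\]
is a genuine gap. By definition (see the footnote after Definition~1), the inverse-MGF of the random variable $\hat{\gamma}_{m}$ is $\overline{\mathbb{M}}_{\hat{\gamma}_{m}}(\Theta_{m})=\mathbb{E}\big[e^{-\Theta_{m}\hat{\gamma}_{m}}\big]=\int_{0}^{\infty}e^{-\Theta_{m}x}\,p_{\hat{\gamma}_{m}}(x)\,\mathrm{d}x$. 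This is \emph{not} the same object as your integral: the integrand $\tilde{f}(x,\hat{\epsilon}_{m})^{-\Theta_{m}}$ equals $(1+x)^{-\Theta_{m}}\exp\{\Theta_{m}\sqrt{\mathcal{V}(x)/R_{CU}}\,Q^{-1}(\hat{\epsilon}_{m})\}$, not $e^{-\Theta_{m}x}$, and the domain of integration is $[\gamma_{0},\infty)$, not $[0,\infty)$. Simply relabelling the integral does not give (\ref{e32}).

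What the paper actually does, and what your proposal is missing, is a chain of high-SINR approximations applied to the upper-tail integral $\mathcal{G}_{m}$: first drop the dispersion penalty so that $\tilde{f}\approx 1+\hat{\gamma}_{m}$; next replace $(1+\hat{\gamma}_{m})^{-\Theta_{m}}$ by $e^{-\Theta_{m}\ln\hat{\gamma}_{m}}$; then bound $\ln\hat{\gamma}_{m}$ by its first-order Taylor expansion to reach $e^{-\Theta_{m}(\hat{\gamma}_{m}-1)}$; and finally absorb the constant factor and extend the lower limit from $\gamma_{0}$ back to $0$. Only after these four steps does $\mathcal{G}_{m}$ collapse to $\mathbb{E}[e^{-\Theta_{m}\hat{\gamma}_{m}}]=\overline{\mathbb{M}}_{\hat{\gamma}_{m}}(\Theta_{m})$. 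Consequently, the $\cong$ in (\ref{e32}) encodes primarily this high-SINR simplification, not (as you write) merely the Gamma fit (\ref{e14}) or the FBC normal approximation (\ref{e16}). Your proof plan needs to insert and justify these approximations explicitly between the split at $\gamma_{0}$ and the final identification with $\overline{\mathbb{M}}_{\hat{\gamma}_{m}}(\Theta_{m})$.
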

\begin{proof}
     By combining \textbf{\emph{Definition 1}} and (16), the inverse-MGF of service process for each mobile UE $m$ is given as follows:
    \begin{equation}\label{e34} 
       \begin{aligned}          
          & \overline{\mathbb{M}}_{s_{m}}\left(\theta_{m}\right)= \mathbb{E}_{\hat{\gamma}_{m}}\left[e^{-\theta_{m}R_{CU}r\left(N_{LoP},\boldsymbol{\mathrm{L}}_{m}\right)}\right]\\
          & = \!\!\! \int_{0}^{\infty} \!\!\! \bigg(\!\! \left(1\!-\!\hat{\epsilon}_{m}\right)e^{-\theta_{m}R_{CU}\!\log_{2}\!\tilde{f}\left(\hat{\gamma}_{m},\hat{\epsilon}_{m}\right)} \!+\! \hat{\epsilon}_{m} \!\! \bigg)\cdot p_{\hat{\gamma}_{m}}\!\left(x\right)\mathrm{d}x \\
          & = \hat{\epsilon}_{m} + \left(1-\hat{\epsilon}_{m}\right)\int_{0}^{\infty} e^{-\theta_{m}R_{CU}\log_{2}\tilde{f}\left(\hat{\gamma}_{m},\hat{\epsilon}_{m}\right)}\cdot p_{\hat{\gamma}_{m}}\left(x\right) \mathrm{d}x \\
          & = \hat{\epsilon}_{m} + \left(1-\hat{\epsilon}_{m}\right)\bigg( \int_{\gamma_{0}}^{\infty}\left(f\left(\hat{\gamma}_{m},\hat{\epsilon}_{m}\right)\right)^{-\Theta_{m}}\cdot p_{\hat{\gamma}_{m}}(x)\mathrm{d}x + \int_{0}^{\gamma_{0}}p_{\hat{\gamma}_{m}}(x)\mathrm{d}x \bigg)\\
          & = \mathcal{H}_{m}\left(N_{LoP},\boldsymbol{\mathrm{L}}_{m},\hat{\epsilon}_{m}\right) + \mathcal{G}_{m}\left(N_{LoP},\boldsymbol{\mathrm{L}}_{m},\hat{\epsilon}_{m}\right).
       \end{aligned}
    \end{equation}
    \vspace{-0.2cm}
    where
    \begin{equation*}
       \begin{cases}
         \mathcal{H}_{m}\left(N_{LoP},\boldsymbol{\mathrm{L}}_{m},\hat{\epsilon}_{m}\right) \triangleq \hat{\epsilon}_{m} + \left(1-\hat{\epsilon}_{m}\right)\int_{0}^{\gamma_{0}}p_{\hat{\gamma}_{m}}(x)\mathrm{d}x, \\
         \mathcal{G}_{m}\left(N_{LoP},\boldsymbol{\mathrm{L}}_{m},\hat{\epsilon}_{m}\right) \triangleq \int_{\gamma_{0}}^{\infty}\left(f\left(\hat{\gamma}_{m},\hat{\epsilon}_{m}\right)\right)^{-\Theta_{m}}\cdot p_{\hat{\gamma}_{m}}(x)\mathrm{d}x.
       \end{cases}
    \end{equation*}
    \vspace{-0.3cm}
    \par When in the high-end SINR region, i.e., $\hat{\gamma}_{m} \gg 1$, we obtain that $\sqrt{\frac{\hat{\gamma}_{m}^{2}+\hat{\gamma}_{m}}{\left(1+\hat{\gamma}_{m}\right)^{2}}} \approx 1$ and $\sqrt{\frac{\hat{\gamma}_{m}^{2}+\hat{\gamma}_{m}}{\left(1+\hat{\gamma}_{m}\right)^{2}}}Q^{-1}(\hat{\epsilon}_{m}) \approx 0$. Correspondingly, the expression of $\mathcal{G}_{m}\left(N_{LoP},\boldsymbol{\mathrm{L}}_{m},\hat{\epsilon}_{m}\right)$ can be reformulated by (\ref{e35}),
    \begin{figure*}[b]
    \setstretch{0.88}
    \centering
    \hrulefill
    \begin{equation}\label{e35}
       \begin{aligned}
           \mathcal{G}_{m} & = \int_{\gamma_{0}}^{\infty}\biggl(\frac{1+\hat{\gamma}_{m}}{exp\left\{\sqrt{\frac{\hat{\gamma}_{m}^{2}+2\hat{\gamma}_{m}}{\left(1+\hat{\gamma}_{m}\right)^{2}}}   Q^{-1}\left(\hat{\epsilon}_{m}\right)\right\}}\biggl)^{-\Theta_{m}}\cdot p_{\hat{\gamma}_{m}}\left(x\right)\mathrm{d}x \overset{(a)}{\approx} \int_{\gamma_{0}}^{\infty} \left(1+\hat{\gamma}_{m}\right)^{-\Theta_{m}}\cdot p_{\hat{\gamma}_{m}}\left(x\right)\mathrm{d}x \\
           &\overset{(b)}{\approx} \int_{\gamma_{0}}^{\infty} e^{-\Theta_{m}\ln\left(\hat{\gamma}_{m}\right)}\cdot p_{\hat{\gamma}_{m}}\left(x\right)\mathrm{d}x \overset{(c)}{\leq} \int_{\gamma_{0}}^{\infty} e^{-\Theta_{m}\left(\hat{\gamma}_{m}-1\right)}\cdot p_{\hat{\gamma}_{m}}\left(x\right)\mathrm{d}x \overset{(d)}{\approx} \int_{0}^{\infty} e^{-\Theta_{m}\hat{\gamma}_{m}}\cdot p_{\hat{\gamma}_{m}}\left(x\right)\mathrm{d}x \\
           & = \mathbb{E}_{\hat{\gamma}_{m}}\left[e^{-\Theta_{m}\hat{\gamma}_{m}}\right]= \overline{\mathbb{M}}_{\hat{\gamma}_{m}}\left(\Theta_{m}\right)
       \end{aligned}
    \end{equation}
    \hrulefill
    \end{figure*}
    where $(a)$, $(b)$, and $(d)$ can be directly obtained from $\hat{\gamma}_{m} \gg 1$, $(c)$ is obtained by using the first-order Taylor expansion.
    \par By combining (\ref{e33})-(\ref{e35}), we can finally derive that
    \begin{equation}\label{e36}
       \begin{aligned}
          \overline{\mathbb{M}}_{s_{k}}\!\left(\theta_{k}\right) \cong \mathcal{H}_{k}\!\left(N_{LoP},\boldsymbol{\mathrm{L}}_{m},\hat{\epsilon}_{m}\right) + \left(1\!-\!\hat{\epsilon}_{m}\right)\overline{\mathbb{M}}_{\hat{\gamma}_{m}}\!\left(\Theta_{m}\right).
       \end{aligned}
    \end{equation}
    \par So the proof of \textbf{\emph{Theorem 3}} is concluded.
\end{proof}

\par By combining \textrm{Theorem 1, 2, and 3}, the closed-from expression of UB-SDVP can be approximated by the following theorem.
\begin{theorem}
   Given QoS exponent $\theta_{m}$ and decoding error probability $\hat{\epsilon}_{m}$, the UB-SDVP of $\mathbb{P}\left(W_{m}\left(t\right) \geq d_{th} \right)$ for each mobile UE $m$ can be approximated as follows:
   \vspace{-0.2cm}
   \begin{equation}\label{e37}
      \begin{aligned}
        \!\!\!\! & \mathbb{P}\left(W_{m}\left(t\right) > d_{th} \right) \cong \frac{\big(\mathcal{H}_{m}\!\left(N_{LoP},\boldsymbol{\mathrm{L}}_{m},\!\hat{\epsilon}_{m}\right) \!+\! \left(1\!-\!\hat{\epsilon}_{m}\right)\!\overline{\mathbb{M}}_{\hat{\gamma}_{m}}\!\left(\!\Theta_{m}\!\right)\!\!\big)^{d_{th}}}{1\!-\!e^{\lambda_{m}^{\dag}\left(e^{\theta_{m}}-1\right)}\left(\!\mathcal{H}_{k}\!\left(N_{LoP},\boldsymbol{\mathrm{L}}_{m},\!\hat{\epsilon}_{m}\right) \!+\! \left(1\!-\!\hat{\epsilon}_{m}\!\right)\!\overline{\mathbb{M}}_{\hat{\gamma}_{m}}\!\!\left(\!\Theta_{m}\!\right)\!\right)}.
      \end{aligned}
   \end{equation}
\end{theorem}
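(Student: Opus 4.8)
The plan is to obtain \textbf{Theorem 4} by a direct substitution of the closed-form characterizations of the arrival and service statistics, already established in equation~(\ref{e31}) and \textbf{Theorem 3}, into the generic UB-SDVP bound~(\ref{e28}) furnished by \textbf{Theorem 2}. Concretely, I would first invoke \textbf{Theorem 2} with the target delay $d_{th}$, giving
\begin{equation*}
  \mathbb{P}\left(W_{m}(t) \geq d_{th}\right) \leq \inf_{\theta_{m} > 0} \frac{\big(\overline{\mathbb{M}}_{s_{m}}(\theta_{m})\big)^{d_{th}}}{1 - \mathbb{M}_{a_{m}}(\theta_{m})\,\overline{\mathbb{M}}_{s_{m}}(\theta_{m})},
\end{equation*}
valid whenever the stability condition $\mathbb{M}_{a_{m}}(\theta_{m})\,\overline{\mathbb{M}}_{s_{m}}(\theta_{m}) < 1$ holds, which keeps the denominator strictly positive.

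Second, I would substitute the Poisson arrival MGF from~(\ref{e31}b), namely $\mathbb{M}_{a_{m}}(\theta_{m}) = e^{\lambda_{m}^{\dag}(e^{\theta_{m}} - 1)}$, and the service inverse-MGF from \textbf{Theorem 3}, namely $\overline{\mathbb{M}}_{s_{m}}(\theta_{m}) \cong \mathcal{H}_{m}(N_{LoP},\boldsymbol{\mathrm{L}}_{m},\hat{\epsilon}_{m}) + (1 - \hat{\epsilon}_{m})\overline{\mathbb{M}}_{\hat{\gamma}_{m}}(\Theta_{m})$ with $\Theta_{m} = \theta_{m} R_{CU}/\ln 2$. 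Performing this replacement in both the numerator and the denominator, and reading the result at a fixed feasible QoS exponent $\theta_{m}$ (the explicit infimum being what one subsequently optimizes), yields precisely the right-hand side of~(\ref{e37}). For completeness I would also observe that $\mathcal{H}_{m}\in(0,1)$ and $\overline{\mathbb{M}}_{\hat{\gamma}_{m}}(\Theta_{m})\in(0,1)$, so the bracketed service term lies in $(0,1)$ and its $d_{th}$-th power is well defined.

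A small but necessary verification is that substituting an approximate quantity into~(\ref{e28}) preserves an approximation rather than corrupting it: the map $x \mapsto x^{d_{th}}/(1 - \mathbb{M}_{a_{m}}(\theta_{m}) x)$ is monotonically increasing on the stability region $x \in \big(0,\, 1/\mathbb{M}_{a_{m}}(\theta_{m})\big)$, since $x^{d_{th}}$ is increasing for $d_{th} > 0$ while $1 - \mathbb{M}_{a_{m}}(\theta_{m}) x$ is positive and decreasing; hence plugging in the $\cong$-expression of $\overline{\mathbb{M}}_{s_{m}}(\theta_{m})$ carries the $\cong$ through to~(\ref{e37}).

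The bulk of the argument is mechanical; the only place where care is warranted — and what I would flag as the main obstacle — is the bookkeeping of the approximation symbol and the implicit infimum. \textbf{Theorem 2} is a genuine upper bound, whereas \textbf{Theorem 3} is an approximation (through the high-SINR Taylor step in~(\ref{e35})), so~(\ref{e37}) must be read as an approximate upper bound evaluated at a given $\theta_{m}$ rather than an exact inequality, and the monotonicity remark above is exactly what licenses chaining the two statements. No new estimates beyond those already proved are required.
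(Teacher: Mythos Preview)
Your proposal is correct and follows essentially the same approach as the paper, which simply states that Theorem~4 follows by combining Theorems~1, 2, 3 and~(\ref{e32}). Your additional remarks on the monotonicity of $x \mapsto x^{d_{th}}/(1-\mathbb{M}_{a_{m}}(\theta_{m})x)$ and the bookkeeping of the approximation symbol go beyond what the paper spells out, but they are consistent with (and in fact sharpen) the intended argument.
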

\begin{proof}
   According to Theorem 1, 2, 3 and (\ref{e32}), Theorem 4 can be easily proved.
\end{proof}
\vspace{-0.3cm}
\section{Minimizing Upper-Bounded Statistical Delay Violation Probability}
\par In Sec. III, the promoted MGF-SNC is employed to derive the closed-form expression of UB-SDVP. However, the decoding error probability is tightly correlated with the channel conditions, namely $\hat{\gamma}_{m}\left(N_{LoP},\boldsymbol{\mathrm{L}}_{m}\right)$ \cite{zeng2019enabling,ren2020joint,zhang2020statistical0}. In this section, we formulate and thoroughly investigate the UB-SDVP minimization problem, in which the decoding error probability is considered as a bivariate function with respect to $\left\{N_{LoP},\boldsymbol{\mathrm{L}}_{m}\right\}$.
\vspace{-0.5cm}
\subsection{Minimizing the Decoding Error Probability}
\par According to (\ref{e16}), given the target maximum achievable data rate $r_{m}$, the decoding error probability for each mobile UE $m$ can be closely approximated as follows:
\vspace{-0.2cm}
\begin{equation}\label{e38}
   \!\!\hat{\epsilon}_{m}\big(N_{LoP},\!\boldsymbol{\mathrm{L}}_{m}\!\big) \!=\! Q\!\!\left(\!\!\!\sqrt{\!\!\frac{R_{CU}\!\!\left(C\big(\hat{\gamma}_{m}(N_{LoP},\!\boldsymbol{\mathrm{L}}_{m})\big)\!-\!r_{m}\!\right)^{2}}{\mathcal{V}\big(\hat{\gamma}_{m}(N_{LoP},\!\boldsymbol{\mathrm{L}}_{m})\big)}}\!\right)\!\!.
\end{equation}
\vspace{-1.5em}
\par The expectation of the decoding error probability $\hat{\epsilon}_{m}\big(N_{LoP},\boldsymbol{\mathrm{L}}_{m}\big)$ with respect to SINR $\hat{\gamma}_{m}$ can be expressed as follows:
\vspace{-0.2cm}
\begin{equation}\label{e39}
   \begin{aligned} 
      & \boldsymbol{\mathcal{E}}_{m}\left(N_{LoP},\boldsymbol{L}_{m}\right) = \mathbb{E}_{\hat{\gamma}_{m}}\left[\hat{\epsilon}_{m}\big(N_{LoP},\boldsymbol{\mathrm{L}}_{m}\big)\right] = \int_{0}^{\infty} \hat{\epsilon}_{m}\big(N_{LoP},\boldsymbol{\mathrm{L}}_{m}\big)\cdot p_{\hat{\gamma}_{m}\left(N_{LoP},\boldsymbol{\mathrm{L}}_{m}\right)} \left(x\right)\mathrm{d} x,
   \end{aligned}
\end{equation}
where $\mathbb{E}_{\hat{\gamma}_{m}}\!\left[\ \cdot\ \right]$ represents the expectation operation over SINR $\hat{\gamma}(N_{LoP},\boldsymbol{\mathrm{L}}_{m})$.
\par According to Theorem 2, the minimization problem for the upper-bounded statistical delay violation probability can be formulated as follows:
\vspace{-0.3cm}
\begin{subequations}\label{e40}
   \begin{align}
     \mathcal{P}1:\ \  \hat{\boldsymbol{\epsilon}}^{\star}\!\!\left(\boldsymbol{\mathrm{L}}^{\star},N_{LoP}^{\star}\right) & = \!\arg\!\!\min_{\{\boldsymbol{\mathrm{L}}, N_{LoP}\!\}}\!\!\bigg\{\!\sum\limits_{m=1}^{M} \frac{\left(\overline{\mathbb{M}}_{s_{m}}\!\!\left(\theta_{m}\right)\right)^{d_{th}}}{1-\mathbb{M}_{a_{m}}\!\!\left(\theta_{m}\right)\overline{\mathbb{M}}_{s_{m}}\!\!\left(\theta_{m}\right)} \bigg\} \\
     & = \arg\min_{\{\boldsymbol{\mathrm{L}}, N_{LoP}\}}\bigg\{\sum\limits_{m=1}^{M} \overline{\mathbb{M}}_{s_{m}}\left(\theta_{m}\right) \bigg\},
   \end{align}
\end{subequations}
where $\boldsymbol{\mathrm{L}} \triangleq \left\{\boldsymbol{\mathrm{L}}_{1},\cdots,\boldsymbol{\mathrm{L}}_{M}\right\}$ denotes the linear detectors matrix vector of mobile UEs. $\hat{\boldsymbol{\epsilon}}^{\star}\left(\boldsymbol{\mathrm{L}}^{\star},N_{LoP}^{\star}\right) \triangleq \left\{\hat{\epsilon}_{1}^{\star}\left(\boldsymbol{\mathrm{L}}_{1}^{\star},N_{LoP}^{\star}\right),\cdots,\hat{\epsilon}_{M}^{\star}\left(\boldsymbol{\mathrm{L}}_{M}^{\star},N_{LoP}^{\star}\right)\right\}$ represents the minimum decoding error probabilities of the $M$ mobile UEs corresponding to the optimal linear detector matrix vector $\boldsymbol{\mathrm{L}}^{\star}$ and the optimal pilot length $N_{LoP}^{\star}$.
\vspace{-0.1cm}
\par According to Theorem 3, the inverse-MGF of the service process for each mobile UE $m$, denoted by $\overline{\mathbb{M}}_{s_{m}}\left(\theta_{m}\right)$, can be denoted as follows:
\vspace{-0.1cm}
\begin{equation}\label{e41}
    \begin{aligned}
      \overline{\mathbb{M}}_{s_{m}}\!\!\left(\theta_{m}\!\right) \triangleq  \left(1-e^{-\theta_{m}R_{CU}r_{m}}\right)\boldsymbol{\mathcal{E}}_{m}\left(N_{LoP},\boldsymbol{L}_{m}\right) + e^{-\theta_{m}R_{CU}r_{m}}.
    \end{aligned}
\end{equation}
\vspace{-0.3cm}
\par Substituting (\ref{e41}) back into (\ref{e40}b), problem $\mathcal{P}1$ can be converted into the following equivalent minimization problem, denoted by $\mathcal{P}2$, as follows:
\vspace{-1.0em}
\begin{equation}\label{e42}
  \begin{aligned}
    & \mathcal{P}2:\ \hat{\boldsymbol{\epsilon}}^{\star}\left(\boldsymbol{\mathrm{L}}^{\star},N_{LoP}^{\star}\right) = \arg\min_{\{\boldsymbol{\mathrm{L}}, N_{LoP}\}}\!\!\!\bigg\{\sum\limits_{m=1}^{M} \!e^{-\theta_{m}R_{CU}r_{m}} \!+\! \left(1\!-\!e^{-\theta_{m}R_{CU}r_{m}}\right)\!\!\boldsymbol{\mathcal{E}}_{m}\left(N_{LoP},\boldsymbol{\mathrm{L}}_{m}\right) \!\!\bigg\}.
  \end{aligned}
\end{equation}
\vspace{-1.0em}
\par The structure of problem $\mathcal{P}2$ reveals that the monotonicity of decoding error probability $\hat{\epsilon}_{m}\left(\boldsymbol{N_{LoP},\mathrm{L}}_{m}\right)$ of great importance in analyzing the convexity of the minimization problem $\mathcal{P}2$; this motivates Theorem 5, as follows:
\begin{theorem}
  The decoding error probability $\hat{\epsilon}_{m}\big(N_{LoP},\boldsymbol{\mathrm{L}}_{m}\big)$ of each mobile UE $m$ is a strictly monotonically decreasing function with respect to SINR $\hat{\gamma}_{m}(N_{LoP},\boldsymbol{\mathrm{L}}_{m})$ for a given maximum achievable data rate $r_{m}$.
\end{theorem}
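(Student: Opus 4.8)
The plan is to show that $\hat{\epsilon}_m(N_{LoP},\boldsymbol{\mathrm{L}}_m)$ depends on $(N_{LoP},\boldsymbol{\mathrm{L}}_m)$ only through the scalar SINR $\hat{\gamma}_m=\hat{\gamma}_m(N_{LoP},\boldsymbol{\mathrm{L}}_m)$, and then to argue that the map $\hat{\gamma}_m \mapsto \hat{\epsilon}_m$ is strictly decreasing. Starting from (\ref{e38}), write $\hat{\epsilon}_m = Q\bigl(g(\hat{\gamma}_m)\bigr)$ where
\[
  g(\hat{\gamma}_m) \triangleq \sqrt{\frac{R_{CU}\bigl(C(\hat{\gamma}_m)-r_m\bigr)^2}{\mathcal{V}(\hat{\gamma}_m)}},
\]
with $C(\hat{\gamma}_m)=\log_2(1+\hat{\gamma}_m)$ and $\mathcal{V}(\hat{\gamma}_m)=\bigl(1-(1+\hat{\gamma}_m)^{-2}\bigr)(\log_2 e)^2$ from (\ref{e17}a)–(\ref{e17}b). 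Since $Q(\cdot)$ is itself strictly decreasing, by the chain rule it suffices to prove that $g$ is strictly increasing in $\hat{\gamma}_m$ on the region where $r_m(\hat{\gamma}_m,\hat{\epsilon}_m)>0$, i.e. for $\hat{\gamma}_m>\gamma_0$ with $C(\hat{\gamma}_m)>r_m$.

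First I would reduce $g$ to a cleaner form. Writing $u=1+\hat{\gamma}_m>1$, we have $C=\log_2 u$ and $\mathcal{V}=(1-u^{-2})(\log_2 e)^2$, so
\[
  g = \sqrt{R_{CU}}\,\frac{\log_2 u - r_m}{\log_2 e}\cdot\frac{1}{\sqrt{1-u^{-2}}}
    = \sqrt{R_{CU}}\,\bigl(\ln u - r_m\ln 2\bigr)\cdot\frac{u}{\sqrt{u^2-1}}.
\]
Both factors $\ln u - r_m\ln 2$ (positive on the relevant region, where $\ln u > r_m \ln 2$) and $u/\sqrt{u^2-1}$ are the key objects: the first is strictly increasing in $u$, while the second is strictly decreasing in $u$ (its derivative is $-1/(u^2-1)^{3/2}<0$). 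So $g$ is a product of one increasing positive factor and one decreasing positive factor, and monotonicity is not immediate — this is the main obstacle. To resolve it I would differentiate $\ln g$: computing $\frac{d}{du}\ln g = \frac{1}{u(\ln u - r_m\ln 2)} - \frac{1}{u(u^2-1)}$, so $g'(u)>0$ is equivalent to $u^2-1 > \ln u - r_m\ln 2$. Since $r_m\ln 2 = \ln(2^{r_m}) \ge 0$ and, for $u>1$, the elementary inequality $u^2-1 > \ln u$ holds (indeed $u^2 - 1 \ge 2\ln u > \ln u$, or simply compare derivatives at $u=1$ where both sides vanish), the required inequality follows, giving $g'(u)>0$ strictly.

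Combining the pieces: $g$ is strictly increasing in $u=1+\hat{\gamma}_m$, hence strictly increasing in $\hat{\gamma}_m$; composing with the strictly decreasing $Q(\cdot)$ yields that $\hat{\epsilon}_m = Q(g(\hat{\gamma}_m))$ is strictly decreasing in $\hat{\gamma}_m$ for fixed target rate $r_m$. Since $\hat{\epsilon}_m$ enters the analysis only via the SINR $\hat{\gamma}_m(N_{LoP},\boldsymbol{\mathrm{L}}_m)$, this establishes the claimed strict monotonicity and completes the proof of Theorem 5. I would also briefly note the boundary behavior — as $\hat{\gamma}_m\downarrow\gamma_0$ (where $C(\hat{\gamma}_m)\to r_m$) we get $g\to 0$ and $\hat{\epsilon}_m\to Q(0)=1/2$, while as $\hat{\gamma}_m\to\infty$, $g\to\infty$ and $\hat{\epsilon}_m\to 0$ — which is consistent with the monotone decrease and pins down the range of $\hat{\epsilon}_m$.
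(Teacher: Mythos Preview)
Your proposal is correct and follows essentially the same approach as the paper: write $\hat{\epsilon}_m = Q\bigl(g(\hat{\gamma}_m)\bigr)$, use the strict monotonicity of $Q$ to reduce to showing $g' > 0$, and verify this via the elementary inequality $(1+\hat{\gamma}_m)^2 - 1 > \ln(1+\hat{\gamma}_m) - r_m\ln 2$. The only cosmetic differences are that you differentiate $\ln g$ rather than $g$ directly, and you check the final inequality by a direct derivative comparison instead of the paper's route through the auxiliary function $\tilde{g}(\hat{\gamma}_m)=\ln(1+\hat{\gamma}_m)/\bigl((1+\hat{\gamma}_m)^2-1\bigr)$ and L'H\^opital's rule.
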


\begin{proof}
  We define the function of $\hat{\gamma}_{m}(N_{LoP},\boldsymbol{\mathrm{L}}_{m})$ as follows:
  \begin{equation*}
    \begin{aligned}
      & g\big(\hat{\gamma}_{m}\big) = \sqrt{\frac{R_{CU}\left(C\big(\hat{\gamma}_{m}(N_{LoP},\boldsymbol{\mathrm{L}}_{m})\big)-r_{m}\right)^{2}}{\mathcal{V}\big(\hat{\gamma}_{m}(N_{LoP},\boldsymbol{\mathrm{L}}_{m})\big)}}.
    \end{aligned}
  \end{equation*}
  \vspace{-0.2cm}
  \par The first-order partial derivative of $\hat{\epsilon}_{m}\big(N_{LoP},\boldsymbol{\mathrm{L}}_{m}\big)$ with respect to $\hat{\gamma}_{m}(N_{LoP},\boldsymbol{\mathrm{L}}_{m})$ is expressed as
  \begin{equation*}
    \frac{\partial \hat{\epsilon}_{m}}{\partial \hat{\gamma}_{m}} = -\frac{1}{\sqrt{2\pi}}e^{-g^{2}\big(\hat{\gamma}_{m}\left(N_{LoP},\boldsymbol{\mathrm{L}}_{m}\right)\big)}\frac{\partial g\big(\hat{\gamma}_{m}(N_{LoP},\boldsymbol{\mathrm{L}}_{m})\big)}{\partial \hat{\gamma}_{m}(N_{LoP},\boldsymbol{\mathrm{L}}_{m})}
  \end{equation*}
  \par In addition, the first-order partial derivative of function $g\big(\hat{\gamma}_{m}\left(N_{LoP},\boldsymbol{\mathrm{L}}_{m}\right)\big)$ with respect to $\hat{\gamma}_{m}\left(N_{LoP},\boldsymbol{\mathrm{L}}_{m}\right)$ is given as the following equation:
  \vspace{-0.2cm}
  \begin{equation*}
     \begin{aligned}
         & \frac{\partial g\big(\hat{\gamma}_{m}\left(N_{LoP},\boldsymbol{\mathrm{L}}_{m}\right)\big)}{\partial \hat{\gamma}_{m}(N_{LoP},\boldsymbol{\mathrm{L}}_{m})}  = R_{CU}\frac{1-\frac{\ln\left(1+\hat{\gamma}_{m}\right)-r_{m}\ln2}{\left(1+\hat{\gamma}_{m}\right)^{2}-1}}{\sqrt{\left(1+\hat{\gamma}_{m}\right)^{2}-1}}\\
         & \geq R_{CU}\frac{1-\frac{\ln\left(1+\hat{\gamma}_{m}\right)}{\left(1+\hat{\gamma}_{m}\right)^{2}-1}}{\sqrt{\left(1+\hat{\gamma}_{m}\right)^{2}-1}} = R_{CU}\frac{1-\frac{\ln\left(1+\hat{\gamma}_{m}\right)}{\left(\hat{\gamma}_{m}+2\right)\hat{\gamma}_{m}}}{\sqrt{\left(1+\hat{\gamma}_{m}\right)^{2}-1}}.
     \end{aligned}
  \end{equation*}
  \vspace{-0.2cm}
  \par Define another function, namely $\tilde{g}\left(\hat{\gamma}_{m}\right) = \frac{\ln\left(1+\hat{\gamma}_{m}\right)}{\hat{\gamma}_{m}\left(\hat{\gamma}_{m}+2\right)}$, it is easy to prove that $\tilde{g}\left(\hat{\gamma}_{m}\right)$ is a decreasing function with respect to $\hat{\gamma}_{m}$. By exploiting the Lhobita's Law, we can obtain that $\max\{\tilde{g}\left(\hat{\gamma}_{m}\right)\} = \tilde{g}\left(\hat{\gamma}_{m}\right)\big|_{\hat{\gamma}_{m} \rightarrow 0} = \frac{1}{2}$, and $\min\{\tilde{g}\left(\hat{\gamma}_{m}\right)\} = \tilde{g}\left(\hat{\gamma}_{m}\right)\big|_{\hat{\gamma}_{m} \rightarrow \infty} = 0$. Therefore, we can conclude that $\frac{\partial g\big(\hat{\gamma}_{m}\left(N_{LoP},\boldsymbol{\mathrm{L}}_{m}\right)\big)}{\partial \hat{\gamma}_{m}\left(N_{LoP},\boldsymbol{\mathrm{L}}_{m}\right)} > 0$. So the decoding error probability $\hat{\epsilon}_{m}\big(\hat{\gamma}_{m}\left(N_{LoP},\boldsymbol{\mathrm{L}}_{m}\right)\big)$ is a monotonically decreasing function with respect to $\hat{\gamma}_{m}\left(N_{LoP},\boldsymbol{\mathrm{L}}_{m}\right)$.
  \par So the proof of \textbf{\emph{Theorem 5}} is concluded.
\end{proof}

\par Theorem 5 refers that $\mathcal{P}2$ can be solved from two aspects. On the one hand, we regard the linear detector matrix as a function with respect to pilot length and derive the expression of the optimal linear detector matrix, i.e., $\boldsymbol{\mathrm{L}}^{\star}(N_{LoP})$. On the other hand, since the pilot length $N_{LoP}$ is an integer in the interval $\left[M,N_{CU}-1\right]$, $\mathcal{P}2$ is degenerated into a one-dimensional integer-search problem once the expression of $\boldsymbol{\mathrm{L}}^{\star}(N_{LoP})$ is determined.

\subsubsection{Subproblem 1}
\par Given a pilot length $N_{LoP}$, the minimization problem $\mathcal{P}2$ can be reformulated as follows:
\vspace{-0.2cm}
\begin{equation}\label{e43}
  \mathcal{P}3:\ \boldsymbol{\mathrm{L}}^{\star}(N_{LoP}) = \arg\min_{\{\boldsymbol{\mathrm{L}}\}} \bigg\{\sum\limits_{m=1}^{M}\boldsymbol{\mathcal{E}}_{m}\left(N_{LoP},\boldsymbol{\mathrm{L}}_{m}\right)\bigg\}.
\end{equation}
\par In problem $\mathcal{P}3$, the optimal linear detector matrix $\boldsymbol{\mathrm{L}}^{\star} \triangleq \left\{\boldsymbol{\mathrm{L}}_{1}^{\star},\cdots,\boldsymbol{\mathrm{L}}_{M}^{\star}\right\}$ can be obtained by the theorem as follows:
\begin{theorem}
  Given a pilot length $N_{LoP}$, the optimal linear detector matrix for the $M$ mobile UEs can be denoted as $\boldsymbol{L}^{\star} \triangleq \left\{\boldsymbol{\mathrm{L}}_{1}^{\star},\cdots,\boldsymbol{\mathrm{L}}_{M}^{\star}\right\}$, where $\boldsymbol{\mathrm{L}}_{m}^{\star} = \boldsymbol{\mathrm{d}}_{m}^{H}\boldsymbol{\mathrm{G}}^{-1}$, $m \in \mathcal{M}$.
\end{theorem}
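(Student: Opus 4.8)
\textit{Proof proposal.} The plan is to first reduce Problem $\mathcal{P}3$ to $M$ decoupled SINR-maximization problems by invoking Theorem 5, and then to solve each one by recognizing the SINR in (\ref{e13}) as a generalized Rayleigh quotient. Since $\boldsymbol{\mathcal{E}}_{m}\left(N_{LoP},\boldsymbol{\mathrm{L}}_{m}\right)$ depends on the detector vector collection $\boldsymbol{\mathrm{L}}$ only through its own block $\boldsymbol{\mathrm{L}}_{m}$, the objective $\sum_{m}\boldsymbol{\mathcal{E}}_{m}$ in (\ref{e43}) separates over $m$, so it suffices to minimize each $\boldsymbol{\mathcal{E}}_{m}$ individually. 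Conditioning on the channel estimate $\hat{\boldsymbol{\mathrm{H}}}^{\prime}$ (equivalently, on $\boldsymbol{\mathrm{d}}_{m}$ and $\boldsymbol{\mathrm{G}}$, which are deterministic functions of $\hat{\boldsymbol{\mathrm{H}}}^{\prime}$), the detector $\boldsymbol{\mathrm{L}}_{m}$ is deterministic and, by (\ref{e38}), $\hat{\epsilon}_{m}\left(N_{LoP},\boldsymbol{\mathrm{L}}_{m}\right)$ is a function of the post-detection SINR $\hat{\gamma}_{m}\left(N_{LoP},\boldsymbol{\mathrm{L}}_{m}\right)$ given in (\ref{e13}); by Theorem 5 this function is strictly decreasing in $\hat{\gamma}_{m}$. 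Hence, for every realization of $\hat{\boldsymbol{\mathrm{H}}}^{\prime}$, the conditional error is minimized by the $\boldsymbol{\mathrm{L}}_{m}$ that maximizes $\hat{\gamma}_{m}$, and averaging over $\hat{\boldsymbol{\mathrm{H}}}^{\prime}$ shows the same $\boldsymbol{\mathrm{L}}_{m}$ minimizes $\boldsymbol{\mathcal{E}}_{m}$. Thus it remains to show $\arg\max_{\boldsymbol{\mathrm{L}}_{m}}\hat{\gamma}_{m}\left(N_{LoP},\boldsymbol{\mathrm{L}}_{m}\right) \ni \boldsymbol{\mathrm{d}}_{m}^{H}\boldsymbol{\mathrm{G}}^{-1}$.

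Next I would maximize $\hat{\gamma}_{m} = \frac{\boldsymbol{\mathrm{L}}_{m}\boldsymbol{\mathrm{d}}_{m}\boldsymbol{\mathrm{d}}_{m}^{H}\boldsymbol{\mathrm{L}}_{m}^{H}}{\boldsymbol{\mathrm{L}}_{m}\left(\boldsymbol{\mathrm{G}}-\boldsymbol{\mathrm{d}}_{m}\boldsymbol{\mathrm{d}}_{m}^{H}\right)\boldsymbol{\mathrm{L}}_{m}^{H}}$ over $\boldsymbol{\mathrm{L}}_{m}\in\mathbb{C}^{1\times N_{T}}$. Writing $\boldsymbol{\mathrm{B}}_{m}\triangleq\boldsymbol{\mathrm{G}}-\boldsymbol{\mathrm{d}}_{m}\boldsymbol{\mathrm{d}}_{m}^{H}=\sum_{i\neq m}\boldsymbol{\mathrm{d}}_{i}\boldsymbol{\mathrm{d}}_{i}^{H}+\frac{1}{\omega}\boldsymbol{\mathrm{I}}_{N_{T}}$, note that $\boldsymbol{\mathrm{B}}_{m}\succ 0$ because $\omega>0$; this makes the quotient well defined and homogeneous of degree zero under the scaling $\boldsymbol{\mathrm{L}}_{m}\mapsto c\,\boldsymbol{\mathrm{L}}_{m}$. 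The maximization is then a generalized Rayleigh quotient with a rank-one numerator matrix $\boldsymbol{\mathrm{d}}_{m}\boldsymbol{\mathrm{d}}_{m}^{H}$; by the Rayleigh--Ritz theorem (or, equivalently, a one-line Cauchy--Schwarz argument after substituting $\boldsymbol{\mathrm{L}}_{m}=\boldsymbol{\mathrm{x}}^{H}\boldsymbol{\mathrm{B}}_{m}^{-1/2}$) its maximum value is $\boldsymbol{\mathrm{d}}_{m}^{H}\boldsymbol{\mathrm{B}}_{m}^{-1}\boldsymbol{\mathrm{d}}_{m}$, attained precisely when $\boldsymbol{\mathrm{L}}_{m}$ is a nonzero scalar multiple of $\boldsymbol{\mathrm{d}}_{m}^{H}\boldsymbol{\mathrm{B}}_{m}^{-1}$.

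Finally I would simplify $\boldsymbol{\mathrm{d}}_{m}^{H}\boldsymbol{\mathrm{B}}_{m}^{-1}$ using the Sherman--Morrison identity applied to $\boldsymbol{\mathrm{B}}_{m}^{-1}=\left(\boldsymbol{\mathrm{G}}-\boldsymbol{\mathrm{d}}_{m}\boldsymbol{\mathrm{d}}_{m}^{H}\right)^{-1}$, which yields $\boldsymbol{\mathrm{d}}_{m}^{H}\boldsymbol{\mathrm{B}}_{m}^{-1}=\frac{1}{1-\boldsymbol{\mathrm{d}}_{m}^{H}\boldsymbol{\mathrm{G}}^{-1}\boldsymbol{\mathrm{d}}_{m}}\,\boldsymbol{\mathrm{d}}_{m}^{H}\boldsymbol{\mathrm{G}}^{-1}$, with $1-\boldsymbol{\mathrm{d}}_{m}^{H}\boldsymbol{\mathrm{G}}^{-1}\boldsymbol{\mathrm{d}}_{m}\in(0,1]$ (a consequence of $\boldsymbol{\mathrm{B}}_{m}\succ 0$). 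Hence $\boldsymbol{\mathrm{d}}_{m}^{H}\boldsymbol{\mathrm{B}}_{m}^{-1}$ and $\boldsymbol{\mathrm{d}}_{m}^{H}\boldsymbol{\mathrm{G}}^{-1}$ are positive scalar multiples of one another, and by the scale invariance noted above the representative $\boldsymbol{\mathrm{L}}_{m}^{\star}=\boldsymbol{\mathrm{d}}_{m}^{H}\boldsymbol{\mathrm{G}}^{-1}$ is optimal, establishing Theorem 6. The routine ingredients are the Rayleigh-quotient optimum and the Sherman--Morrison reduction; the step requiring the most care is the first one, namely rigorously justifying that the detector can be optimized realization-by-realization inside the expectation $\boldsymbol{\mathcal{E}}_{m}$ (so that a single closed-form detector, rather than a distribution-dependent one, is globally optimal) and that this optimal detector is consistent with the Gamma-distribution model (\ref{e14}) assumed for $\hat{\gamma}_{m}$ --- indeed $\boldsymbol{\mathrm{d}}_{m}^{H}\boldsymbol{\mathrm{G}}^{-1}$ is exactly the MMSE-type detector for which that approximation was established in the cited references.
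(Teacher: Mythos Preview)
Your proposal is correct and follows essentially the same route as the paper: reduce $\mathcal{P}3$ to per-user SINR maximization via Theorem~5, then solve the resulting Rayleigh-quotient problem by Cauchy--Schwarz. The only cosmetic difference is that the paper applies Cauchy--Schwarz directly with the whitening $\boldsymbol{\mathrm{G}}^{1/2}$ to obtain $\boldsymbol{\mathrm{L}}_{m}^{\star}=\boldsymbol{\mathrm{d}}_{m}^{H}\boldsymbol{\mathrm{G}}^{-1}$ in one step, whereas you first maximize with respect to $\boldsymbol{\mathrm{B}}_{m}=\boldsymbol{\mathrm{G}}-\boldsymbol{\mathrm{d}}_{m}\boldsymbol{\mathrm{d}}_{m}^{H}$ and then invoke Sherman--Morrison; both are standard and equivalent.
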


\begin{proof}
  \textbf{\emph{Theorem 5}} shows that $\hat{\epsilon}_{m}\left(N_{LoP},\boldsymbol{\mathrm{L}}_{m}\right)$ is a strictly monotonically decreasing function with respect to $\hat{\gamma}_{m}\left(N_{LoP},\boldsymbol{\mathrm{L}}_{m}\right)$. Moreover, the function $p_{\hat{\gamma}_{m}(N_{LoP},\boldsymbol{\mathrm{L}}_{m})}(x)$ represents the probability when the value of SINR $\hat{\gamma}_{m}(N_{LoP},\boldsymbol{\mathrm{L}}_{m})$ is $x$. Hence, for a given pilot length $N_{LoP}$, minimizing $\boldsymbol{\mathcal{E}}_{m}\left(N_{LoP},\boldsymbol{\mathrm{L}}_{m}\right)$ is equivalent to maximizing $\hat{\gamma}_{m}(N_{LoP},\boldsymbol{\mathrm{L}}_{m})$. According to (13), we have
   \begin{equation}\label{e44}
     \begin{aligned}
      & \hat{\gamma}_{m}\left(N_{LoP},\boldsymbol{\mathrm{L}}_{m}\right) = \frac{\boldsymbol{\mathrm{L}}_{m}\boldsymbol{\mathrm{d}}_{m}\boldsymbol{\mathrm{d}}_{m}^{H}\boldsymbol{\mathrm{L}}_{m}^{H}}{\boldsymbol{\mathrm{L}}_{m} \big(\boldsymbol{\mathrm{G}} - \boldsymbol{\mathrm{d}}_{m}\boldsymbol{\mathrm{d}}_{m}^{H}\big)\boldsymbol{\mathrm{L}}_{m}^{H}}\\
      & = \frac{1}{\frac{\boldsymbol{\mathrm{L}}_{m}\boldsymbol{\mathrm{G}}\boldsymbol{\mathrm{L}}_{m}^{H}}{\boldsymbol{\mathrm{L}}_{m}\boldsymbol{\mathrm{d}}_{m}\boldsymbol{\mathrm{d}}_{m}^{H}\boldsymbol{\mathrm{L}}_{m}^{H}}-1} = \frac{1}{\frac{\|\boldsymbol{\mathrm{L}}_{m}\boldsymbol{\mathrm{G}}^{\frac{1}{2}}\|^{2}}{\|\boldsymbol{\mathrm{L}}_{m}\boldsymbol{\mathrm{G}}^{\frac{1}{2}}\boldsymbol{\mathrm{G}}^{-\frac{1}{2}}\boldsymbol{\mathrm{d}}_{m}\|^{2}}-1}\\
      & \overset{(a)}{\leq} \frac{\|\boldsymbol{\mathrm{G}}^{-\frac{1}{2}}\boldsymbol{\mathrm{d}}_{m}\|^{2}}{1-\|\boldsymbol{\mathrm{G}}^{-\frac{1}{2}}\boldsymbol{\mathrm{d}}_{m}\|^{2}} = \frac{\boldsymbol{\mathrm{d}}_{m}^{H}\boldsymbol{\mathrm{G}}^{-1}\boldsymbol{\mathrm{d}}_{m}}{1-\boldsymbol{\mathrm{d}}_{m}^{H}\boldsymbol{\mathrm{G}}^{-1}\boldsymbol{\mathrm{d}}_{m}},
     \end{aligned}
  \end{equation}
  where the inequality $(a)$ is obtained by using Cauchy-Schwarz inequality. If and only if $\boldsymbol{\mathrm{d}}_{m}^{H}\boldsymbol{\mathrm{G}}^{-1/2} = \boldsymbol{\mathrm{L}}_{m}\boldsymbol{\mathrm{G}}^{1/2}$ holds, the inequality $(a)$ takes the equal sign. Therefore, the optimal solution of $\mathcal{P}3$ can be denoted as $\boldsymbol{\mathrm{L}}^{\star} \triangleq \left\{\boldsymbol{\mathrm{L}}_{1}^{\star},\cdots,\boldsymbol{\mathrm{L}}_{M}^{\star}\right\}$, where $\boldsymbol{L}_{m}^{\star} = \boldsymbol{\mathrm{d}}_{m}^{H}\boldsymbol{\mathrm{G}}^{-1}$, $m \in \mathcal{M}$.
  \par So the proof of \textbf{\emph{Theorem 6}} is concluded.
\end{proof}

\subsubsection{Subproblem 2}
  According to Theorem 6, we consider $\boldsymbol{\mathrm{L}}^{\star}(N_{LoP})$ to be a function of the $N_{LoP}$. As a result, the problem $\mathcal{P}2$ degenerates into a one-dimensional integer search problem of the form:
  \vspace{-0.2cm}
  \begin{equation}\label{e45}
   \begin{aligned}
     \mathcal{P}4:\ \ N_{LoP}^{\star}\! = \! \arg\!\min_{M \leq N\!_{LoP} \leq N\!_{CU}-1}\! \bigg\{\sum\limits_{m=1}^{M}\!\!\boldsymbol{\mathcal{E}}\!_{m}\!\left(N\!_{LoP},\!\boldsymbol{\mathrm{L}}^{\star}_{m}\!(N\!_{LoP}\!)\!\right) \bigg\}.
   \end{aligned}
  \end{equation}
\par To efficiently address $\mathcal{P}4$, we propose a variant of the Golden-Section search method (GSS) named integer-form Golden-Section search algorithm (IFGSS). Unlike the traditional GSS, the lower-search and upper-search bounds of the IFGSS are determined by the nearest integers to the respective Golden-Section points in each iteration. We provide a detailed description of the IFGSS in Algorithm 1.

\begin{algorithm}[t]
\setstretch{0.85}
\caption{IFGSS Algorithm}
\begin{algorithmic}[1]
\REQUIRE 
Lower search bound $M$; Upper search bound $N_{CU}-1$; Golden-section ratio $\tau = 0.618$; Maximum achievable rate $\boldsymbol{r}=\left\{r_{1},\cdots,r_{M}\right\}$ ;\\
\ENSURE
Optimal length of pilot $N_{LoP}^{\star}$;\\ 
\STATE Set $lower_{1} = K$, $upper_{1} = M$;\\
\STATE Set initial search interval $\left[lower_{1},upper_{1}\right]$;\\
\STATE Compute $N_{1}^{1} = upper_{1} - \lfloor \tau\left(upper_{1}-lower_{1}\right) \rfloor$;\\
\STATE Compute $N_{2}^{1} = lower_{1} + \lceil \tau\left(upper_{1}-lower_{1}\right) \rceil$;\\
\STATE Set iteration index $i = 1$; \\
\IF{$\sum\limits_{m=1}^{M} \boldsymbol{\mathcal{E}}_{m}\left(N_{2}^{i},\boldsymbol{L}_{m}^{\star}(N_{2}^{i})\right) >\sum\limits_{m=1}^{M} \boldsymbol{\mathcal{E}}_{m}\left(N_{1}^{i},\boldsymbol{L}_{m}^{\star}(N_{1}^{i})\right)
 $}
\STATE $lower_{i+1} \leftarrow lower_{i}$, $upper_{i+1} \leftarrow N_{2}^{i}$;\\
\STATE $N_{2}^{i+1} \leftarrow N_{1}^{i}$;\\
\STATE $N_{1}^{i+1} \leftarrow upper_{i+1} - \lfloor \tau \left(upper_{i+1} - lower_{i+1}\right)\rfloor$;\\
\ELSE
\STATE $lower_{i+1} \leftarrow N_{1}^{i}$, $upper_{i+1} \leftarrow upper_{i}$;\\
\STATE $N_{1}^{i+1} \leftarrow N_{2}^{i}$;\\
\STATE $N_{2}^{i+1} \leftarrow lower_{i+1} + \lceil \tau\left(upper_{i+1} - lower_{i+1}\right) \rceil$;\\
\ENDIF
\IF{$upper_{i+1} == lower_{i+1}$}
\STATE \textbf{return} $N_{LoP}^{\star} = N_{1}^{i+1}$;\\ 
\ELSE
\STATE Update $i:=i+1$;\\
\STATE \textbf{go to} Step \textbf{6};\\
\ENDIF
\label{code:recentEnd}
\end{algorithmic}
\end{algorithm}

\par According to Theorem 6 and Algorithm 1, the minimum precoding error probability $\boldsymbol{\mathcal{E}}_{m}\big(N_{LoP}^{\star},$ $\boldsymbol{\mathrm{L}}_{m}^{\star}\left(N_{LoP}^{\star}\right)\big)$, $m\in\mathcal{M}$ can be determined, and the optimal linear detector matrix $\boldsymbol{\mathrm{L}}^{\star}$ is given as follows:
\vspace{-0.1cm}
\begin{equation}\label{e46}
  \begin{aligned}
    \boldsymbol{\mathrm{\mathrm{L}}}^{\star}\!(N_{LoP}^{\star}) \!=\! \left\{\boldsymbol{\mathrm{L}}_{1}^{\star}(N_{LoP}^{\star}),\!\cdots\!,\boldsymbol{\mathrm{L}}_{M}^{\star}(N_{LoP}^{\star})\right\} \!=\! \left(\boldsymbol{\mathrm{D}}^{\star}\right)^{H}\!\!\left(\boldsymbol{\mathrm{G}}^{\star}\right)^{-1}\!\!,
  \end{aligned}
\end{equation}
where $\boldsymbol{\mathrm{D}}^{\star} = \boldsymbol{\mathrm{\delta}}^{\star}\hat{\boldsymbol{\mathrm{H}}}$, $\boldsymbol{\mathrm{G}}^{\star} = \boldsymbol{\mathrm{D}}^{\star}\left(\boldsymbol{\mathrm{D}}^{\star} \right)^{H} + \frac{1}{\omega^{\star}}\boldsymbol{\mathrm{I}}_{L}$. $\boldsymbol{\mathrm{\delta}}^{\star} = \mathrm{diag}\left(\delta_{1}^{\star},\cdots,\delta_{M}^{\star}\right)$, $\delta_{m}^{\star} = \frac{\rho N_{LoP}^{\star} \lambda_{m} \beta_{m}}{\rho N_{LoP}^{\star} \lambda_{m} \beta_{m} + 1}$, and $\omega^{\star} = \big(\sum\limits_{i=1}^{M}\frac{\lambda_{i}\beta_{i}}{\rho N_{LoP}^{\star} \lambda_{i}\beta_{i} + 1} + \frac{1}{\rho} \big)^{-1}$.

\subsubsection{Computational Complexity Analysis}
\par Note that subproblem $\mathcal{P}4$ can also be solved using the exhaustive method (EM), whose computational complexity is $\mathcal{O}\left(N_{CU}-M\right)$. The proposed IFGSS algorithm, however, has a computational complexity of $\mathcal{O}\left(\log_{2}\left(N_{CU} - M\right)\right)$ \cite{tsai2010golden}, and thus can achieve superior convergence performance compared to EM, which leads to inefficient search results when the blocklength $N_{CU}$ is large.
\vspace{-0.1cm}
\section{Performance Optimizations of EP-EC and EP-EE}
\vspace{-0.1cm}
\par In Sec. IV, based on the theoretical framework, we have investigated and solved the UB-SDVP and decoding error probability minimization problems with an extremely small delay. However, the performance tradeoffs of xURLLC are equally essential \cite{she2021tutorial,park2022extreme}. In this section, to characterize the tail distribution (i.e., effective capacity (EC)\cite{guo2019resource, amjad2019effective, hou2018burstiness}) and energy efficiency \cite{she2021tutorial,bennis2018ultrareliable,park2022extreme} in the short-packet data communications of xURLLC, we propose two novel concepts, known as EP-EC and EP-EE. Then, we consider the optimal solutions $\left\{N_{LoP}^{\star}(\rho),\boldsymbol{\mathrm{L}^{\star}}(\rho)\right\}$ obtained in Sec. IV as functions of transmit power $\rho$ and investigate the performance optimizations for EP-EC and EP-EE.

\vspace{-0.3cm}
\subsection{The Concepts of EP-EC and EP-EE}
\par Effective capacity refers to the maximum arrival rate that guarantees the corresponding QoS requirements for a given service rate\cite{guo2019resource, amjad2019effective, hou2018burstiness}.
Given the cumulative service process $S_{m}\left(t\right)=\sum\limits_{i=1}^{t} s_{m}(i)$ and QoS exponent $\theta_{m}$, the effective capacity for each mobile UE $m$, denoted by $EC_{m}(\theta_{m})$, can be expressed as follows:
\vspace{-0.1cm}
\begin{equation}\label{e47}
  EC_{m}\left(\theta_{m}\right) \triangleq -\lim_{t \rightarrow \infty}\frac{1}{t\theta_{m}}\log\left(\mathbb{E}_{\hat{\gamma}_{m}}\left[e^{-\theta_{m}S_{m}\left(t\right)}\right]\right).
\end{equation}
\par By combining (\ref{e41}) and (\ref{e47}), since the $S_{m}(t)$ is uncorrelated across different time slots\cite{guo2019resource}, we can extend EC to the finite blocklength regime by the proposed theoretical framework. Specifically, we can define EP-EC $\mathcal{EC}_{m}\left(\theta_{m}\right)$ using the inverse-MGF over the service rate $s_{m}$ as follows:
\vspace{-0.2cm}
\begin{equation}\label{e48}
   \begin{aligned}
     & \mathcal{EC}_{m}\!\left(\theta_{m}\right) \!\triangleq\!  -\frac{1}{\theta_{m}}\!\log\!\big[\overline{\mathbb{M}}_{s_{m}}\left(\theta_{m}\right)\big] \!=\! -\frac{1}{\theta_{m}}\!\log\!\!\bigg[e^{-\theta_{m}R_{CU}r_{m}} \!+\! \left(1\!-\!e^{-\theta_{m}R_{CU}r_{m}}\!\right)\!\boldsymbol{\mathcal{E}}_{m}\!\left(N_{LoP},\boldsymbol{\mathrm{L}}_{m}\!\right)\!\bigg].
   \end{aligned}
\end{equation}
\par From (\ref{e48}), EP-EC can intuitively reveals the key factors affecting the performance of short-packet communications for xURLLC, including QoS exponent $\theta_{m}$, maximum achievable data rate $r_{m}$, pilot length $N_{LoP}$, blocklength $N_{CU}$, linear detector matrix $\boldsymbol{\mathrm{L}}_{m}$.

\par Next, we consider the most common power consumption model, which is defined as
\vspace{-0.2cm}
\begin{equation}\label{e49}
   P_{tot} = P_{c}+ \frac{1}{\varphi} \cdot \rho,
\end{equation}
\vspace{-0.1cm}
where $P_{c}$ denotes the constant circuit power, which corresponds to the power consumption of the transmitter circuitry, $0 \leq \varphi \leq 1$ denotes the power amplifier efficiency.

\par According to (\ref{e48}) and (\ref{e49}), we propose the concept of EP-EE to characterize the energy efficiency of the developed xURLLC-enabled massive MU-MIMO wireless networks in the finite blocklength regime as the ratio of the EP-EC to the total power consumption as follows:
\vspace{-0.2cm}
\begin{equation}\label{e50}
   \begin{aligned}
     \vartheta_{m}\left(\theta_{m}\right) = \mathcal{EC}_{m}\left(\theta_{m}\right)/P_{tot}.
   \end{aligned}
\end{equation}
\vspace{-1.0cm}
\subsection{The EP-EC Maximization Problem} 
\vspace{-0.1cm}
\par According to (\ref{e48}), a larger EP-EC implies that the developed xURLLC-enabled massive MU-MIMO wireless networks can support a higher total arrival rate while guaranteeing the expected statistical QoS provisioning. To maximize the EP-EC, we formulate the following optimization problem:
\vspace{-1em}
\begin{equation}\label{e51}
  \begin{aligned}
    \mathcal{P}5:\ \max_{\left\{N_{LoP}(\rho),\boldsymbol{\mathrm{L}}(\rho)\right\}} \biggl\{\sum\limits_{m=1}^{M}-\frac{1}{\theta_{m}}\log\!\!\bigg[e^{-\theta_{m}R_{CU}r_{m}} \!+\! \left(1\!-\!e^{-\theta_{m}R_{CU}r_{m}}\right)\!\boldsymbol{\mathcal{E}}_{m}\big(N_{LoP}(\rho),\boldsymbol{\mathrm{L}}_{m}(\rho)\big)\bigg]\!\biggl\}.
  \end{aligned}
\end{equation}
\vspace{-0.3cm}
\par Since the function $\log[\boldsymbol{\cdot}]$ is a monotonically increasing function, it can be shown that problem $\mathcal{P}5$ is equivalent to problem $\mathcal{P}1$, and shares the identical optimal solutions with $\mathcal{P}1$, i.e., $\left\{N_{LoP}^{\star}(\rho),\boldsymbol{\mathrm{L}}^{\star}(\rho)\right\}$.
\vspace{-0.3cm}
\subsection{The EP-EE Maximization Problem}
\vspace{-0.1cm}
\par Well known in communications, mindlessly increasing the transmit power for communication systems will typically result in excessive energy consumption rather than maximum energy efficiency \cite{rezvani2022optimal}. Consequently, from the perspective of green communication, maximizing EP-EE is essentially prominent for xURLLC\cite{park2022extreme,bennis2018ultrareliable,she2021tutorial}. Based on $\mathcal{P}5$, the EP-EE maximization problem for the developed xURLLC-enabled massive MU-MIMO networks in the finite blocklength regime can be formulated as follows:

\vspace{-0.4cm}
\begin{equation}\label{e52}
   \begin{aligned}
      \mathcal{P}6 :\quad & \max_{\rho > 0} \bigg\{\sum\limits_{m=1}^{M} \frac{\mathcal{EC}_{m}\left(N_{LoP}^{\star}(\rho),\boldsymbol{\mathrm{L}}_{m}^{\star}(\rho)\right)}{P_{c}+\frac{1}{\varphi}\cdot \rho}\bigg\},\\ 
      & s.t.\quad 0 \leq \rho \leq P_{max},
   \end{aligned}
\end{equation}
where $P_{max}$ denotes upper bound of the transmit power $\rho$.
\par For illustration purpose, we define two functions as follows:
\vspace{-0.4cm}
\begin{subequations}\label{e53}
  \begin{align}
    & \mathrm{EP}\text{-}\mathrm{EC}\left(\rho\right) \triangleq \sum\limits_{m=1}^{M} \mathcal{EC}_{m}\left(N_{LoP}^{\star}(\rho),\boldsymbol{L}_{m}^{\star}(\rho)\right),\\
    & P_{tot}\left(\rho\right) \triangleq P_{c} + \frac{1}{\varphi} \cdot \rho.
  \end{align}
\end{subequations}
\vspace{-0.4cm}
\par Then, the optimization problem $\mathcal{P}6$ can be reformulated as follows:
\vspace{-0.2cm}
\begin{equation}\label{e54}
   \mathcal{P}7:\quad \max_{0< \rho < P_{max}}\quad \vartheta(\rho) = \frac{\mathrm{EP}\text{-}\mathrm{EC}\left(\rho\right) }{P_{tot}\left(\rho\right)},
\end{equation}
where $\mathrm{EP}\text{-}\mathrm{EC}\left(\rho\right) \geq 0$ and $P_{tot}\left(\rho\right) > 0$.
\par By leveraging the fractional programming theory \cite{stancu2012fractional}, problem $\mathcal{P}7$ can be equivalently converted into a standard fractional programming problem as follows:
\begin{equation}\label{e55}
  \begin{aligned}
    \mathcal{P}8:\quad & \max_{0< \rho < P_{max}}\quad \vartheta(\rho),\\
     & s.t. \quad \mathrm{EP}\text{-}\mathrm{EC}\left(\rho\right)-\vartheta P_{tot}\left(\rho\right) \geq 0.
  \end{aligned}
\end{equation}
\par According to \cite{stancu2012fractional}, problem $\mathcal{P}8$ can be further reformulated as follows:
\begin{equation}\label{e56}
  \mathcal{P}9:\ F(\vartheta) = \max_{0< \rho < P_{max}} \left\{  \mathrm{EP}\text{-}\mathrm{EC}\left(\rho\right)-\vartheta P_{tot}\left(\rho\right) \right\}.
\end{equation}
\vspace{-2em}
\par $\mathcal{P}9$ is a bilateral optimization problem in which the parameter $\vartheta$ determines the relative weight between $\mathrm{EP}\text{-}\mathrm{EC}\left(\rho\right)$ and $P_{tot}\left(\rho\right)$. Since the objective of $\mathcal{P}7$ is to maximize $\mathrm{EP}\text{-}\mathrm{EC}\left(\rho\right)$ while minimizing $P_{tot}\left(\rho\right)$, $\mathcal{P}9$ and $\mathcal{P}7$ are two mutually equivalent optimization problems. To solve $\mathcal{P}9$ effectively, we propose two lemmas and a theorem that can be used to solve $\mathcal{P}9$ in advance.

\begin{lemma}
   $F(\vartheta) = \max\limits_{0< \rho < P_{max}} \left\{  \mathrm{EP}\text{-}\mathrm{EC}\left(\rho\right)-\vartheta P_{tot}\left(\rho\right) \right\}$ is a strictly decreasing and convex function with respect to the overall EP-EE $\vartheta$.
\end{lemma}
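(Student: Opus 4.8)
The plan is to exploit the affine-in-$\vartheta$ structure of the objective of $\mathcal{P}9$. For each fixed transmit power $\rho\in(0,P_{max})$, the map $\vartheta\mapsto \mathrm{EP}\text{-}\mathrm{EC}(\rho)-\vartheta P_{tot}(\rho)$ is affine in $\vartheta$ with slope $-P_{tot}(\rho)$; hence $F(\vartheta)$ is a pointwise supremum over $\rho$ of a family of affine functions, which yields convexity immediately, while the uniform negativity of all the slopes will give strict monotonicity.

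First I would record two elementary facts. From (\ref{e53}), $P_{tot}(\rho)=P_{c}+\rho/\varphi\geq P_{c}>0$ for every $\rho\in(0,P_{max})$, so every slope $-P_{tot}(\rho)$ is strictly negative and in fact bounded above by $-P_{c}$. From (\ref{e48}), each term $\mathcal{EC}_{m}(\theta_{m})$ is finite: the argument of its logarithm equals $e^{-\theta_{m}R_{CU}r_{m}}+(1-e^{-\theta_{m}R_{CU}r_{m}})\boldsymbol{\mathcal{E}}_{m}$, which lies in $(0,1]$ and is bounded below by $e^{-\theta_{m}R_{CU}r_{m}}>0$, so $0\leq\mathcal{EC}_{m}\leq R_{CU}r_{m}$ and $\mathrm{EP}\text{-}\mathrm{EC}(\rho)$ is uniformly bounded on $(0,P_{max})$; in particular $F(\vartheta)$ is finite for every $\vartheta\in\mathbb{R}$. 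For convexity, write $\ell_{\rho}(\vartheta)\triangleq\mathrm{EP}\text{-}\mathrm{EC}(\rho)-\vartheta P_{tot}(\rho)$, so that $F(\vartheta)=\sup_{0<\rho<P_{max}}\ell_{\rho}(\vartheta)$. Each $\ell_{\rho}$ is affine, hence convex, in $\vartheta$, and the pointwise supremum of an arbitrary family of convex functions is convex wherever it is finite; by the boundedness just established it is finite everywhere, so $F$ is convex on $\mathbb{R}$.

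For strict monotonicity, fix $\vartheta_{1}<\vartheta_{2}$. For any $\varepsilon>0$ choose $\rho_{\varepsilon}\in(0,P_{max})$ with $\ell_{\rho_{\varepsilon}}(\vartheta_{2})>F(\vartheta_{2})-\varepsilon$. Then $F(\vartheta_{1})\geq\ell_{\rho_{\varepsilon}}(\vartheta_{1})=\ell_{\rho_{\varepsilon}}(\vartheta_{2})+(\vartheta_{2}-\vartheta_{1})P_{tot}(\rho_{\varepsilon})>F(\vartheta_{2})-\varepsilon+(\vartheta_{2}-\vartheta_{1})P_{c}$, and letting $\varepsilon\downarrow 0$ gives $F(\vartheta_{1})\geq F(\vartheta_{2})+(\vartheta_{2}-\vartheta_{1})P_{c}>F(\vartheta_{2})$. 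Hence $F$ is strictly decreasing, which together with the convexity above concludes the plan.

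The step needing genuine care, rather than routine manipulation, is that the feasible set $(0,P_{max})$ is open, so the maximum defining $F(\vartheta)$ need not be attained at an interior point; this is why the monotonicity step is phrased with a near-maximizer $\rho_{\varepsilon}$ and leans on the uniform lower bound $P_{tot}\geq P_{c}>0$ from (\ref{e53}) rather than on a specific maximizer, and why establishing the uniform boundedness of $\mathrm{EP}\text{-}\mathrm{EC}(\cdot)$ via (\ref{e48}) is worth doing explicitly. If one instead adopts the natural convention that $\mathrm{EP}\text{-}\mathrm{EC}(\cdot)$ extends continuously to the compact interval $[0,P_{max}]$, so that a maximizer $\rho_{2}^{\star}$ of $\ell_{\rho}(\vartheta_{2})$ exists, the argument collapses to the single inequality $F(\vartheta_{1})\geq\ell_{\rho_{2}^{\star}}(\vartheta_{1})=F(\vartheta_{2})+(\vartheta_{2}-\vartheta_{1})P_{tot}(\rho_{2}^{\star})>F(\vartheta_{2})$.
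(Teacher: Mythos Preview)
Your proof is correct and follows essentially the same route as the paper: both exploit that for each fixed $\rho$ the map $\vartheta\mapsto\mathrm{EP}\text{-}\mathrm{EC}(\rho)-\vartheta P_{tot}(\rho)$ is affine with strictly negative slope, so $F$ is a pointwise supremum of affine functions (hence convex) and strictly decreasing. Your version is in fact more careful than the paper's, which tacitly assumes a maximizer $\rho^{\star}$ exists on the open interval $(0,P_{max})$; your $\varepsilon$-near-maximizer argument together with the uniform bound $P_{tot}\geq P_{c}>0$ closes that gap, and your explicit finiteness check via (\ref{e48}) is a welcome addition the paper omits.
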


\begin{proof}
  Assume that $\rho_{2}^{\star}$ is the optimal transmit power that maximizes the function $F(\vartheta_{2})$. Then, for any $0 \leq \vartheta_{1}<\vartheta_{2}$, we can obtain that
  \vspace{-0.1cm}
  \begin{equation*}
    \begin{aligned}
      F(\vartheta_{2}) & = \!\! \max\limits_{0< \rho < P_{max}} \!\!\left\{\mathrm{EP}\text{-}\mathrm{EC}\!\left(\rho\right)\!-\!\vartheta_{2} P_{tot}\left(\rho\right)\right\}\\
     & = \mathrm{EP}\text{-}\mathrm{EC}\!\left(\rho_{2}^{\star}\right)\!-\!\vartheta_{2} P_{tot}\!\left(\rho_{2}^{\star}\right) \!<\! \mathrm{EP}\text{-}\mathrm{EC}\!\left(\rho_{2}^{\star}\right)\!-\!\vartheta_{1} P_{tot}\!\left(\rho_{2}^{\star}\right)\\
     & \leq \max\limits_{0< \rho < P_{max}} \left\{\mathrm{EP}\text{-}\mathrm{EC}\left(\rho\right)-\vartheta_{1} P_{tot}\left(\rho\right)\right\} = F(\vartheta_{1}).
    \end{aligned}
  \end{equation*}
  \vspace{-0.2cm}
    \par Thus, for any $0 \leq \vartheta_{1}<\vartheta_{2}$, we have $F(\vartheta_{2})< F(\vartheta_{1})$.
    \vspace{-0.1cm}
    \par Assume that $\rho_{t}^{\star}$ is the optimal transmit power that maximizes the function $F\big(q\vartheta_{1}+(1-q)\vartheta_{2}\big)$, where $\forall \ \vartheta_{1} \neq \vartheta_{2}$, and $0 \leq q \leq 1$. Then, based on the definition of convex function, we can obtain that
    \begin{equation*}
      \begin{aligned}
         & F\big(q\vartheta_{1}+(1-q)\vartheta_{2}\big) \\
         & = \max\limits_{0< \rho < P_{max}} \left\{\mathrm{EP}\text{-}\mathrm{EC}\left(\rho\right)-\big(q\vartheta_{1}+(1-q)\vartheta_{2}\big) P_{tot}\left(\rho\right)\right\}\\
         & = \quad \mathrm{EP}\text{-}\mathrm{EC}\left(\rho_{t}^{\star}\right)-\big(q\vartheta_{1}+(1-q)\vartheta_{2}\big) P_{tot}\left(\rho_{t}^{\star}\right)\\
         & = q\big(\mathrm{EP}\text{-}\mathrm{EC}\left(\rho_{t}^{\star}\right) - \vartheta_{1}P_{tot}\left(\rho_{t}^{\star} \right)\big)+ (1-q)\cdot \big(\mathrm{EP}\text{-}\mathrm{EC}\left(\rho_{t}^{\star}\right)\\
         &\quad \quad - \vartheta_{2}P_{tot}\left(\rho_{t}^{\star} \right)\big)\\
         & \leq q \cdot \max_{0\leq \rho \leq P_{max}}\big\{(\mathrm{EP}\text{-}\mathrm{EC}\left(\rho\right) - \vartheta_{1}P_{tot}\left(\rho \right)\big\}+ (1-q) \cdot \\
         & \quad \max_{0\leq \rho \leq P_{max}}\big\{\mathrm{EP}\text{-}\mathrm{EC}\left(\rho\right)- \vartheta_{2}P_{tot}\left(\rho\right)\big\}\\
         & = q F(\vartheta_{1}) + (1-q)F(\vartheta_{2}).
      \end{aligned}
    \end{equation*}
    \par Thus, for any $\vartheta_{1} \neq \vartheta_{2}$ and $0 \leq q \leq 1$, we can prove that $F\big(q\vartheta_{1}+(1-q)\vartheta_{2}\big) \leq q F(\vartheta_{1}) + (1-q)F(\vartheta_{2})$.
    \par So the proof of \textbf{\emph{Lemma 1}} is concluded.
\end{proof}

\begin{lemma}
  There exists one and only one optimal value of EP-EE $\vartheta^{\star}$ that satisfies the equation $F(\vartheta) = 0$.
\end{lemma}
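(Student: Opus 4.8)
The plan is to reduce the statement to a one–dimensional root-finding argument, exploiting the structural properties of $F$ already established in \textbf{\emph{Lemma 1}}. Recall $F(\vartheta) = \max_{0<\rho<P_{max}}\{\mathrm{EP}\text{-}\mathrm{EC}(\rho) - \vartheta P_{tot}(\rho)\}$. First I would dispose of uniqueness: by \textbf{\emph{Lemma 1}}, $F$ is strictly decreasing on $[0,\infty)$, so it can take the value $0$ for at most one argument. Hence the whole burden of the proof is the existence of a root, which I would obtain from the intermediate value theorem once three facts are in place: (i) $F$ is continuous; (ii) $F(\vartheta)>0$ for some $\vartheta$; and (iii) $F(\vartheta)<0$ for some larger $\vartheta$.

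For (i), I would note that $F$ is a pointwise supremum over $\rho$ of the maps $\vartheta \mapsto \mathrm{EP}\text{-}\mathrm{EC}(\rho) - \vartheta P_{tot}(\rho)$, each affine in $\vartheta$, hence convex (as already remarked in \textbf{\emph{Lemma 1}}); since $\mathrm{EP}\text{-}\mathrm{EC}(\rho)$ is bounded above over the bounded feasible power range and $P_{tot}(\rho) = P_c + \rho/\varphi \ge P_c > 0$, $F$ is finite-valued, and a finite convex function on an interval is continuous on its interior. For (ii), I would evaluate at $\vartheta = 0$, giving $F(0) = \max_{0<\rho<P_{max}}\mathrm{EP}\text{-}\mathrm{EC}(\rho)$, which is strictly positive because, by (\ref{e48}), the EP-based effective capacity $\mathcal{EC}_m(\theta_m) = -\tfrac{1}{\theta_m}\log\overline{\mathbb{M}}_{s_m}(\theta_m) > 0$ whenever $0 < \overline{\mathbb{M}}_{s_m}(\theta_m) < 1$, i.e. whenever the maximum achievable rate $r_m$ is positive for some admissible $\rho$. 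For (iii), I would use the crude bound $F(\vartheta) \le \max_{\rho}\mathrm{EP}\text{-}\mathrm{EC}(\rho) - \vartheta\,\min_{\rho} P_{tot}(\rho) \le \max_{\rho}\mathrm{EP}\text{-}\mathrm{EC}(\rho) - \vartheta P_c \to -\infty$ as $\vartheta \to \infty$, so $F(\vartheta_1) < 0$ for some sufficiently large $\vartheta_1$.

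Combining (i)–(iii), the intermediate value theorem produces $\vartheta^{\star} \in (0,\vartheta_1)$ with $F(\vartheta^{\star}) = 0$, and strict monotonicity from \textbf{\emph{Lemma 1}} guarantees it is the only one. To connect this with the notion of \emph{optimal} EP-EE, I would invoke the standard fractional-programming equivalence underlying $\mathcal{P}7$–$\mathcal{P}9$: $F(\vartheta) \ge 0$ iff $\vartheta$ is attainable as a ratio $\mathrm{EP}\text{-}\mathrm{EC}(\rho)/P_{tot}(\rho)$, so the unique root $\vartheta^{\star}$ is exactly the maximal attainable ratio, i.e. the optimal EP-EE.

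The hard part will not be the intermediate-value argument itself but the two analytic certifications it rests on — the strict positivity $F(0) > 0$ and the finiteness/continuity of $F$ — both of which reduce to showing that $\mathrm{EP}\text{-}\mathrm{EC}(\rho)$ is positive and bounded over the feasible transmit-power interval; this follows from (\ref{e48}) once one observes that $\overline{\mathbb{M}}_{s_m}(\theta_m)$ stays strictly between $0$ and $1$ under the stability condition. The sign reversal at large $\vartheta$ is immediate from $P_{tot}(\rho) \ge P_c > 0$.
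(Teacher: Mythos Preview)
Your proposal is correct and follows essentially the same route as the paper: invoke the strict monotonicity (and convexity) of $F$ from \textbf{\emph{Lemma 1}} for uniqueness, then obtain existence from the sign change $F(0)\ge 0$ together with $F(\vartheta)\to -\infty$ as $\vartheta\to\infty$. Your version is more careful in spelling out continuity and the strict positivity $F(0)>0$, and in tying the root back to the fractional-programming interpretation, whereas the paper states these facts tersely (writing only $F(0)\ge 0$ and citing the limits at $\pm\infty$), but the underlying argument is the same.
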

\begin{proof}
   This lemma results from lemma 1 and the following facts. According to $\mathcal{P}7$, we have $F(0) \geq 0$ since $\mathrm{EP}\text{-}\mathrm{EC}\left(\rho\right) \geq 0$, and $P_{tot}\left(\rho_{t}^{\star} \right) > 0$ for any $0 \leq \rho \leq P_{max}$. In addition, $\lim\limits_{\vartheta \rightarrow +\infty}F(\vartheta) = -\infty$ and $\lim\limits_{\vartheta \rightarrow -\infty}F(\vartheta) = +\infty$.  Thus, there exists one and only one optimal value of EP-EE $\vartheta^{\star}$ that satisfies the equation $F(\vartheta) = 0$.
  \par So the proof of Lemma 2 is concluded.
\end{proof}

\begin{theorem}
  Assume that $\rho^{\star}$ is the optimal transmit power of problem $\mathcal{P}9$. Then, $\vartheta^{\star} = \frac{\mathrm{EP}\text{-}\mathrm{EC}\left(\rho^{\star}\right) }{P_{tot}\left(\rho^{\star}\right)} = \max\limits_{0< \rho < P_{max}} \left\{  \mathrm{EP}\text{-}\mathrm{EC}\left(\rho\right)-\vartheta P_{tot}\left(\rho\right) \right\}$ holds, if and only if $F(\vartheta^{\star}) = \max\limits_{0< \rho < P_{max}} \big\{\mathrm{EP}\text{-}\mathrm{EC}\left(\rho\right)$ $-\vartheta^{\star} P_{tot}\left(\rho\right)\big\} = 0$.
\end{theorem}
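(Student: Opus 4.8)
The plan is to establish the two implications of this equivalence by direct manipulation of the definitions, in the spirit of the Dinkelbach parametric approach to fractional programming, relying only on the positivity $P_{tot}\left(\rho\right) = P_{c} + \rho/\varphi > 0$ on the feasible set (established in $\mathcal{P}7$) and on the well-definedness of $\vartheta^{\star}$ guaranteed by \textbf{\emph{Lemma 1}} and \textbf{\emph{Lemma 2}}.

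First I would prove sufficiency, i.e., $F(\vartheta^{\star}) = 0 \Rightarrow \vartheta^{\star}$ is the optimal overall EP-EE of $\mathcal{P}7$ and is attained at $\rho^{\star}$. Suppose $F(\vartheta^{\star}) = \max_{0 < \rho < P_{max}}\left\{\mathrm{EP}\text{-}\mathrm{EC}\left(\rho\right) - \vartheta^{\star} P_{tot}\left(\rho\right)\right\} = 0$, with the maximum attained at $\rho^{\star}$. Then for every feasible $\rho$ one has $\mathrm{EP}\text{-}\mathrm{EC}\left(\rho\right) - \vartheta^{\star} P_{tot}\left(\rho\right) \le 0$; dividing by $P_{tot}\left(\rho\right) > 0$ gives $\vartheta(\rho) = \mathrm{EP}\text{-}\mathrm{EC}\left(\rho\right)/P_{tot}\left(\rho\right) \le \vartheta^{\star}$. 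Evaluating at the maximizer $\rho^{\star}$ yields $\mathrm{EP}\text{-}\mathrm{EC}\left(\rho^{\star}\right) - \vartheta^{\star} P_{tot}\left(\rho^{\star}\right) = 0$, i.e. $\vartheta(\rho^{\star}) = \vartheta^{\star}$. Hence $\rho^{\star}$ solves $\mathcal{P}7$ and its optimal value equals $\vartheta^{\star}$.

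Next I would prove necessity. Suppose $\rho^{\star}$ is optimal for $\mathcal{P}7$ and $\vartheta^{\star} = \mathrm{EP}\text{-}\mathrm{EC}\left(\rho^{\star}\right)/P_{tot}\left(\rho^{\star}\right)$. By optimality of $\rho^{\star}$, for every feasible $\rho$ we have $\mathrm{EP}\text{-}\mathrm{EC}\left(\rho\right)/P_{tot}\left(\rho\right) \le \vartheta^{\star}$; multiplying through by $P_{tot}\left(\rho\right) > 0$ gives $\mathrm{EP}\text{-}\mathrm{EC}\left(\rho\right) - \vartheta^{\star} P_{tot}\left(\rho\right) \le 0$ for all such $\rho$, with equality at $\rho = \rho^{\star}$ by the very definition of $\vartheta^{\star}$. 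Therefore $F(\vartheta^{\star}) = \max_{0 < \rho < P_{max}}\left\{\mathrm{EP}\text{-}\mathrm{EC}\left(\rho\right) - \vartheta^{\star} P_{tot}\left(\rho\right)\right\} = 0$, the maximum being attained at $\rho^{\star}$, which simultaneously identifies the optimal transmit power of $\mathcal{P}9$ with that of $\mathcal{P}7$.

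The argument involves no genuine obstacle; the only points requiring care are (i) that every division and sign-preserving multiplication above is legitimate because $P_{tot}\left(\rho\right) > 0$ throughout the feasible region, and (ii) that the supremum over the open interval $\left(0, P_{max}\right)$ is actually attained, so that writing ``$\max$'' and asserting the existence of $\rho^{\star}$ is justified — this follows from the continuity of $\mathrm{EP}\text{-}\mathrm{EC}\left(\rho\right)$ and $P_{tot}\left(\rho\right)$ together with their boundary behaviour, and may alternatively be taken over from \textbf{\emph{Lemma 1}}–\textbf{\emph{Lemma 2}}, whose statements already presuppose the relevant maximizers. I would close by noting that this theorem is precisely what licenses the Dinkelbach-type outer iteration on $\vartheta$ used in the proposed ODISC algorithm: solving $F(\vartheta) = 0$ is equivalent to solving $\mathcal{P}7$.
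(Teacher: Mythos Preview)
Your proposal is correct and follows essentially the same Dinkelbach-type argument as the paper: both directions are obtained by multiplying or dividing the defining inequalities by $P_{tot}(\rho)>0$, and in both cases the maximizer of $F(\vartheta^{\star})$ is identified with the optimizer of the fractional problem. Your write-up is in fact cleaner than the paper's, which labels both directions as starting from ``$\rho^{\star}$ optimal for $\mathcal{P}8$'' and ends up somewhat repetitive; your explicit attention to the positivity of $P_{tot}$ and to attainment of the maximum is a welcome addition but not a departure in method.
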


\begin{proof}
  \textbf{(i)}\ On the one hand, since $\rho^{\star}$ is the optimal transmit power of problem $\mathcal{P}8$, we can obtain that
  \begin{equation*}
    \vartheta^{\star} = \frac{\mathrm{EP}\text{-}\mathrm{EC}\left(\rho^{\star}\right) }{P_{tot}\left(\rho^{\star}\right)} \geq \frac{\mathrm{EP}\text{-}\mathrm{EC}\left(\rho\right)}{P_{tot}\left(\rho\right)},\ \forall \  0 \leq \rho \leq 0.
  \end{equation*}
  \par Based on the above inequality, we can obtain that
  \vspace{-0.2cm}
  \begin{subequations}\label{e57}
    \begin{align}
       & \mathrm{EP}\text{-}\mathrm{EC}\left(\rho\right) - \vartheta^{\star}P_{tot}\left(\rho\right) \leq 0,\\
       & \mathrm{EP}\text{-}\mathrm{EC}\left(\rho^{\star}\right) - \vartheta^{\star}P_{tot}\left(\rho^{\star}\right) = 0.
    \end{align}
  \end{subequations}
  \vspace{-0.4cm}
  \par From (\ref{e57}), we can observe that $F(\vartheta^{\star}) = \max\limits_{0< \rho < P_{max}} \left\{\mathrm{EP}\text{-}\mathrm{EC}\left(\rho\right)-\vartheta^{\star}P_{tot}\left(\rho\right)\right\} = 0$, and the maximum value of $F(\vartheta^{\star})$ is taken on the point of optimal transmit power $\rho^{\star}$.
  \par \textbf{(ii)}\ On the other hand, since $\rho^{\star}$ is the optimal transmit power of problem $\mathcal{P}8$, we can obtain that $\mathrm{EP}\text{-}\mathrm{EC}\left(\rho^{\star}\right)-\vartheta^{\star}P_{tot}\left(\rho^{\star}\right)=0$, which implies
  \begin{equation*}
    \mathrm{EP}\text{-}\mathrm{EC}\left(\rho\right)-\vartheta^{\star}P_{tot}\left(\rho\right) \leq \mathrm{EP}\text{-}\mathrm{EC}\left(\rho^{\star}\right)-\vartheta^{\star}P_{tot}\left(\rho^{\star}\right) = 0.
  \end{equation*}
  \par Based on the above inequality, we can derive the same conclusion as stated in (\ref{e57}a) and (\ref{e57}b). Thus, $F(\vartheta^{\star}) = \max\limits_{0< \rho < P_{max}} \left\{\mathrm{EP}\text{-}\mathrm{EC}\left(\rho\right)-\vartheta^{\star}P_{tot}\left(\rho\right)\right\} = 0$ and the maximum value of $F(\vartheta^{\star})$ is also taken on the point of optimal transmit power $\rho^{\star}$.
  \par So the proof of \textbf{\emph{Theorem 7}} is concluded.
\end{proof}
\vspace{-0.1cm}
\par According to Lemma 1, Lemma 2, and Theorem 7, solving the original problem $\mathcal{P}6$ is equivalent to solving the equation $F(\vartheta) = 0$, and we can obtain that
\vspace{-1em}
\begin{equation}\label{e58}
     \!\! \max_{0< \rho < P_{max}} \!\! \left\{  \mathrm{EP}\text{-}\mathrm{EC}\!\left(\rho\right)\!-\!\vartheta^{\star} P_{tot}\!\left(\rho\right) \right\} \!=\! 0 \Longleftrightarrow \vartheta = \vartheta^{\star}.
\end{equation}
\vspace{-2em}
\par Furthermore, according to Newton's iterative method\cite{dembo1982inexact}, the iterative function of the EP-EE can be derived as follows:
\vspace{-0.2cm}
\begin{equation}\label{e59}
   \begin{aligned}
     \vartheta(n+1) = \vartheta(n) - \frac{F(\vartheta(n))}{F^{'}(\vartheta(n))} = \frac{\mathrm{EP}\text{-}\mathrm{EC}(\rho^{\star}(n))}{P_{tot}(\rho^{\star}(n))},
   \end{aligned}
\end{equation}
where $\rho^{\star}(n)$ denotes the unique optimal transmit power of $\mathcal{P}8$ for a given value of EP-EE $\vartheta(n)$, and $(n)$ denotes the $n$-th iteration.
\par Therefore, we finally can obtain the following optimization problem as follows:
\vspace{-0.1cm}
\begin{subequations}\label{e60}
    \begin{align}
      \mathcal{P}10:\ \ &F(\rho^{\star}(n)|\vartheta(n))= \mathrm{EP}\text{-}\mathrm{EC}(\rho^{\star}(n)) - \vartheta(n) P_{tot}(\rho^{\star}(n)) = 0,\\
      & \vartheta(n+1) = \frac{\mathrm{EP}\text{-}\mathrm{EC}\left(\rho^{\star}(n)\right)}{P_{tot}\left(\rho^{\star}(n)\right)}.
    \end{align}
\end{subequations}

\par To solve $\mathcal{P}10$ effectively, we propose a novel and low-complexity algorithm named the outer-descent inner-search collaborative algorithm (ODISC). Algorithm 3 includes detailed descriptions of the ODISC. The core design idea of ODISC can be divided into two aspects for elaboration. On the one hand, in the inner-search part of ODISC, the Golden-Section search method (GSS) is used to search for the optimal transmit power $\rho^{\star}(n)$ for a given $\vartheta(n)$. Algorithm 2 provides detailed descriptions of the inner-search part of ODISC. On the other hand, to avoid tedious derivative operations, a concise closed-form iterative expression is also derived in the outer-descent part of ODISC by exploiting the derived iterative function (\ref{e59}), which speeds up the convergence of the proposed ODISC to the globally optimal solution.

\begin{algorithm}[h]
\caption{GSS Algorithm for Solving Equation (\ref{e60}a)}
\begin{algorithmic}[1]
\REQUIRE
Lower search bound $loB(n)$; Upper search bound $upB(n)$; Upper bound EP-EE $F_{1}(n)$; Lower bound EP-EE $F_{2}(n)$; Inner-search Convergence criterion $\epsilon_{1}$; Golden-section ratio $\tau = 0.618$;\\
\ENSURE
Transmit power $\rho(n)$;\\
  \WHILE{$|upB(n)-loB(n)| \geq \epsilon_{1}$}
   \IF{$F_{1}(n) > F_{2}(n)$}
     \STATE Update $upB(n) \leftarrow \rho_{2}^{\star}(n)$; $\rho_{2}^{\star}(n) \leftarrow \rho_{1}^{\star}(n)$; $F_{2}(n) \leftarrow F_{1}(n)$;\\
     \STATE Update $\rho_{1}^{\star}(n) \leftarrow upB(n)-\tau \left(upB(n)-loB(n)\right)$ and $F_{1}(n) \leftarrow F\left(\rho_{1}^{\star}(n)|\vartheta^{\star}(n)\right)$;\\
   \ELSE
     \STATE Update $loB(n) \leftarrow \rho_{1}^{\star}(n)$; $\rho_{1}^{\star}(n) \leftarrow \rho_{2}^{\star}(n)$; $F_{1}(n) \leftarrow F_{2}(n)$;\\
     \STATE Update $\rho_{2}^{\star}(n) \leftarrow upB(n)+\tau * \left(upB(n)-loB(n)\right)$ and $F_{2}(n) \leftarrow F\left(\rho_{2}^{\star}(n)|\vartheta^{\star}(n)\right)$;\\
   \ENDIF
  \ENDWHILE
  \STATE \textbf{return} $\rho^{\star}(n) \leftarrow \frac{1}{2}\left(\rho_{1}(n)+\rho_{2}(n)\right)$;\\
\label{code:recentEnd}
\end{algorithmic}
\end{algorithm}

\vspace{-0.1cm}
\begin{algorithm}[h]
\setstretch{0.90}
\caption{ODISC Optimization Algorithm}
\begin{algorithmic}[1]
\REQUIRE 
Lower search bound $LowerB$; Upper search bound $UpperB$; Inner-search convergence criterion $\epsilon_{1}$; Outer-descent convergence criterion $\epsilon_{2}$; Golden-section ratio $\tau = 0.618$; max-iterations $\mathrm{Max}$-$\mathrm{iter}$;\\
\ENSURE 
Optimal E-EE $\vartheta^{\star}$; Optimal transmit power $\rho^{\star}$; The number of iterations $n$;\\
\STATE Initialize $n \leftarrow 1$; $\vartheta^{\star}(n) \leftarrow 0$; $loB(n) \leftarrow LowerB$; $upB(n) \leftarrow UpperB$;\\
\COMMENT{\texttt{\textbf{\underline{\underline{Loop 1}}}: Newton's Iterative Optimization}\ \ \ \ \ \ \ \ \ \ \ \ \ \ \ \ \ \ \ \ \ \ \ \ \ }
\WHILE{$n$ $<$ $\mathrm{Max}$-$\mathrm{iter}$}
\STATE Compute $\rho_{1}^{\star}(n) = upB(n) - \tau * \left(upB(n)-loB(n)\right)$;\\
\STATE Compute $\rho_{2}^{\star}(n) = loB(n) + \tau * \left(upB(n)-loB(n)\right)$;\\
\STATE Compute $F_{1}(n) = F\left(\rho_{1}^{\star}(n)|\vartheta^{\star}(n)\right)$ and $F_{2}(n) = F\left(\rho_{2}^{\star}(n)|\vartheta^{\star}(n)\right)$;\\
\COMMENT{\texttt{\textbf{\underline{\underline{Loop 2}}}: Golden-Section search method}\ \ \ \ \ \ \ \ \ \ \ \ \ \ \ \ \ \ \ \ \ \ \ \ \ }
  \STATE Execute Algorithm 2 to solve equation (60a) and obtain its unique solution $\rho(n)$;\\ 
  \IF{$F\left(\rho(n)|\vartheta^{\star}(n)\right)=0$ or $F\left(\rho(n)|\vartheta^{\star}(n)\right)<\epsilon_{2}$}
    \STATE Update $\rho^{\star} \leftarrow \rho(n)$;\\
    \STATE \textbf{return} $\vartheta^{\star}$; $\rho^{\star}$; $n$;\\
  \ELSE
    \STATE Update EP-EE $\vartheta^{\star}(n+1) \leftarrow \frac{EP-EC(\rho^{\star}(n))}{P_{tot}(\rho^{\star}(n))}$;\\
    \STATE Update $n \leftarrow n+1$;\\
  \ENDIF
\ENDWHILE
\label{code:recentEnd}
\end{algorithmic}
\end{algorithm}

\vspace{-0.5cm}
\subsection{Computational Complexity Analysis}
\vspace{-0.1cm}
\par In each iteration, ODISC executes Algorithm 2 with the EP-EE $\vartheta^{\star}(n-1)$ as the input to obtain $\rho^{\star}(n)$, and its computational complexity is $\mathcal{O}\big(\log(\frac{1}{\epsilon_{2}})\big)$ \cite{tsai2010golden}. The value of the EP-EE $\vartheta^{\star}(n)$ is then updated based on the iterative function (\ref{e59}) before being used as the input of Algorithm 2 again in the subsequent $(n+1)$-th iteration. The computational complexity of the outer-descent part of ODISC is $\mathcal{O}\big(\log\big(\frac{upB(n)-loB(n)}{\epsilon_{1}}\big)\big)$ for the $(n)$-th iteration\cite{d2018learning}. As a result, the overall computational complexity of ODISC is
$\mathcal{O}\big(\sum\limits_{n=1}^{N_{max}}(\log(\frac{1}{\epsilon_{2}})) \cdot \log\big(\frac{upB(n)-loB(n)}{\epsilon_{1}}\big) \big)$, where $\epsilon_{1}$ and $\epsilon_{2}$ are the respective outer-descent and inner-search convergence criterion, $N_{max}$ is the termination number of iterations, and $upB(n)$ and $loB(n)$ are the upper-search and lower-search bounds of Algorithm 2 for the $n$-$\mathrm{th}$ iteration, respectively.

\vspace{-0.1cm}
\section{Performance Evaluation}
\par In this section, we conduct intensive numerical simulations to validate and demonstrate the proposed statistical QoS provisioning analysis and performance optimization schemes in the xURLLC-enabled MU-MIMO wireless networks. Unless otherwise specified, the default simulation parameters are listed in Table I.
\vspace{-0.2cm}
\begin{table}[h]
 \renewcommand{\arraystretch}{1.0}
 \caption{Simulation Parameter Settings} 
 \label{table_example}
 \centering
 \resizebox{0.70\columnwidth}{!}
 {
 \begin{tabular}{|c|l|c|p{0.30cm}}
  \hline
  \bfseries Parameter &\bfseries Physical meaning &\bfseries Value\\
  \hline
  $d_{min}$ & Minimum distance & $35$ m\\
  \hline
  $d_{max}$ & Maximum distance & $95$ m\\
  \hline
  $\mu_{cp}$ & Constant path loss & $-12$ dB \\
  \hline
  $\alpha_{0}$ & Path loss factor & $2.5$ \\
  \hline
  $M$ & Number of mobile UEs & $12$ \\
  \hline
  $N_{0}$ & Noise power spectral density & $-90$ dBm/Hz\\
  \hline
  $t_{DE}$ & Length of time slot & $0.5$ ms\\
  \hline
  $\epsilon_{1}/\epsilon_{2}$ & Convergence criterions& $10^{-5}$\\
  \hline
  $\mathrm{Max}$-$\mathrm{iter}$ & Maximum iterations & $10^{2}$\\ 
  \hline
  $P_{c}$ & Constant circuit power & $0.5$ watts\\
  \hline
  $P_{max}$ & Upper bound of transmit power & $2$ watts\\
  \hline
  $\varphi$  & Power amplifier coefficient & 0.5\\
  \hline
  $LowerB$ & Lower search bound & $10^{-6}$ \\
  \hline
  $UpperB$ & Upper search bound & $P_{max}$ \\
  \hline
 \end{tabular}
 }
\end{table}
\begin{figure}[h]
\centering
  \subfigure[]{
  \includegraphics[scale=0.225]{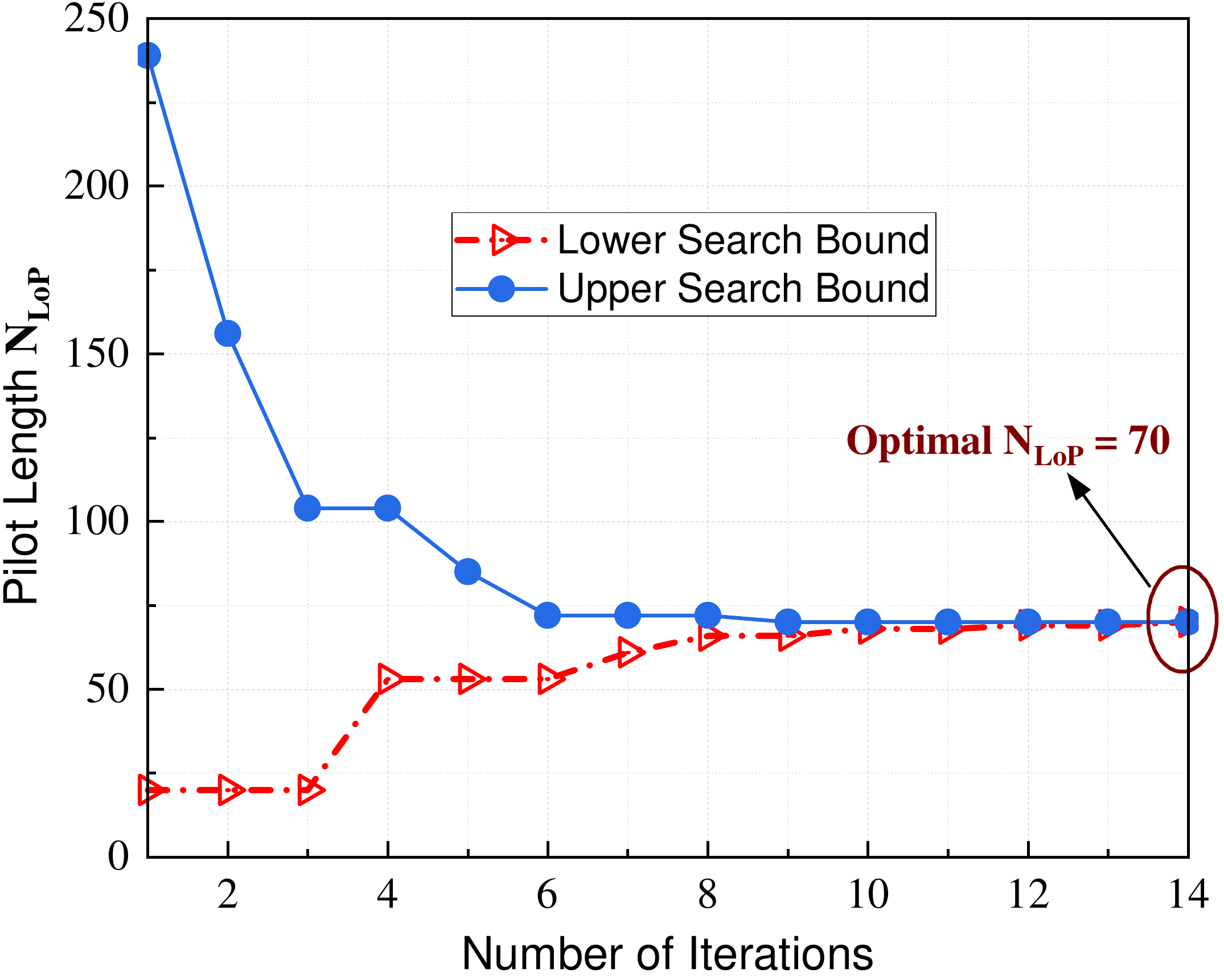}
  }
  \subfigure[]{
   \includegraphics[scale=0.225]{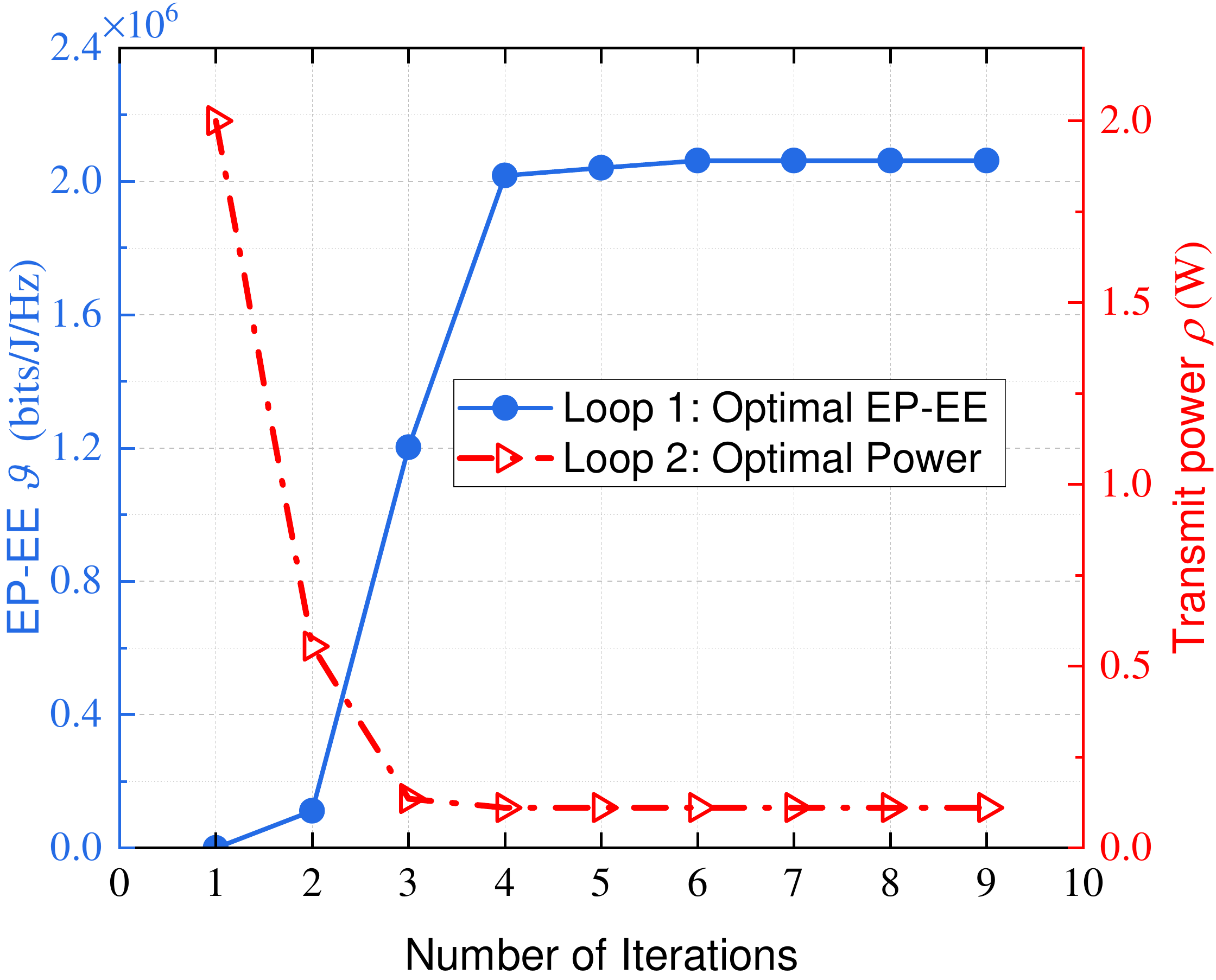}
  }
  \subfigure[]{
    \includegraphics[scale=0.225]{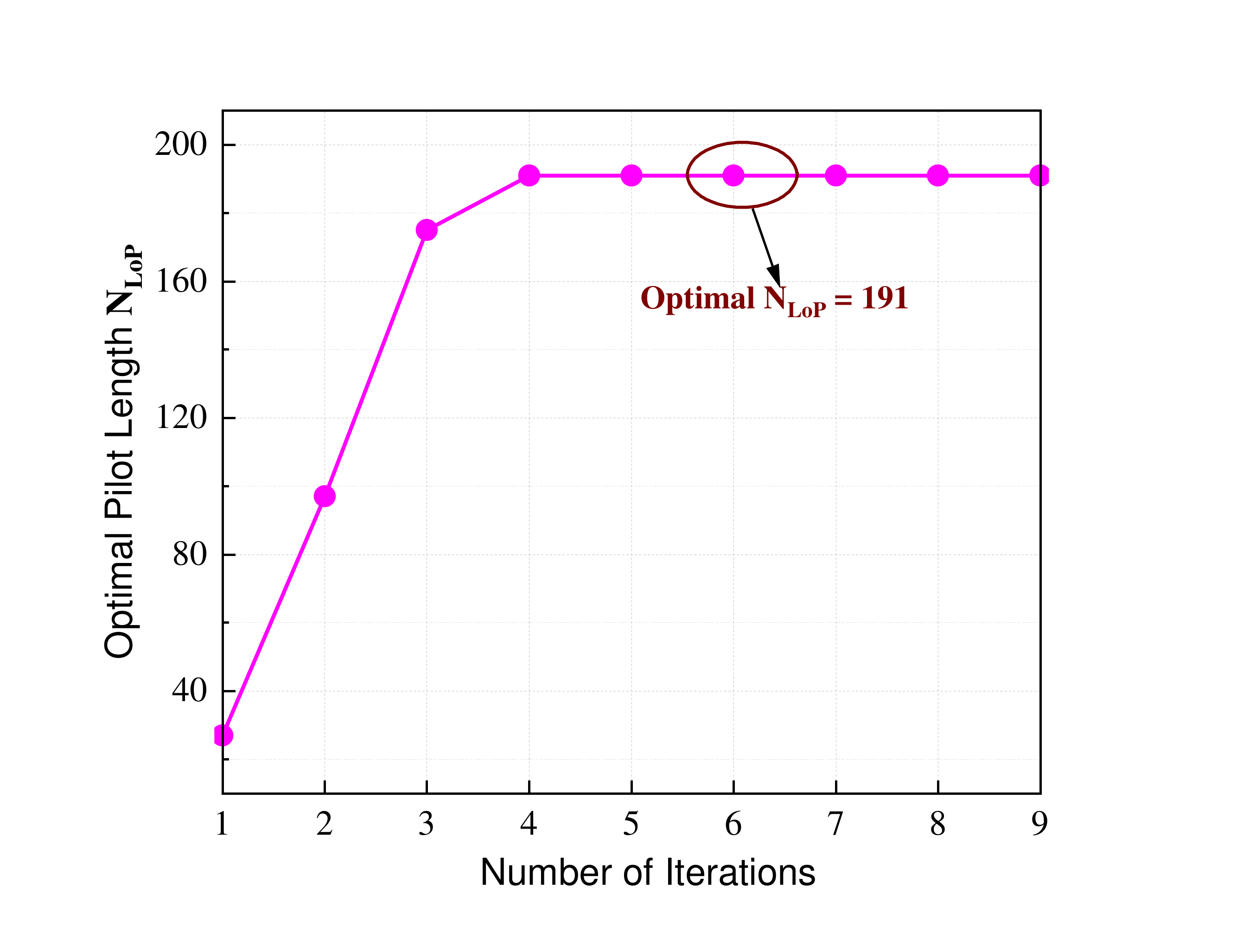}
  }
  \caption{(a) Convergence analysis of IFGSS Algorithm, where $\rho = 0.5$ W, $r_{m} = 0.2$ bpcu, $N_{T} = 50$, $B = 480$ kHz (i.e., $N_{CU} = 240$), $\theta_{m} = 0.2$; (b) convergence analysis of GSS Algorithm and ODISC Algorithm, and (c) the optimal pilot length $N_{LoP}$ in each iteration, where $N_{T} = 50$, $B = 10^{3}$ kHz (i.e., $N_{CU} = 500$), $r_{m} = 0.2$ bpcu, $\theta_{m} = 0.2$.}
\end{figure}
\vspace{-1em}
\subsection{Convergence Analysis of the Proposed Algorithms}
\vspace{-0.1cm}
\par Fig. 2 (a) depicts the convergence behavior of the upper-search and lower-search bounds of the proposed IFGSS algorithm. Numerical results reveal that the upper-search bound decreases drastically with the increasing number of iterations, while the lower-search bound increases. Moreover, both the upper-search and lower-search bounds simultaneously converge to the optimal pilot length in less than 14 iterations, and its value is 70. Consequently, we demonstrate effectiveness of Algorithm 1 with regards to its rapid convergence speed. Fig. 2 (b) and (c) depict the convergence behavior of the proposed GSS algorithm and the ODISC algorithm, respectively. From Fig. 2 (b), numerical results demonstrate that the value of EP-EE ameliorates considerably as the number of iterations increases; meanwhile, power consumption decreases dramatically. On the other hand, Fig. 2 (b) implies that the pilot overhead increases and eventually converges to 191 when optimizing EP-EE and power consumption, which is consistent with intuitive expectations since the reduction of transmit power entails an increase in pilot overhead. Additionally, numerical results in Fig. 2 (b) and (c) demonstrate that the proposed ODISC algorithm performs superiorly in terms of convergence speed since both EP-EE and transmit power converge simultaneously in less than 6 iterations. The principal cause of that is the collaborative optimization of EP-EE via outer-descent and inner-search design conventions, which enables the ODISC algorithm to rapidly attain optimal solutions. Specifically, the inner-search part degenerates the problem into a one-dimensional search, which is solved accurately by the proposed GSS algorithm. Additionally, the concise iterative function (\ref{e59}) provided in the outer-descent part eliminates the need for tedious derivative operations and accelerates the ODISC's convergence speed significantly.
\vspace{-1em}
\subsection{Tradeoff Between Decoding Error Probability and Pilot Length} 
\par In Fig. 3, the performance tradeoff between decoding error probability and pilot length is investigated. Numerical results indicate that both excessively short and excessively long pilot lengths result in an unsatisfactory decoding error probability. The objective analysis demonstrates that a too-short pilot length can fail to accurately reflect the CSI, while an overly long pilot length will inefficiently consume resources and reduce the effectiveness of data transmission accuracy. To corroborate the optimality of the proposed IFGSS algorithm, a comparison was made with EM. The results obtained from IFGSS and EM are marked with $\boldsymbol{\square}$ and $\boldsymbol{\ast}$ in Fig. 3, respectively. This comparison revealed that the outcomes of IFGSS are perfectly coincident with those obtained from EM, thus demonstrating that the proposed IFGSS algorithm can not only arrive at the optimal pilot length but also show superior convergence performance compared to EM, which traverses the entire search interval $\left[M,N_{CU}-1\right]$. Furthermore, numerical results also demonstrate that higher transmit power, increased BS antennas, and reduced maximum achievable data rate all have substantial positive effects on improving the decoding error probability and decreasing the pilot overhead. For instance, increasing the BS antennas from 30 to 50 drastically reduces the decoding error probability and pilot overhead from $1.56 \times 10^{-2}$ to $3.92 \times 10^{-5}$ and from $139$ to $69$, respectively.

\vspace{-1.5em}

\begin{figure}[h]
\begin{minipage}[t]{0.5\linewidth}
\centering
\includegraphics[scale=0.27]{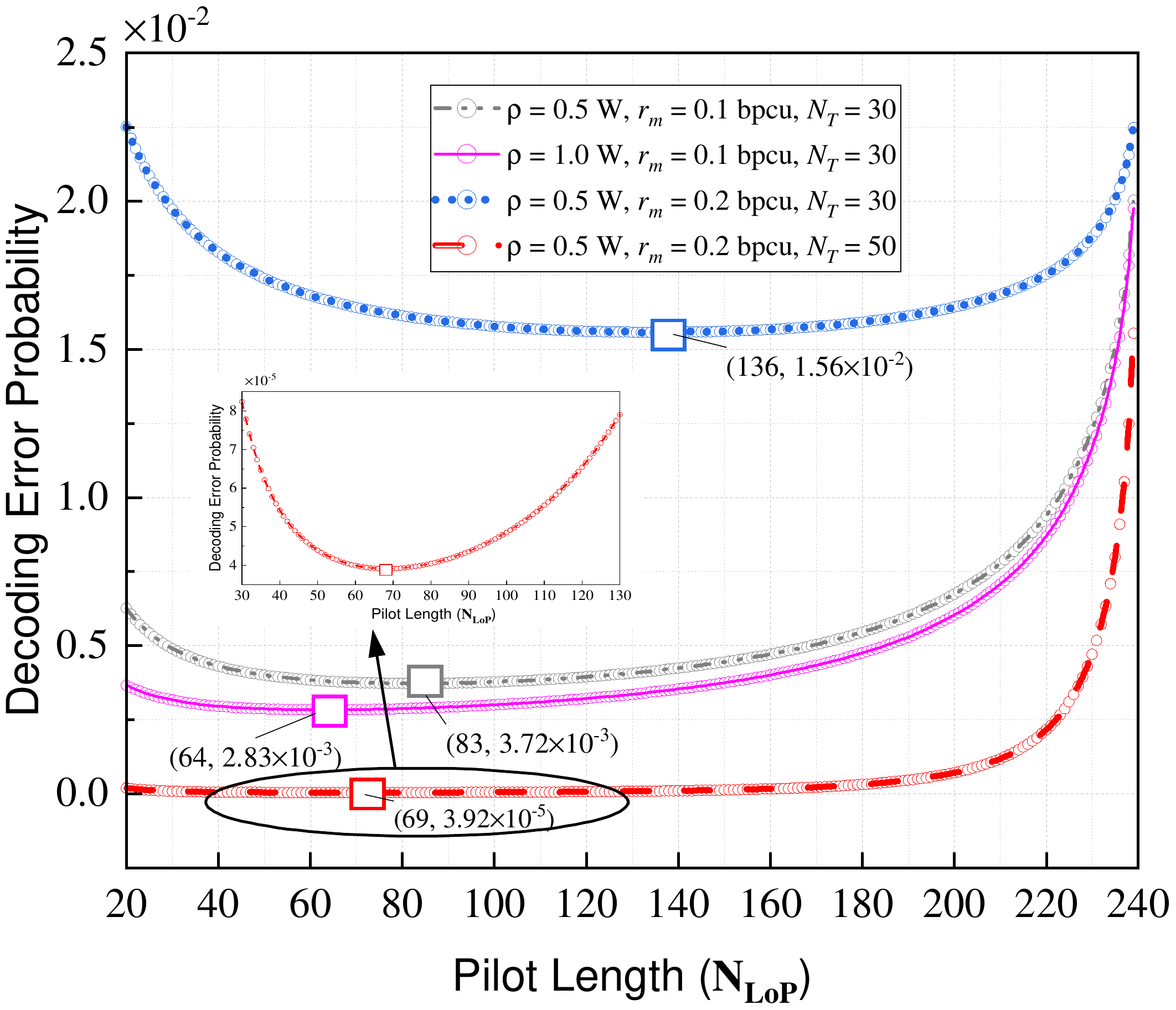}
\caption{The decoding EP of mobile UEs versus pilot length.$\ \ $ $B = 480$ kHz (i.e., $N_{CU} = 240$), $\theta_{m} = 0.2$.}
\label{frame}
\end{minipage}%
\begin{minipage}[t]{0.5\linewidth}
\centering
\includegraphics[scale=0.27]{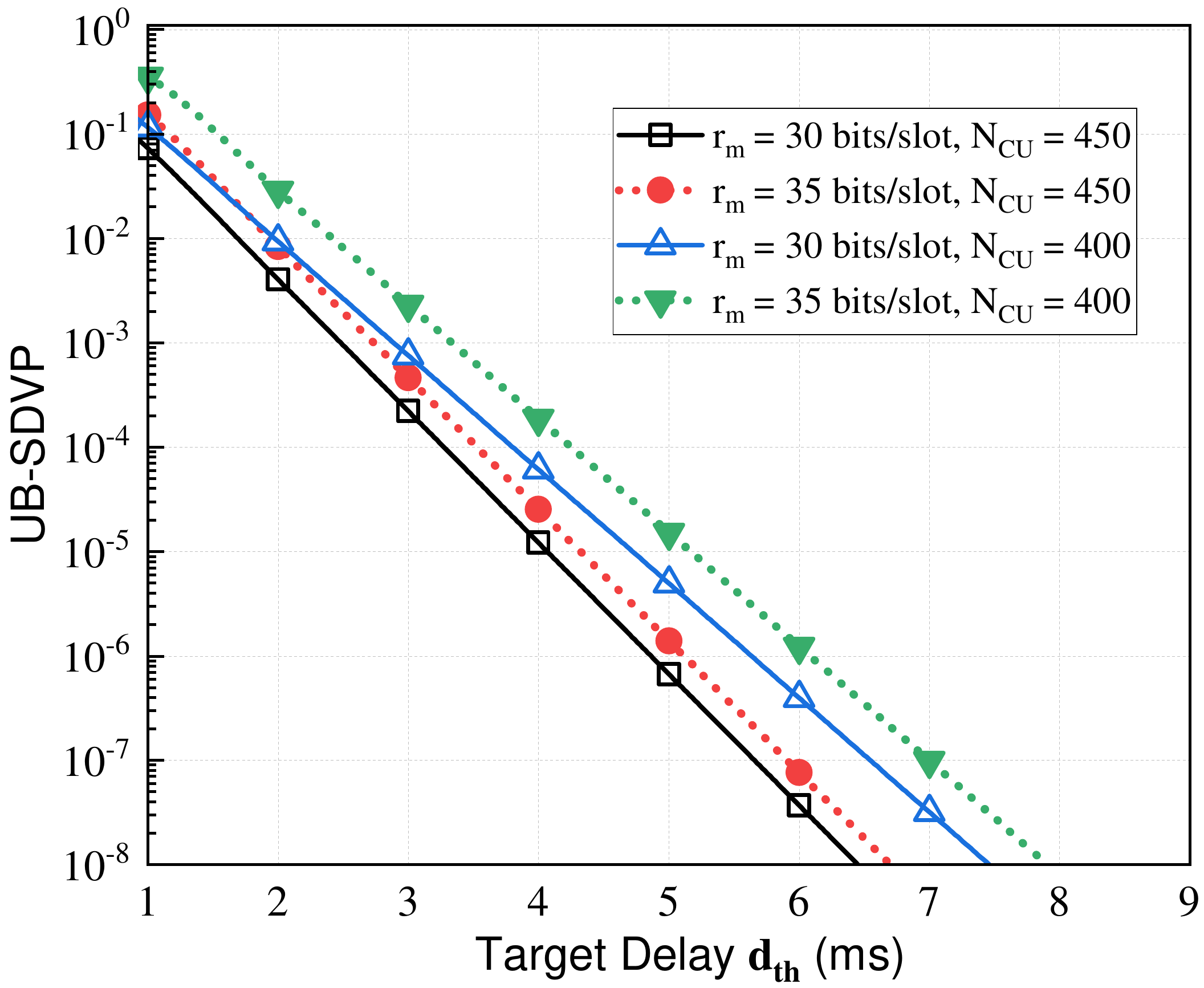}
\caption{The UB-SDVP versus target delay. $\rho = 0.5$ W, $N_{T} = 50$, $\epsilon_{m} = 10^{-6}$, $\theta_{m} = 0.2$.}
\label{label}
\end{minipage}
\end{figure}

\subsection{Tradeoff Between UB-SDVP and Target Delay}
\par Fig. 4 reveals the performance tradeoff between UB-SDVP and target delay. Numerical results show that the UB-SDVP decreases significantly as the target delay increases. If the target delay is stringent (for example, less than 1 ms), the UB-SDVP approaches an intolerable reliability of xURLLC, even exceeding $10^{-1}$. If the target delay is moderate (for example, $5\thicksim 7$ ms), the UB-SDVP can meet the extreme reliability of xURLLC overall, i.e., $1-10^{-7}$. If the target delay is relaxed (for example, greater than 7 ms), the UB-SDVP is less than $10^{-8}$. These experimental results suggest that loosening the target delay moderately can redeem remarkable reliability for the developed xURLLC-enabled massive MU-MIMO networks if conditions allow. Furthermore, the UB-SDVP can be improved by increasing the blocklength or decreasing the average arrival rate. This is primarily due to the fact that when the pilot length is determined, increasing the blocklength reserves additional resources for effective short-packet data communications, whereas decreasing the average arrival rate causes queues to build up more gradually, reducing the likelihood of short-packet data being stacked in the queues.

\vspace{-1.5em}
\subsection{Tradeoff Between UB-SDVP and Decoding Error Probability}
\vspace{-0.1cm}
\par According to FBC theory\cite{polyanskiy2010channel,polyanskiy2011feedback0,yang2014quasi}, short-packet data communications in the developed xURLLC-enabled massive MU-MIMO networks have a non-vanishing decoding error probability in the finite blocklength regime. Consequently, numerical simulations were conducted to evaluate the performance tradeoff between UB-SDVP and decoding error probability, as shown in Fig. 5. Numerical results suggest that the decoding error probability that minimizes the UB-SDVP lies between 0.01 and 0.02, and both excessively small and excessively large decoding error probabilities lead to an unsatisfactory UB-SDVP. Based on the objective analysis, it can be found that an excessively large decoding error probability implies that numerous short-packet data will be lost, thereby leading to a higher UB-SDVP, while an excessively small decoding error probability means that the system is forced to choose a relatively low service rate, which triggers that the short-packet data cached in the queues cannot be served in time. In addition, it can also be seen that increasing the transmit power and the number of BS antennas considerably enhances the reliability and mitigates the UB-SDVP caused by decoding error probability.
\vspace{-1em}
\subsection{Relation Between UB-SDVP and BS Antenna Numbers}
\vspace{-0.5em}
\par As illustrated in Fig. 6, we examine the relationship between UB-SDVP and BS antenna numbers for the developed xURLLC-enabled massive MU-MIMO networks in the finite blocklength regime. Numerical results show that increasing the number of BS antennas reduces the UB-SDVP significantly. This is primarily due to the fact that additional BS antennas furnish more spatial degrees of freedom for resource allocation as well as improved channel conditions (i.e., SINR), thus considerably lowering the UB-SDVP. Shadow fading is highly detrimental to wireless networks and can cause deep fading during xURLLC short-packet data communications. Nevertheless, our numerical results demonstrate that the developed xURLLC-enabled massive MU-MIMO networks can effectively compensate for the total throughput by equipping more BS antennas even when shadow fading is relatively large (e.g., 6 dB). Despite this, shadow fading persists as one of the primary factors contributing to the drastic deterioration in wireless network reliability. As a result, developing effective schemes to mitigate the adverse effects of shadow fading is crucial.

\vspace{-1em}

\begin{figure}[h]
\begin{minipage}[h]{0.5\linewidth}
\centering
\includegraphics[scale=0.27]{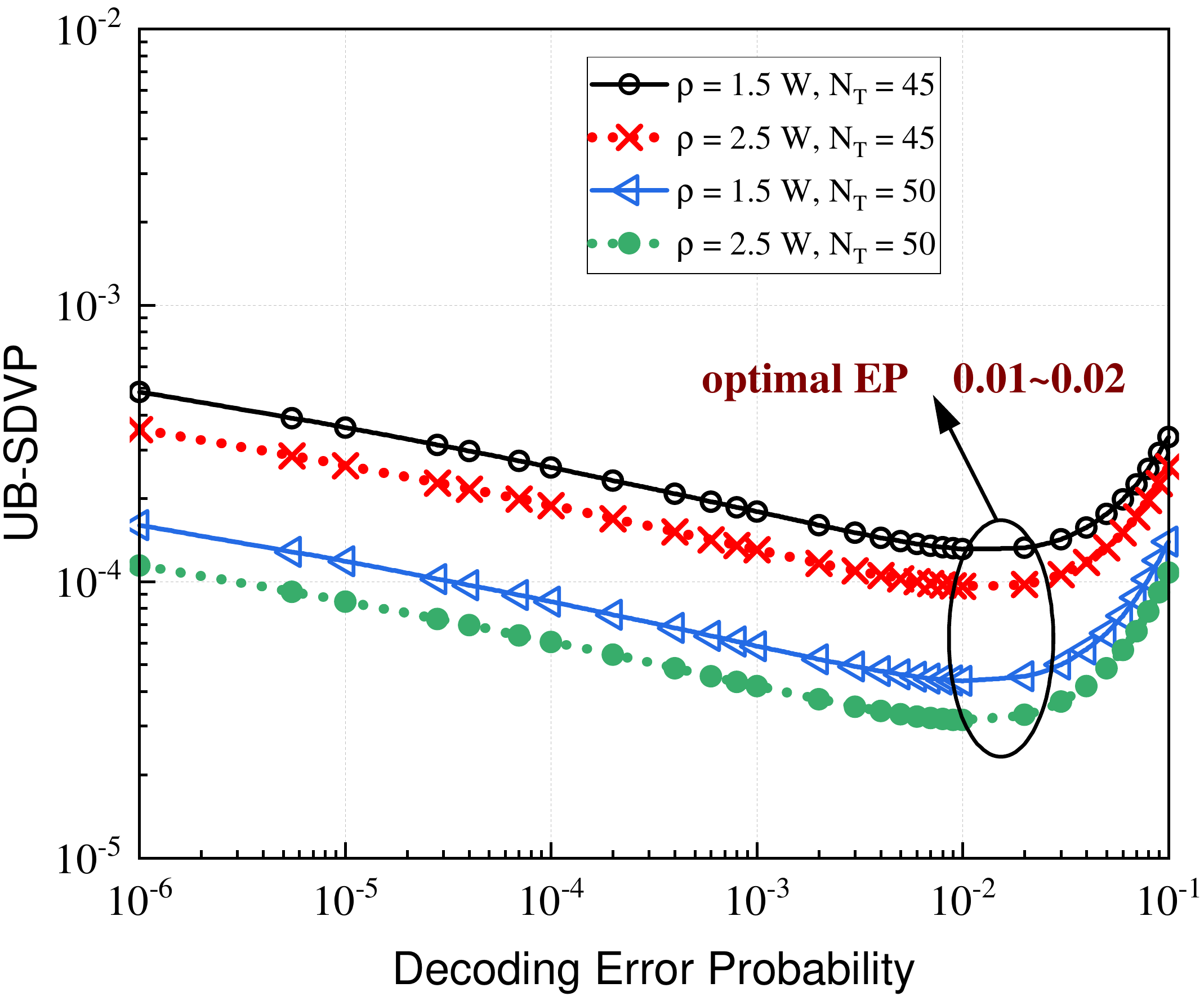}
\caption{The UB-SDVP versus decoding EP. $B = 1$MHz$\ \ $ (i.e.,$N_{CU} = 500$), $d_{th} = 5$ ms, $\theta_{m} = 0.2$, $\lambda_{m}^{\dag} = 40$ bits/slot.}
\label{frame}
\end{minipage}%
\begin{minipage}[h]{0.5\linewidth}
\centering
 \includegraphics[scale=0.27]{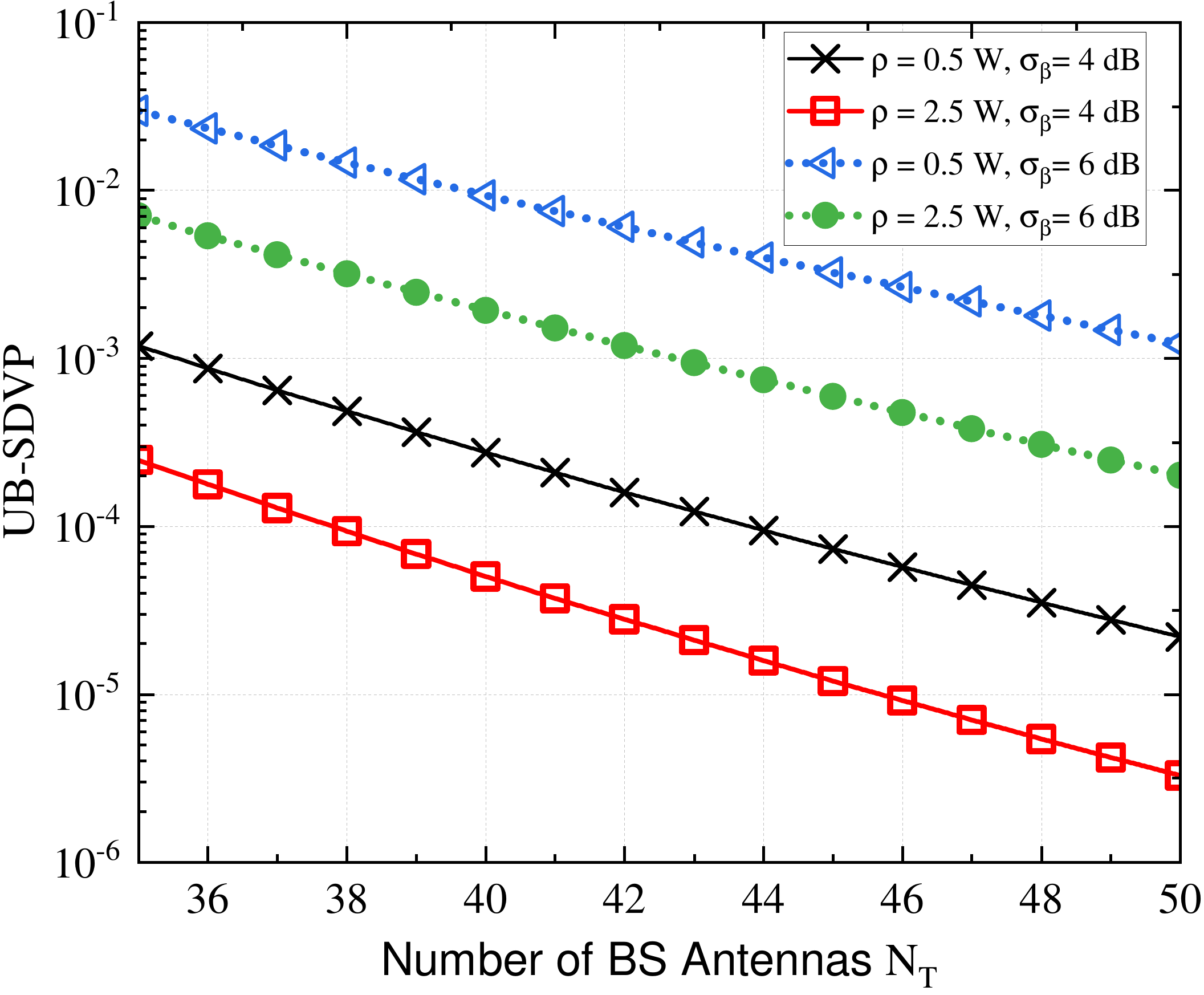}
 \caption{The UB-SDVP versus number of BS antennas. $B = 1$ MHz (i.e., $N_{CU} = 500$), $d_{th}=5$ ms, $\epsilon_{m} = 10^{-6}$, $\theta_{m} = 0.2$, $\lambda_{m}^{\dag} = 40$ bits/slot.}
\label{label}
\end{minipage}
\end{figure}

\vspace{-2.5em}

\subsection{Maximum EP-EC Versus Blocklength and QoS Requirements}
\vspace{-0.1cm}
\par In Fig. 7, we analyze the maximum EP-EC with various blocklengths and QoS exponents for proposed xURLLC-enabled massive MU-MIMO networks in the finite blocklength regime. As expected, the maximum EP-EC increases with the blocklength, whereas it decreases with more stringent QoS exponents, as shown in Fig. 7 (a). Additionally, Fig. 7 (a) implies that increasing the number of BS antennas is an effective means to serve xURLLC with provide more stringent QoS exponents. Besides, some intriguing observations can be discovered from Fig. 7 (b) and (c). On the one hand, when the QoS exponents are comparatively relaxed (i.e., $\theta_{m} \rightarrow 0$), the maximum EP-EC reaches its upper bound, as shown in Fig. 7 (b). For instance, when $\theta_{m} = 10^{-3}$, the maximum EP-EC demonstrates a tremendous improvement as the blocklength grows. This is because of the abundant resources can be reserved for the xURLLC short-packet data communications. In contrast, when the QoS exponents are stringent (i.e., $\theta_{m} \rightarrow \infty$), a lower bound of the maximum EP-EC is recorded, as demonstrated in Fig. 7 (c). For instance, when $\theta_ {m} = 0.8$, the maximum EP-EC is hardly improved even though the blocklength increases. Because of more stringent QoS exponents, despite adequate resources fail to meet the target. On the other hand, moderately slackening QoS exponents (e.g., $\theta_{m}:10\rightarrow 0.1$) significantly raises the maximum EP-EC, as shown in Fig. 7 (c). However, further slackening QoS exponents (e.g., $\theta_{m}:0.1 \rightarrow 0.001$) leads to an insignificant improvement in the maximum EP-EC when the QoS exponents are originally loose. Under such circumstances, increasing the blocklength can make the maximum EP-EC obtain a stepped promotion.

\begin{figure*}[h] 
\centering
  \subfigure[]{
    \includegraphics[scale=0.22]{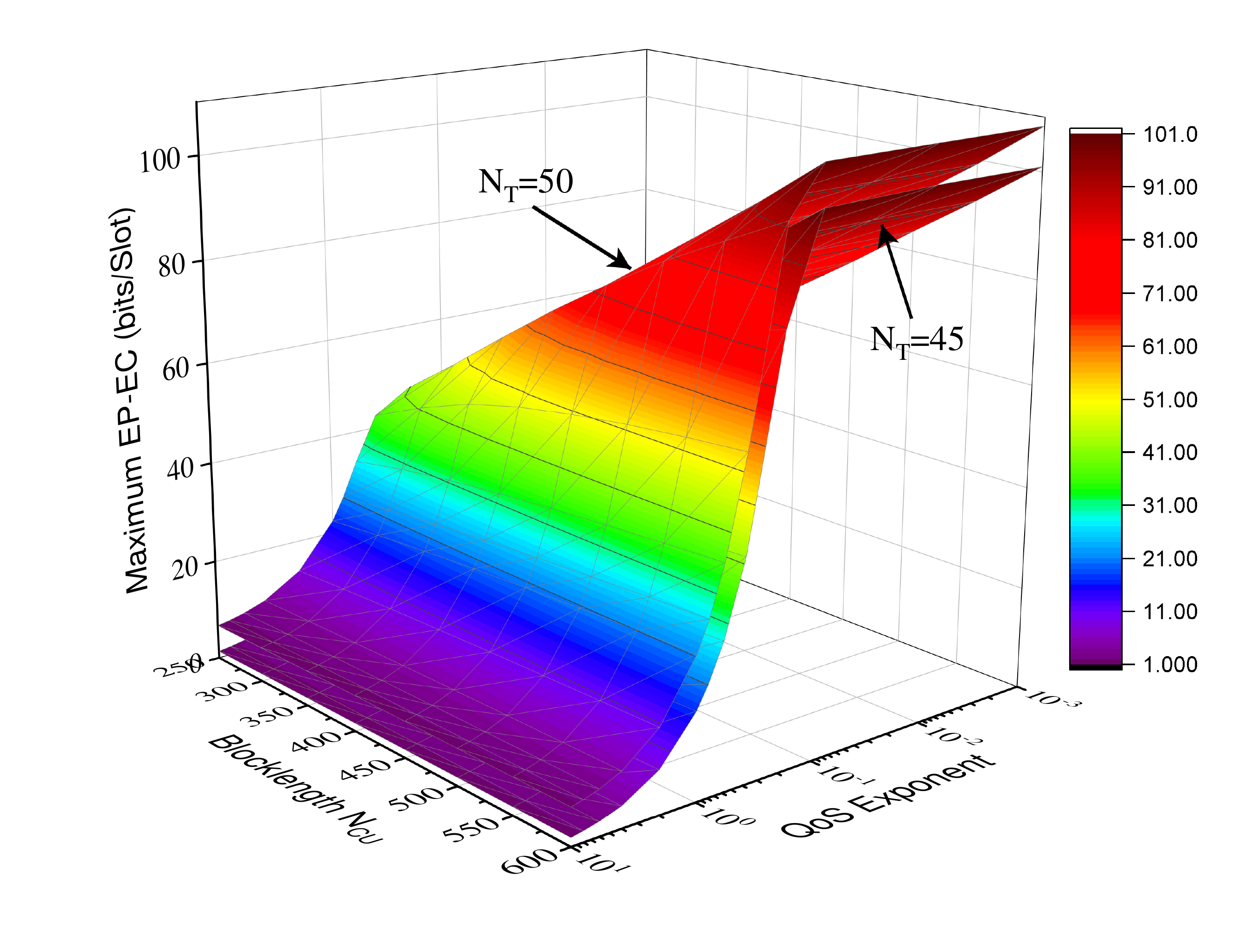}
  }
  \subfigure[]{
    \includegraphics[scale=0.23]{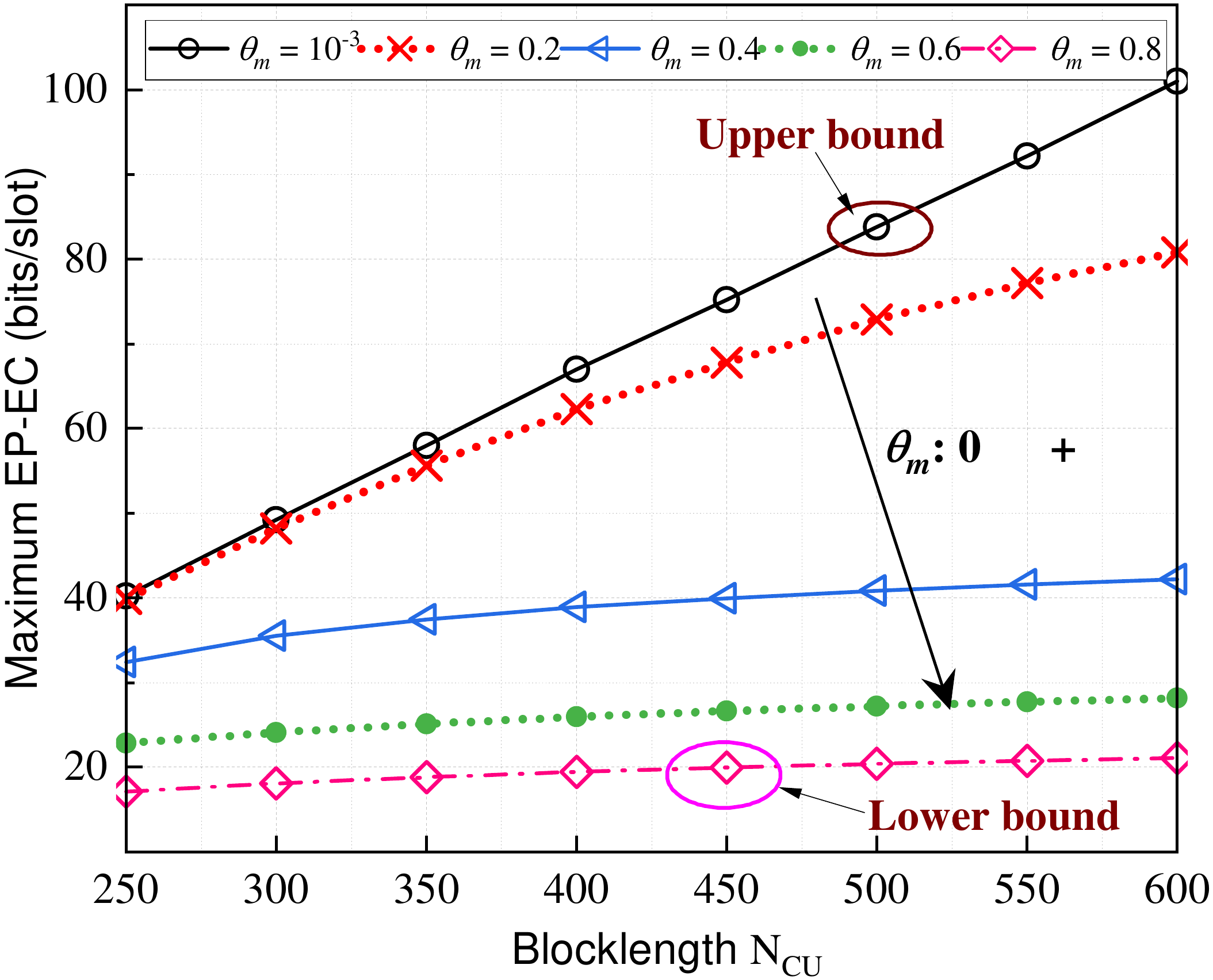}
  }
  \subfigure[]{
    \includegraphics[scale=0.23]{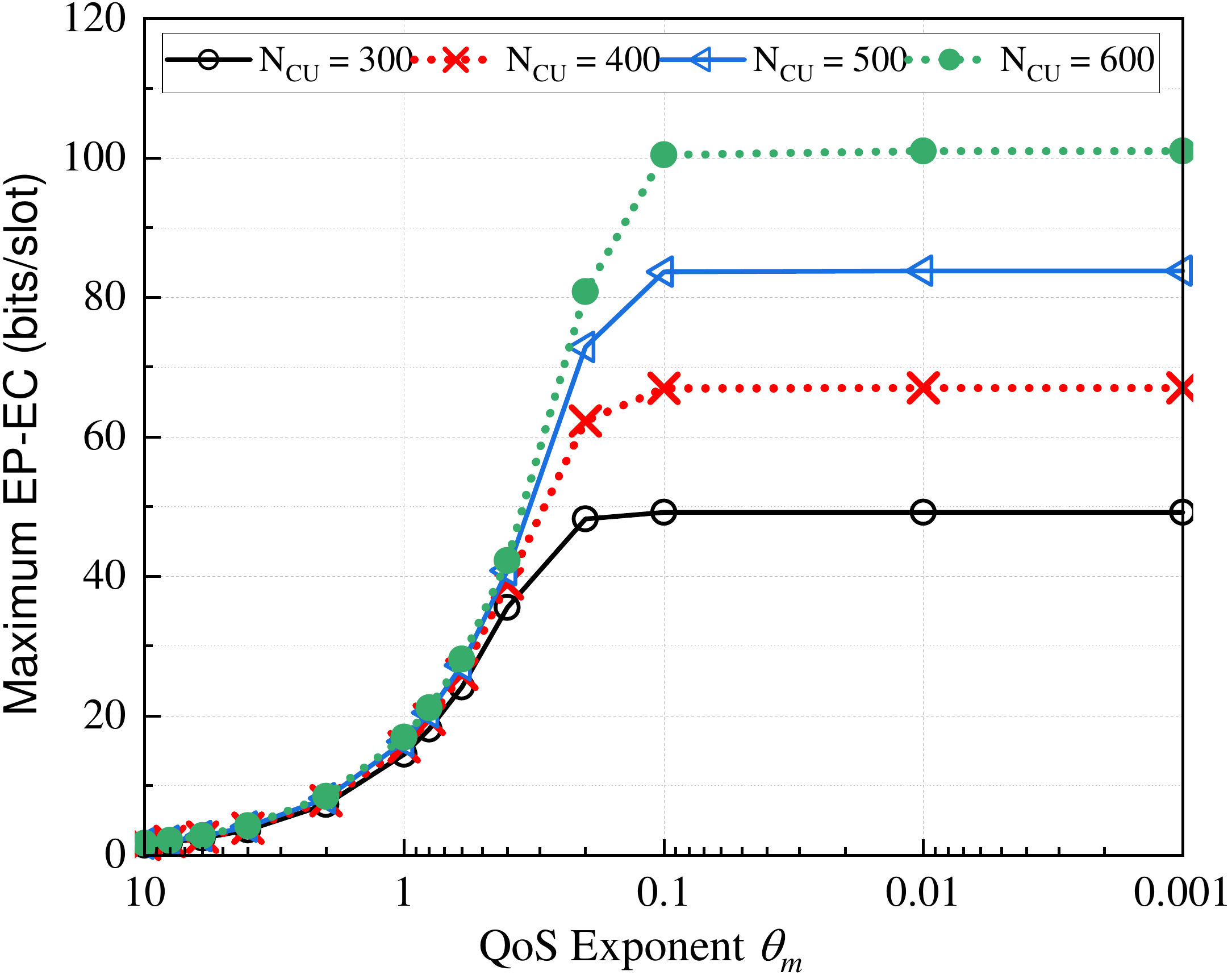}
  }
\caption{Maximum EP-EC versus Blocklength and QoS exponent. $\rho = 1.5$ W, $r_{m} = 0.2$ bpcu. In (a) and (b), $N_{T} = 45$, $\sigma_{B} = 6$ dB.}
\label{fig:label}
\end{figure*}
\vspace{-2em}

\subsection{Tradeoff Between Maximum EP-EE and Maximum Achievable Data Rate}
\par Fig. 8 reveals the performance tradeoff between maximum EP-EE and maximum achievable data rate for the developed xURLLC-enabled massive MU-MIMO networks. Some interesting observations can be deduced from Fig. 8 (a) and (b). To begin with, increasing the maximum achievable data rate necessitates investing more transmit power. In the second place, investing more transmit power can remarkably improve the maximum EP-EE performance before the inflection point. However, the performance improvement tends to flatten out as the inflection point approaches. After crossing the inflection point, increasing the transmit power continuously does not further improve the maximum EP-EE, and it changes from rising to declining. Last but not least, as the BS antennas increase and the shadow fading decreases, the inflection point of the maximum EP-EE will shift to the upper right, thus enlarging the lifting space of xURLLC-enabled massive MU-MIMO wireless networks. Fig. 8 (c) reveals the performance tradeoff between maximum EP-EE and optimal transmit power. There is a unique pair of maximum EP-EE and optimal transmit power, denoted as $\left(\rho^{\star},\vartheta^{\star}\right)$ for a given maximum achievable data rate, demonstrating the optimality of the proposed ODISC algorithm. It can also be noted that maximum EP-EE is a quasic-concave function concerning the optimal transmit power. As a result, by appropriately adjusting the maximum achievable data rate, the globally optimal EP-EE can be obtained, which is marked with red-solid symbols in Fig. 8 (c).
\vspace{-1.0em}
\subsection{Maximum EP-EE Versus Blocklengths and QoS requirements}
\vspace{-0.1cm}
\par As illustrated in Fig. 9 (a) and (b), we investigate the maximum EP-EE and the optimal transmit power with different blocklengths and QoS exponents for xURLLC-enabled massive MU-MIMO networks in the finite blocklength regime, respectively. Our primary focus is on examining the performance of maximum EP-EE, for which the QoS exponents vary within a moderate range. By combining with Fig. 9 (a) and (b) for comparative analysis, we can observe that increasing the blocklength noticeably ameliorates the maximum EP-EE, and at the same time, ensures low power consumption when the QoS exponents are loose. The intuition behind this is that increasing blocklength is an economical way to enhance the maximum EP-EE performance as minimal extra power is needed when the QoS exponents are loose. When the QoS exponents are relatively stringent, however, increasing blocklength is not as cost-effective in terms of improving the maximum EP-EE. Moreover, numerical results indicate that adding more BS antennas has a positive impact on overall maximum EP-EE while only slightly increasing the overall power consumption. The intuition behind this is that additional transmit power budget is needed to maintain the working quality of the newly added antennas when the number of BS antennas increases. Nevertheless, we are confident that the increased power budget is worthwhile, as it is used to ensure the stability of the antenna array, which is critical factor in the successful improvement of maximum EP-EE by increasing BS antennas.
\begin{figure*}[h]
\centering
  \subfigure[]{
   \includegraphics[scale=0.23]{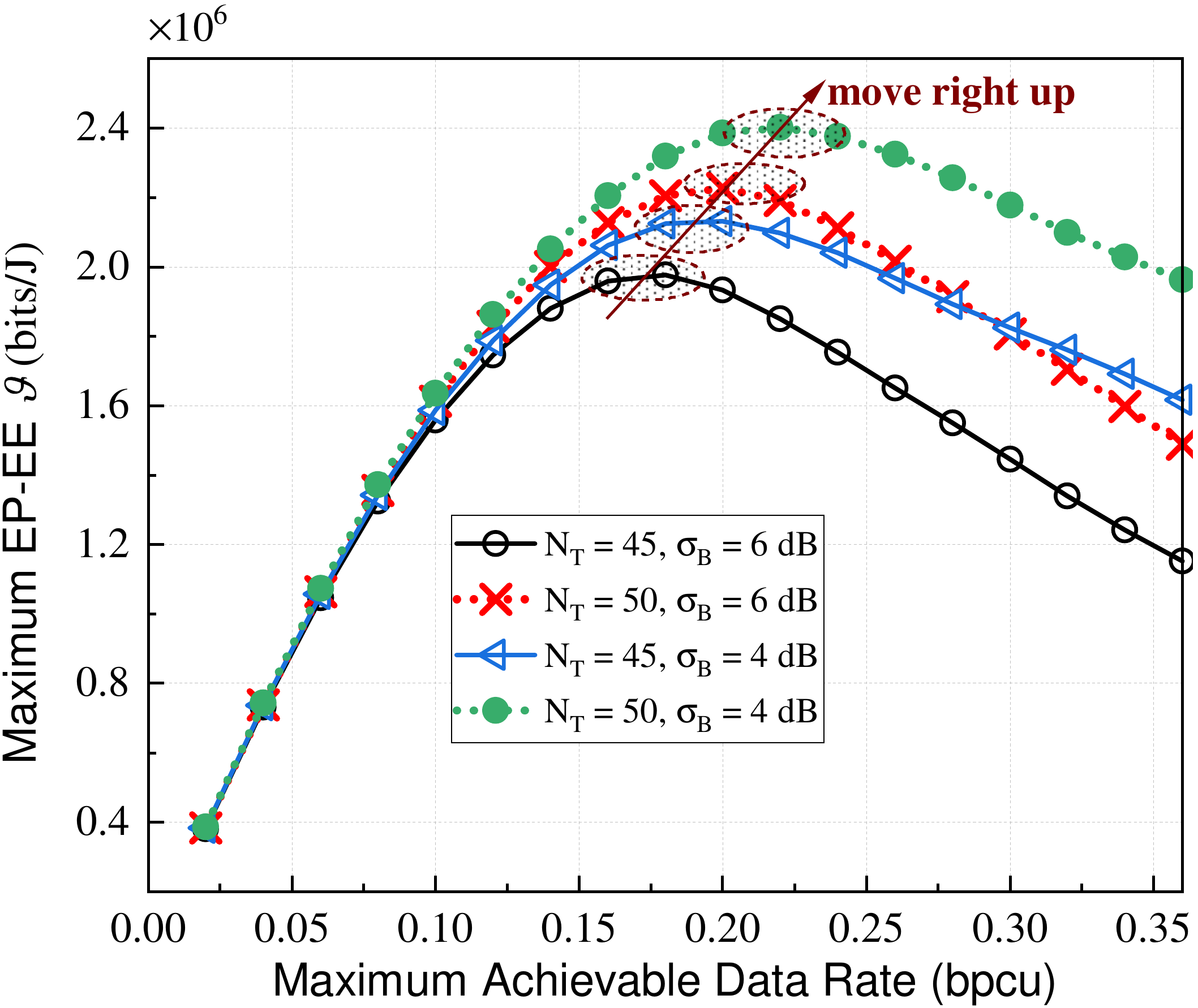}
  }
  \subfigure[]{
   \includegraphics[scale=0.23]{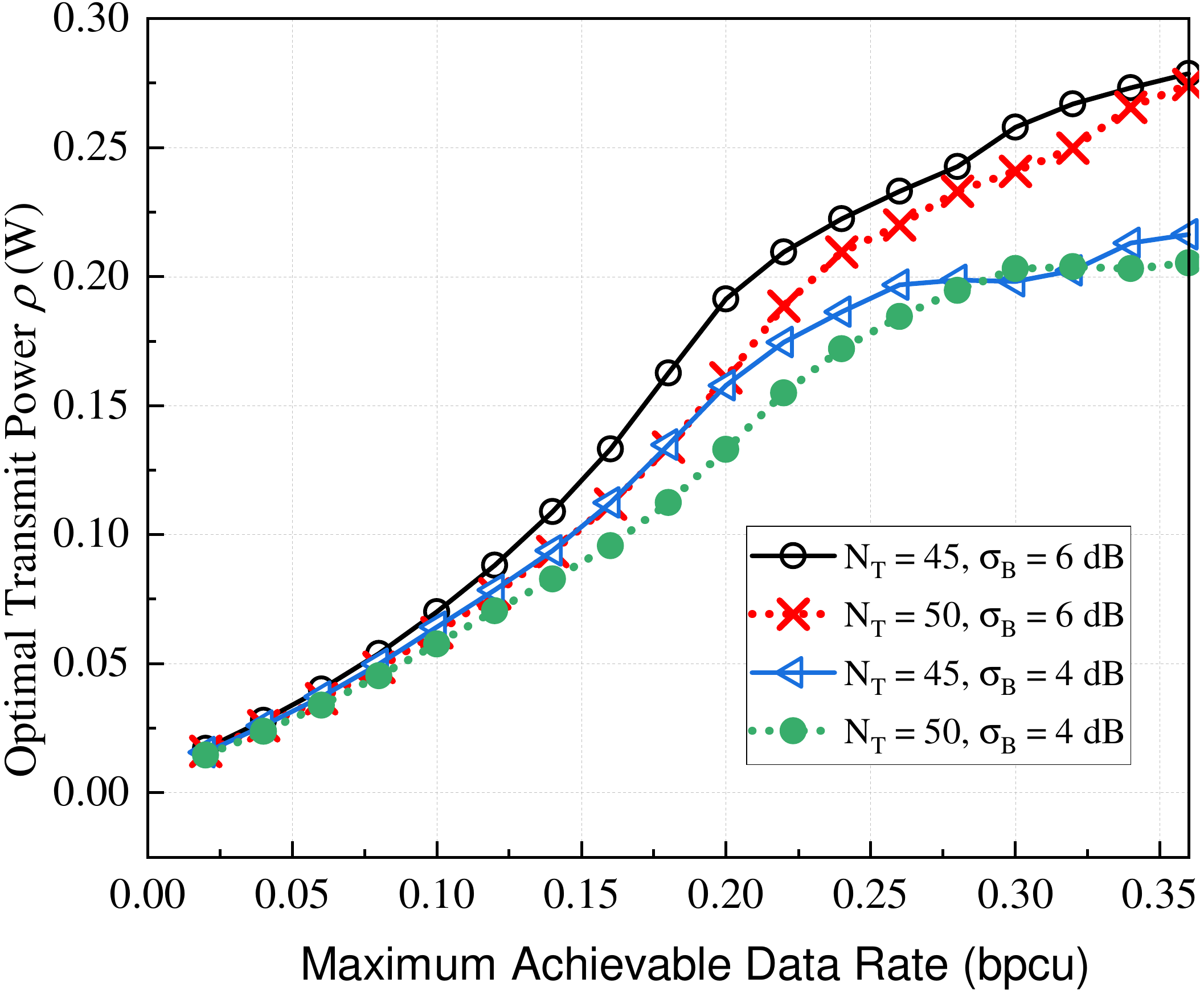}
  }
  \subfigure[]{
   \includegraphics[scale=0.23]{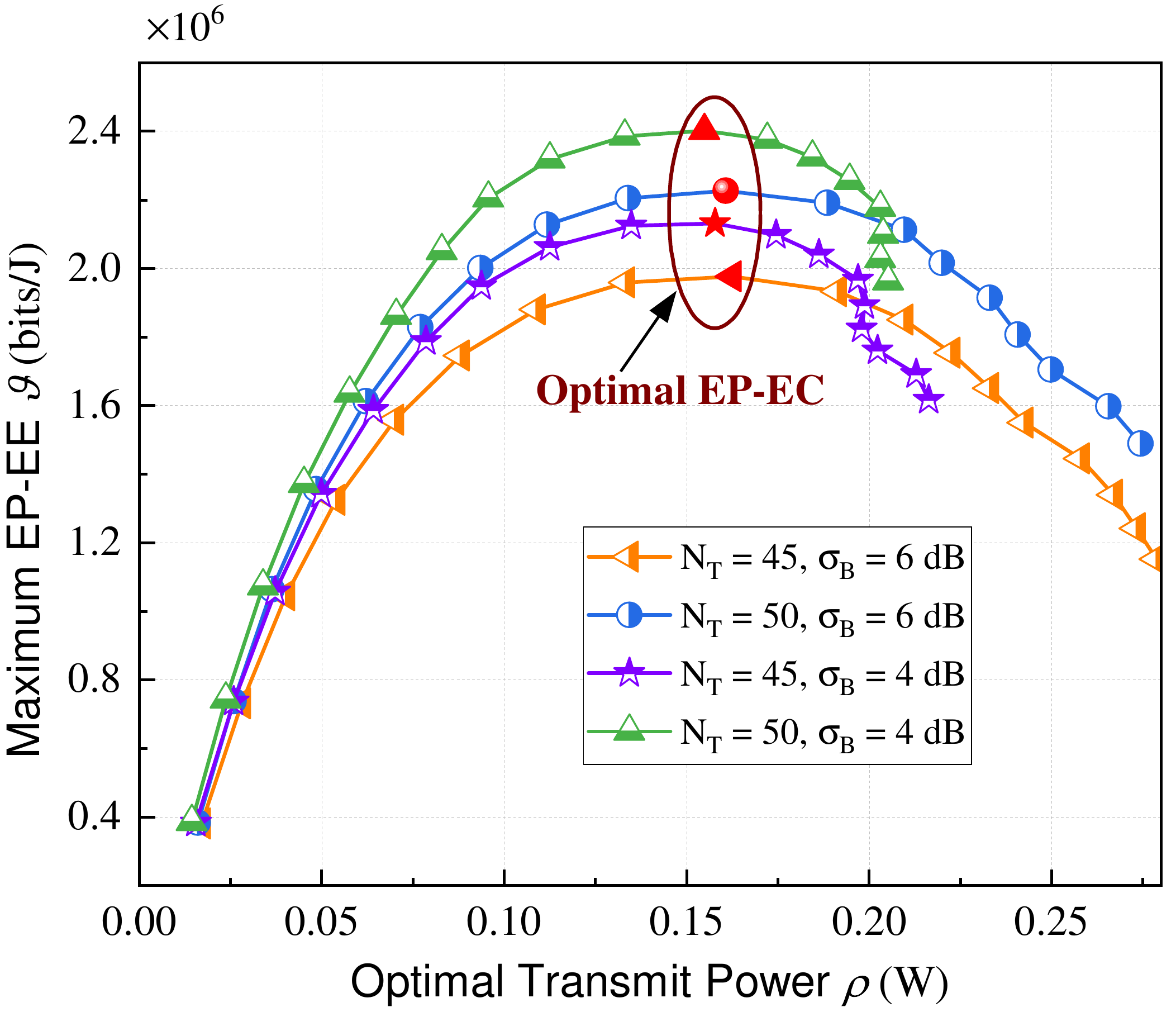}
  }
  \caption{(a) The maximum EP-EE with different maximum achievable data rate. (b) The optimal transmit power with different achievable data rate. (c) The maximum EP-EE versus the optimal transmit power. $B = 1$ MHz (i.e., $N_{CU} = 500$), $\theta_{m} = 0.2$.}
\end{figure*}

\begin{figure*}[htbp] 
\centering
  \subfigure[]{
    \includegraphics[scale=0.25]{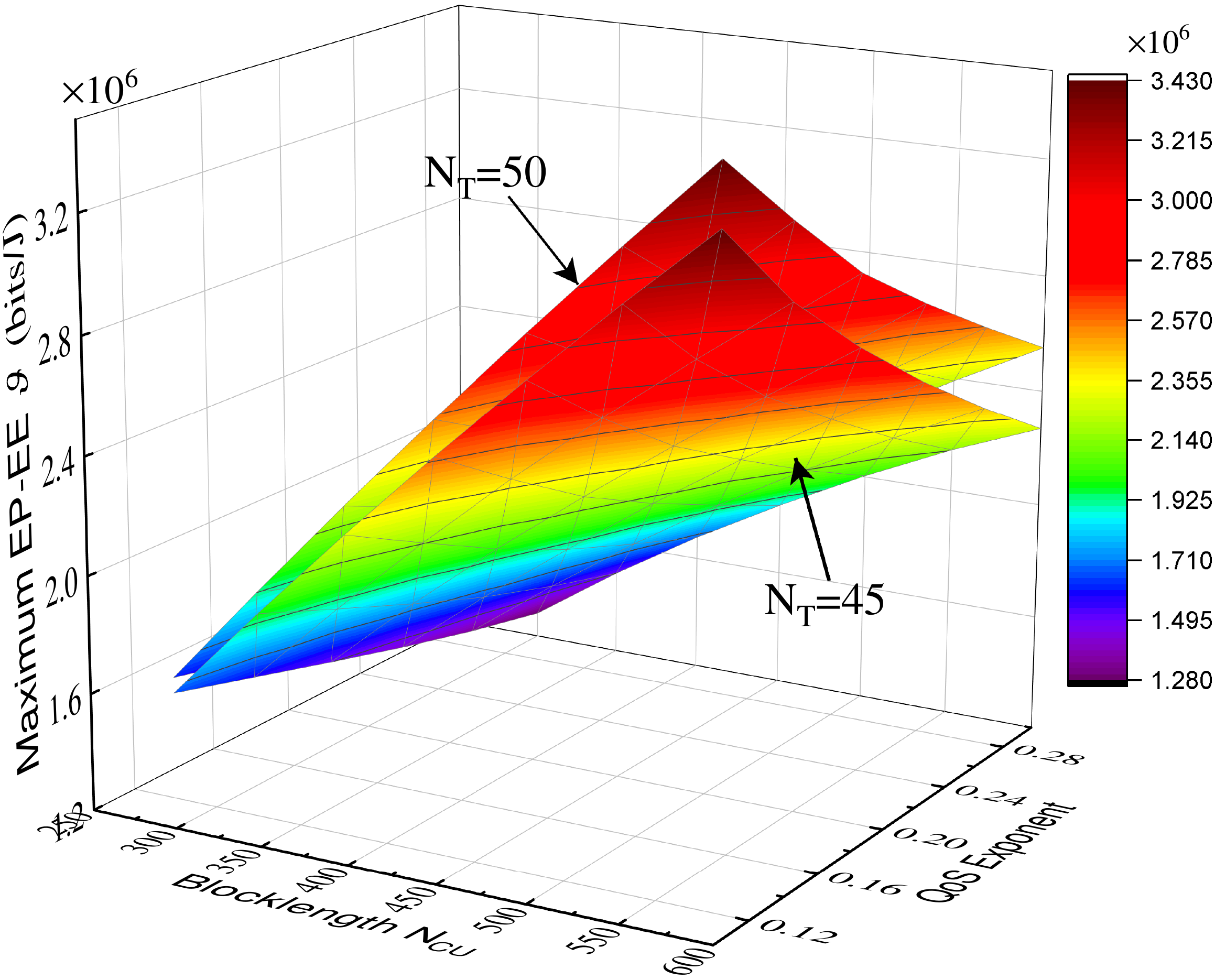}
  }
  \subfigure[]{
    \includegraphics[scale=0.25]{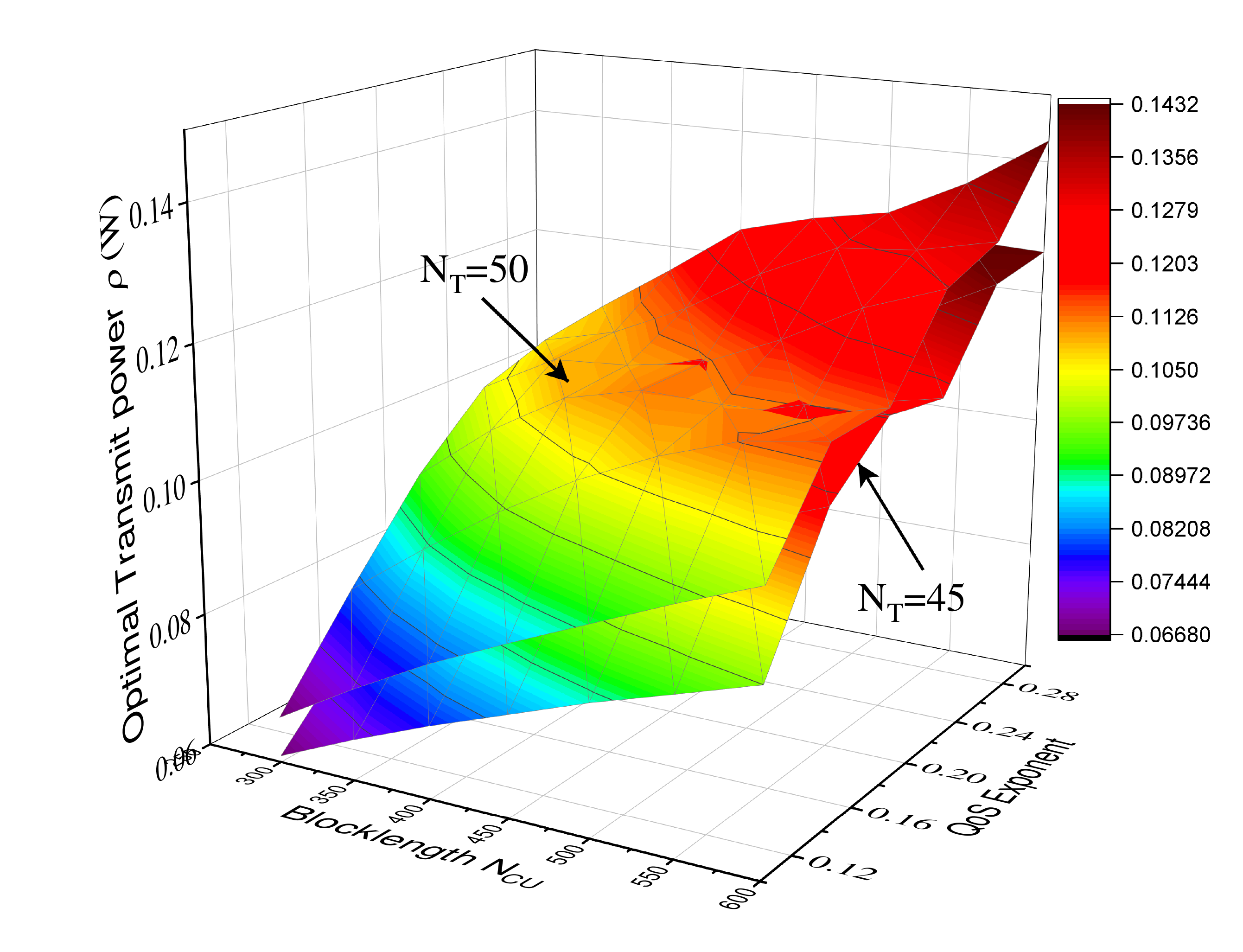}
  }
\caption{(a) The maximum EP-EE vs. blocklength and QoS exponents. (b) The optimal transmit power vs. blocklength and QoS exponents. $r_{m} = 0.2$ bpcu, $\sigma_{B} = 6$ dB.}
\label{fig:label}
\end{figure*}

\vspace{-0.4cm}
\section{Conclusion}
\par In this paper, we have revealed significant insights from the standpoint of SNC for statistical QoS provisioning analysis and performance optimization of xURLLC. In particular, we have developed xURLLC-enabled massive MU-MIMO networks, which can remarkably enhance reliability and comprehensively accommodate xURLLC features. We have then provided penetrative statistical QoS provisioning analysis for xURLLC by leveraging the promoted SNC theory. Additionally, two novel concepts, known as EP-EC and EP-EE, have been proposed to characterize the tail distribution and performance tradeoffs of xURLLC. Based on the proposed theoretical framework, UB-SDVP, EP-EC, and EP-EE optimization problems have been investigated and solved. Extensive simulations demonstrate that the proposed framework can considerably reduce computational complexity, while quantitatively providing various tradeoffs and optimization performance of xURLLC concerning UB-SDVP, EP, EP-EC, and EP-EE.

\par For future work, our proposed SNC-based theoretical framework will be extended to analyze the freshness (i.e., AoI) of xURLLC traffic, and the AoI-driven statistical QoS provisioning analysis and performance optimization for xURLLC will be considered. In addition, based on the lessons we learned in this research, an envisaged massive xURLLC deployment will be demonstrated with experiments by employing the high data rate millimeter wave technology and the latest cross-layer optimization strategies.
\vspace{-0.5em}
\footnotesize
\bibliographystyle{IEEEtran}
\bibliography{IEEEabrv,ref}

\end{document}